\theoremstyle{plain}
\newtheorem{lemma}{Lemma}[section]
\newtheorem{prop}[lemma]{Proposition}
\newtheorem{coro}[lemma]{Corollary}
\newtheorem{thm}[lemma]{Theorem}
\theoremstyle{definition}
\newtheorem{example}[lemma]{Example}
\newtheorem{definition}[lemma]{Definition}
\newtheorem{remark}[lemma]{Remark}
\newcommand{\ts}{\hspace{0.5pt}}
\newcommand{\N}{\mathbb{N}}
\newcommand{\R}{\mathbb{R}\ts}
\newcommand{\q}[1]{q_{ #1}}
\newcommand{\D}{D}
\newcommand{\F}{F}
\newcommand{\Lp}{L}
\newcommand{\e}{\mathrm{e}}
\newcommand{\al}{\alpha}
\newcommand{\de}{\delta}
\newcommand{\eps}{\varepsilon}
\newcommand{\te}{\theta}
\newcommand{\lm}{\lambda}
\newcommand{\si}{\sigma}
\newcommand{\ph}{\varphi}
\newcommand{\aV}[1]{\left\Vert #1\right\Vert}
\newcommand{\as}[1]{\langle #1\rangle}
\newcommand{\ov}[1]{\overline{#1}}
\newcommand{\ow}[1]{\widetilde{#1}}
\newcommand{\Hmm}[1]{\leavevmode{\marginpar{\tiny%
$\hbox to 0mm{\hspace*{-0.5mm}$\leftarrow$\hss}%
\vcenter{\vrule depth 0.1mm height 0.1mm width \the\marginparwidth}%
\hbox to 0mm{\hss$\rightarrow$\hspace*{-0.5mm}}$\\\relax\raggedright #1}}}
\newcommand{\IC}{\mathbb{C}}
\newcommand{\IR}{\mathbb{R}}
\newcommand{\ISS}{\mathscr{S}}
\newcommand{\IFF}{\mathscr{F}}
\newcommand{\IN}{\mathbb{N}}
\newcommand{\Id}{{\rm d}}
\newcommand{\f}{\frac}
\newcommand{\nn}{\nonumber}
\begin{document}

\title[A Feynman-Kac-It\^{o} Formula]{A Feynman-Kac-It\^{o} Formula for magnetic Schr\"odinger operators on graphs}

\author[B. G\"uneysu]{Batu G\"uneysu}
\address{Batu G\"uneysu, Institut f\"ur Mathematik, Humboldt-Universit\"at zu Berlin, 12489 Berlin, Germany} \email{gueneysu@math.hu-berlin.de}
\author[M. Keller]{Matthias Keller}
\address{Matthias Keller, Einstein Institute of Mathematics, The Hebrew University of Jerusalem,
Jerusalem 91904, Israel}\email{mkeller@ma.huji.ac.il}

\author[M. Schmidt]{Marcel Schmidt}
\address{Marcel Schmidt, Mathematisches Institut, Friedrich-Schiller-Universit\"at Jena,
07743 Jena, Germany}
\email{schmidt.marcel@uni-jena.de}

\begin{abstract}
In this paper we prove a Feynman-Kac-It\^{o} formula for magnetic Schr\"odinger operators on arbitrary weighted graphs. To do so, we have to provide a natural and general framework both on the operator theoretic and the probabilistic side of the equation.  On the operator side we identify a very general class of potentials  that allows the definition of magnetic Schr\"odinger operators. On the probabilistic side, we introduce an appropriate notion of stochastic line integrals with respect to magnetic potentials.  Apart from linking the world of discrete magnetic operators with the probabilistic world through the  Feynman-Kac-It\^{o} formula, the insights from this paper gained on both sides should be of an independent interest.
As applications of the Feynman-Kac-It\^{o} formula, we prove a Kato inequality, a Golden-Thompson inequality and an explicit representation of the quadratic form domains corresponding to a large class of potentials.
\end{abstract}
\date{\today} %
\maketitle

\section{Introduction}

The conceptual importance of the classical Feynman-Kac formula stems from the fact that it links the world of operator theory (or partial differential equations) with that of probability. In particular, the semigroup of a Schr\"odinger operator of the form $-\Delta+v$ on $L^2(\IR^n)$ is expressed in terms of an expectation value involving the Markov process of the free operator $-\Delta$,  which is nothing but the Euclidean Brownian motion in this case. If one perturbs $-\Delta+v$ by a magnetic field with potential $\theta$, one has to deal with the magnetic Schr\"odinger operator  $-\Delta_{\theta}+v$. In this case a very important extension of the Feynman-Kac formula is given by the Feynman-Kac-It\^{o} formula. This formula again expresses the semigroup corresponding to the latter operator through Euclidean Brownian motion, where now one has to take the (Stratonovic) stochastic line integral of $\theta$ along the Brownian motion into account \cite{Si05}. Such probabilistic representations have many important physical consequences through diamagnetism, e.g., one can easily deduce that switching on a magnetic field can only lead to an increase of the ground state energy of the systems.\medskip

Seeking for extensions of the above results to more general settings than the Euclidean $\IR^n$, one will realize that the \emph{Feynman-Kac} formula can be proven for locally compact regular Dirichlet spaces (see, e.g., \cite{demuth}), where one simply has to replace $-\Delta$ with the operator corresponding to the given Dirichlet form, and Brownian motion with the associated Markov process. However, it is not even clear how to formulate a Feynman-Kac-\emph{It\^{o}} formula in many situations. The reason for this is that a consistent theory of Schr\"odinger operators with local magnetic potentials in such a general setting as Dirichlet spaces is still missing. Although recently very promising progress into this direction has been made on the operator side \cite{CS,HRT,HT}, there still remains the issue of finding a reasonable way to define a proper notion of a stochastic line integral which extends the  $\IR^n$-theory in a consistent way.\medskip

The situation is fundamentally better for smooth Riemannian manifolds $M$. Here, magnetic potentials can be defined simply as real-valued $1$-forms. If $\theta$ is such a $1$-form, then $-\Delta_{\theta}+v$ can be defined invariantly in analogy to the Euclidean case (see for example \cite{bg,shub} for details). Assuming some local control on $v_-$ (typically $L^{1}_{loc}$) and $\theta$ (typically smooth or $L^{p}_{loc}$), and a certain global control on $v_-$, the operator $-\Delta_{\theta}+v$ will correspond to a well-defined self-adjoint semi-bounded operator on $L^2(M)$. One can then prove an analogue of the Feynman-Kac-It\^{o} formula in this setting (replacing the Euclidean with the underlying Riemannian Brownian motion) without any further assumptions on $M$. As the underlying manifold locally looks like the \emph{linear} space $\R^{n}$, one can define the line integral of $\theta$ along the Riemannian Brownian motion by combining the definition from the Euclidean case either with a patching procedure using charts \cite{IW}, or equivalently, by embedding $M$ into some $\R^{l}$ with an appropriate $l\ge n$, as in \cite{Em}. As a consequence of the Feynman-Kac-It\^o formula in this setting it becomes very easy to deduce several rigorous variants
of the domination \lq\lq{}$-\Delta_{\theta}+v\geq -\Delta+v$\lq\lq{}. Apart from physically relevant ones, these domination results also make it possible to transfer many important mathematical statements from zero magnetic potential to arbitrary magnetic potentials, such as essential self-adjointness results \cite{bg,Si85} or certain smoothing properties of the Schr\"odinger semigroups \cite{Bro,bg}.\medskip

Going back to the fundamental papers \cite{Ha,LL,Su}, there is also a basic theory of magnetic Schr\"odinger operators for discrete graphs. In the last years an extensive amount of research for these operators has been carried out into various directions. Let us only mention here that basic spectral properties and Kato's inequality have been proven in \cite{DM}, for a Hardy inequality see \cite{Gol}, for approximation results of spectral invariants see \cite{MSY, MY}, and for weak Bloch theory see \cite{HS}. Recently there has been a strong focus on the question of essential self-adjointness of magnetic Schr\"odinger operators  \cite{CdVT-HT,Gol,Mi,Mi2,MT,Sus}. \medskip

On discrete graphs the Markov processes corresponding to free Laplacians  are jump processes (which have very special path properties), and  magnetic potentials are typically defined as functions on the underlying set of edges. So, one might hope that it is possible to get a proper notion of line integrals in this setting, which produces a probabilistic representation of the magnetic Schr\"odinger semigroups. The main result of this paper, a Feynman-Kac-It\^{o} type formula for discrete graphs, precisely states that this is possible.\medskip

Unfortunately, so far all proposed settings for discrete magnetic Schr\"odinger operators are somewhat tailored to their specific applications and, thus, are often rather restrictive.  In particular, a general and systematic treatment of the question, when the operators can actually be defined as genuine self-adjoint operators, seems to be missing.

\medskip

The first question that arises is actually what a natural and sufficiently general framework might be in this context. We start with quadratic forms associated with graphs and then identify a class of potentials that is suitable to our cause. Having the goal of a Feynman-Kac-\emph{It\^o} formula in mind (where due to the presence of a magnetic potential one cannot expect to conclude exclusively with monotone convergence arguments), a natural assumption on the potential is that the corresponding non-magnetic quadratic form is semi-bounded from below on the functions with compact support. Remarkably, it turns out that the latter assumption is in fact all we need to get a closable semi-bounded form, and thus a self-adjoint semi-bounded operator, in the magnetic case. This is the content of Theorem~\ref{t:closable}. To the best of our knowledge, this result is even new in the non-magnetic setting { since it goes beyond classical perturbation theory in the spirit of Kato.

\medskip

In addition, we} give criteria for the above mentioned self-adjoint semi-bounded operator to be unique in an appropriate sense, which in turn also provides criteria  for a certain uniqueness of the Markov processes.\medskip

Having established the operator theoretic side, we then give the definition of the stochastic line integral in terms of a sum along the path of the process. We establish our main result, the Feynman-Kac-It\^o formula in Theorem~\ref{main}. Let us stress that we do not have to make any restrictions on the underlying geometry such as local finiteness of the graph. Furthermore, we do not require anything on the positive parts of the potentials, nor on the magnetic potentials. The only assumption we make on the negative part of the potential is the already mentioned, namely, that the corresponding non-magnetic form is semi-bounded from below on the functions with compact support.  Compared to the manifold case this  assumption is significantly weaker, obviously due to the discrete structure of our setting.\medskip

Finally, we remark that manifolds and graphs essentially provide the most approachable and prominent non-trivial examples of local and non-local Dirichlet forms. So, having established a Feynman-Kac-It\^o formula in both of these worlds appears to be a promising step towards a unified theory for all regular Dirichlet forms. Here, as we have already mentioned, the results of \cite{CS,HRT,HT} should be very useful.\medskip

The paper is structured as follows: In Section~\ref{sett}, we introduce and establish all necessary operator theoretic results. In Section~\ref{proc}, we introduce the necessary probabilistic concepts (including the definition of the line integral in this setting). Section~\ref{haup} is completely devoted to the presentation and the proof of our main result, the Feynman-Kac-It\^{o} formula, Theorem~\ref{main}, and finally, in Section~\ref{anwen}, we have collected several applications such as semigroup formulas, Kato's inequality, a Golden-Thompson inequality and a representation of the form domain for suitable potentials.\vspace{2mm}

\emph{Note added:} Let us mention the follow up papers by the first named author \cite{semic,GM} which treat Feynman-Kac formulae and semiclassical limits for covariant Schr\"odinger semigroups on Hermitian vector bundles over infinite weighted graphs. These papers are heavily building on the results presented here.

\section{Magnetic Schr\"odinger operators}\label{sett}

In this section we introduce the set up in which we are going to prove the  Feynman-Kac-It\^{o} formula. While it is clear from earlier work how a magnetic Schr\"odinger operator should act \cite{Ha,LL,Mi,Su}, it is a non-trivial problem to determine when a self-adjoint semi-bounded operator can be defined. This starts with the problem that for general weighted graphs the formal operator does not necessarily map the compactly supported functions into $\ell^{2}$.
Although the theory of quadratic forms provides a helpful tool, it raises the problem of determining whether the form, defined a priori on the compactly supported functions, is closable and semi-bounded from below. This, however, is a rather subtle issue which in general does not allow for a complete and applicable characterization. Here, we provide a rather general framework in which we give a sufficient condition (cf. Theorem~\ref{t:closable} below) for the general magnetic case, which, remarkably, even turns out to be necessary in the non-magnetic case. Interestingly, we will use a Feynman-Kac-It\^o formula for potentials that are bounded below in order to derive the latter result.

After briefly reviewing the basic set up of weighted graphs, we introduce the matgnetic forms and the corresponding formal operators. Then, we give a sufficient criterion for closability and semi-boundedness of the forms. At the end, we discuss uniqueness of semi-bounded self-adjoint extensions/restrictions and present a result on semigroup convergence.

\subsection{Weighted graphs}\label{s:graphs} We essentially follow the setting of \cite{KL}. Let $(X,b)$ be a graph, that is, $X$ is a countable  set, {equipped with the discrete topology,} and
\[
b:X\times X \longrightarrow  [0,\infty)
\]
is a symmetric function with the properties $b(x,x)=0$ and
\begin{align*}
\>\sum_{y \in X}b(x,y) < \infty \>\text { for all }x \in X.
\end{align*}

Then, the elements of $X$ are called \emph{vertices} and one says that $x,y \in X$ are \emph{neighbors} or \emph{connected by an edge}, if $b(x,y) >0$, which is written as $x \sim y$.
The graph $X$ is called \emph{locally finite}, if every vertex has only a finite number of neighbors. Furthermore, a \emph{path} on the graph $X$ is a (finite or infinite) sequence of pairwise distinct vertices $(x_{j})$ such that $x_j \sim x_{j+1}$ for all $j$, and $X$ is called \emph{connected}, if for any $x,y\in X$ there is a path $(x_j)^n_{j=0}$ such that $x_0=x$ and $x_n=y$.
\medskip

{For simplicity and without loss of generality,
we will \emph{assume} throughout the paper that the graph $X$ is \emph{connected}.}
\medskip

When $X$  is equipped with the discrete topology, any function $m: X \to (0,\infty)$ gives rise to a Radon measure of full support on $X$ by setting $m(A) := \sum_{x \in A}m(x)$. Then, the triple $(X,b,m)$ is called a \emph{weighted graph}.
For $x\in X$, we denote the \emph{weighted vertex degree} by
$$\text{deg}_m(x)= \frac{1}{m(x)}\sum_{y \in X} b(x,y).$$
{Often, we  use this notation for the constant measure $m\equiv1$ in which case we have $\deg_{1}(x)=\sum_{y\in X}b(x,y)$.}
This notion is motivated by the following observation: Whenever $m\equiv 1$ and $b:X\times X\to \{0,1\}$, the number $\text{deg}_m(x)=\text{deg}_1(x)$ is equal to the number of edges emerging from a vertex $x$.

\subsection{Quadratic forms}

Let ${C}(X)$ be the linear space of all complex-valued  functions on $X$ and ${C}_c(X)$ its subspace of functions with finite support. We denote the standard scalar product and norm on $\ell^2(X,m)$ with $\as{\bullet,\bullet}$ and $\aV{\bullet}$, respectively, that is,
\begin{align*}
\as{f,g}=\sum_{x\in X}f(x)\overline{g(x)}m(x), \qquad\aV{f}=\as{f,f}^{\frac{1}{2}}.
\end{align*}
Clearly, ${C}_c(X)$ is dense in $\ell^2(X,m)$. Let $\de_{x}$ be the function that takes the value $1/m(x)$ at $x$ and $0$ otherwise. By the discreteness of the underlying data, any linear operator $A$ in $\ell^2(X,m)$ with $C_c(X)\subseteq D(A)$ has a unique integral kernel in the sense that the function
\[
A(\bullet,\bullet):X\times X\longrightarrow \IC, \qquad A(x,y)=\f{1}{m(x)}\left\langle A\delta_x,\delta_y\right\rangle
\]
is the unique one such that
\[
Af(x)=\sum_{y\in X}A(y,x)f(y) m(y)\>\>\text{ for all $f\in\D(A)$, $x\in X$.}
\]

Following \cite{CdVT-HT}, we understand by a \emph{magnetic potential} on the set $X$ a function $$\theta: X \times X \to [-\pi,\pi]\quad\mbox{such that}\quad\theta(x,y) = - \theta(y,x),\;x,y\in X.$$ A function $v: X \to \R$ will be simply called a \emph{potential}.
\medskip

{Throughout the paper, let $\theta$ be an\emph{ arbitrary magnetic potential},  and \emph{if not further specified}, then $v$ denotes an \emph{arbitrary potential}.}
\medskip

We define a symmetric sesqui-linear form on $\ell^2(X,m)$ with domain of definition ${C}_c(X)$ by
\begin{align*}
Q_{v,\theta}^{(c)}(f,g):= & \frac{1}{2}\sum_{x,y \in X} b(x,y)\big(f(x)-\e^{\mathrm{i} \theta(x,y) }f(y)\big)\overline{\big(g(x)-\e^{\mathrm{i} \theta(x,y)} g(y)\big)}+\sum_{x\in X}v(x)f(x)\overline{g(x)}m(x)\nn.
\end{align*}
With
$$\ow{{\F}}(X):= \left\{f\in {C}(X)\left|\sum_{y\in X} b(x,y)|f(y)| < \infty \text{ for all }x\in X\right\}\right.,$$
we define the formal difference operator $\ow{L}_{v,\theta}:\ow{{\F}}(X) \to {C}(X)$ by
$$\ow{L}_{v,\theta}f(x) = \frac{1}{m(x)}\sum_{y \in X} b(x,y)\big (f(x)-\e^{\mathrm{i} \theta(x,y) }f(y)\big)+v(x)f(x).$$
Note that if $X$ is locally finite, then one has $\ow{{\F}}(X)={C}(X)$. However, in general, $\ow{{\F}}(X)$ does not  include $\ell^{2}(X,m)$.

The form $Q_{v,\theta}^{(c)}$ and the operator $\ow L_{v,\theta}$ are related by  Green's formula. {We give two formulations: One for a very large class of functions which does not allow for an expression in terms of the introduced forms and scalar products but only explicitly in terms of sums. The second is a concise formulation for compactly supported functions.}

\begin{lemma} \label{Greens formula}\emph{(Green's formula)} For all $f\in \ow{{\F}}(X)$, $g \in {C}_c(X)$, one has
\begin{align*}
\sum_{x\in X}&\ow{L}_{v,\theta}f(x)\overline{g(x)}m(x) = \sum_{x\in X} f(x)\overline{\ow{L}_{v,\theta}g(x)}m(x)\\
&=\frac{1}{2}\sum_{x,y \in X} b(x,y) \Big(f(x)-\e^{\mathrm{i} \theta(x,y) }f(y)\Big)\overline{\Big(g(x)-\e^{\mathrm{i} \theta(x,y) }g(y)\Big)} + \sum_{x\in X}v(x) f(x)\overline{g(x)} m(x).\nn
\end{align*}
Moreover, if $\ow L_{v,\theta}[{C}_c(X)]\subseteq \ell^{2}(X,m)$, then for all $f,g\in C_{c}(X)$ one has
\begin{align*}
    Q^{(c)}_{v,\theta}(f,g)=\langle \ow L_{v,\theta} f, g\rangle=\langle f,\ow L_{v,\theta} g\rangle.
\end{align*}
\end{lemma}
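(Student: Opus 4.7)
The plan is to prove the two chains of equalities by reducing everything to an absolutely convergent double sum over $X \times X$ and then exploiting the symmetry of $b$ and antisymmetry of $\theta$ to rewrite it in the symmetrized ``quadratic form'' shape.

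First I would fix $g \in \C_c(X)$ and $f \in \ow{\F}(X)$ and observe that the outer sum $\sum_{x} \ow{L}_{v,\theta}f(x)\ov{g(x)} m(x)$ is in fact a \emph{finite} sum, ranging only over $x \in \supp(g)$. For each such $x$, the assumption $f \in \ow{\F}(X)$ gives $\sum_y b(x,y)|f(y)| < \infty$, and since also $\sum_y b(x,y) < \infty$, one has
\begin{align*}
\sum_{x,y} b(x,y)\bigl|f(x) - \e^{\mathrm{i}\theta(x,y)} f(y)\bigr|\,|g(x)| \le \sum_{x \in \supp(g)} |g(x)|\Bigl(|f(x)|\sum_y b(x,y) + \sum_y b(x,y)|f(y)|\Bigr) < \infty.
\end{align*}
This absolute convergence is the key tool; it justifies all rearrangements below and, combined with the potential term (which is a finite sum), allows us to treat each expression as a genuine sum over $X \times X$.

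Next I would carry out the symmetrization. Denote
\begin{align*}
S := \sum_{x,y \in X} b(x,y)\bigl(f(x) - \e^{\mathrm{i}\theta(x,y)} f(y)\bigr)\ov{g(x)}.
\end{align*}
Relabeling $x \leftrightarrow y$ and using $b(y,x) = b(x,y)$ and $\theta(y,x) = -\theta(x,y)$ yields
\begin{align*}
S = \sum_{x,y \in X} b(x,y)\bigl(f(y) - \e^{-\mathrm{i}\theta(x,y)} f(x)\bigr)\ov{g(y)}.
\end{align*}
Adding the two expressions, factoring, and noting that $\e^{-\mathrm{i}\theta(x,y)}\ov{g(y)} = \ov{\e^{\mathrm{i}\theta(x,y)} g(y)}$, a direct expansion shows
\begin{align*}
2S = \sum_{x,y \in X} b(x,y)\bigl(f(x) - \e^{\mathrm{i}\theta(x,y)} f(y)\bigr)\ov{\bigl(g(x) - \e^{\mathrm{i}\theta(x,y)} g(y)\bigr)}.
\end{align*}
Dividing by $2$ and adding the (finite) potential term $\sum_x v(x) f(x)\ov{g(x)} m(x)$ gives the second equality in the displayed formula. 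Running the analogous argument with the roles of $f$ and $g$ interchanged (which is legitimate because $g \in \C_c(X) \subset \ow{\F}(X)$ and the resulting double sum is again absolutely convergent by the same estimate with the roles reversed) yields the first equality.

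For the last assertion, given $\ow{L}_{v,\theta}[\C_c(X)] \subseteq \ell^2(X,m)$, both $\ow{L}_{v,\theta}f$ and $\ow{L}_{v,\theta}g$ lie in $\ell^2(X,m)$, so the outer sums $\sum_x \ow{L}_{v,\theta}f(x) \ov{g(x)} m(x)$ and $\sum_x f(x)\ov{\ow{L}_{v,\theta}g(x)} m(x)$ are precisely $\langle \ow{L}_{v,\theta}f, g\rangle$ and $\langle f, \ow{L}_{v,\theta}g\rangle$. The middle expression in the main identity is exactly $Q^{(c)}_{v,\theta}(f,g)$ by definition, so the claim follows. The only subtlety I anticipate is being careful about absolute convergence at the symmetrization step --- without it, the formal $x \leftrightarrow y$ relabeling is not allowed --- but the finiteness of $\supp(g)$ together with the defining property of $\ow{\F}(X)$ takes care of this cleanly.
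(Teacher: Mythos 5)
Your proof is correct and it follows exactly the approach the paper has in mind: the paper's own proof is just a one-line remark that ``the statements follow from a direct computation, where absolute convergence of the sums is guaranteed by $g\in C_c(X)$ and $\sum_{y}b(x,y)|f(y)|<\infty$,'' and you have simply spelled out that direct computation (establishing absolute convergence, then symmetrizing via the $x\leftrightarrow y$ relabeling using the symmetry of $b$ and the antisymmetry of $\theta$). One small imprecision: when you say the first equality follows by ``running the analogous argument with $f$ and $g$ interchanged'' and cite ``the same estimate with the roles reversed,'' this is a bit loose, since $f$ is not compactly supported; the absolute convergence of $\sum_{x,y}b(x,y)\bigl|g(x)-\e^{\mathrm{i}\theta(x,y)}g(y)\bigr|\,|f(x)|$ still holds, but the second piece of the estimate has to be bounded as $\sum_{y\in\supp g}|g(y)|\sum_x b(y,x)|f(x)|$, again using $g\in C_c(X)$ together with $f\in\ow{\F}(X)$, rather than literally the first estimate with roles swapped; and one should also note that the symmetrized double sum is Hermitian symmetric under $f\leftrightarrow g$ together with conjugation, which is what turns the ``interchanged'' identity into the first equality as stated.
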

\begin{proof} The statements follow from a direct computation, where absolute convergence of the sums is guaranteed by $g\in {C}_c(X)$ and $\sum_{y}b(x,y)|f(y)|<\infty$ as $f\in \ow{{\F}}(X)$  (cf. \cite[Lemma~4.7]{HK} for more details).
\end{proof}

If $Q^{(c)}_{v,\theta}$ is semi-bounded from below and closable, we denote its  closure by $Q_{v,\theta}$ with domain $D(Q_{v,\theta})$ and the corresponding self-adjoint operator by $L_{v,\theta}$ with domain $D(L_{v,\theta})$, see  \cite[Theorem~VIII.15]{RS}.

{ When  $Q_{v,\theta}^{(c)}$ is semi-bounded from below and closable, it is known in many cases that the domain of $Q_{v,\theta}$ is contained in $\ow{F}(X)$. In this case an important consequence of Green\rq{}s formula is that the corresponding self-adjoint operator $L_{v,\theta}$ is in fact a restriction of $\ow L_{v,\theta}$, see Theorem~\ref{operator} below.}

{For $v\ge0$ and $\theta\equiv0$, the form $Q_{v,0}^{(c)}$ is always closable on $\ell^{2}(X,m)$ and its closure is a regular Dirichlet form as $X$ is equipped with the discrete topology. Indeed, all regular Dirichlet forms on discrete measure spaces $(X,m)$ are parameterized by graphs $b$ and potentials $v\ge0$. This situation was studied in \cite{KL} to which we refer the interested reader  for details. In what follows, we use the conventions $Q:=Q_{0,0}$ and $L:=L_{0,0}$.}

{\begin{remark} Note that suitable extensions of $Q$ to the space of functions of finite energy $$\{f\in C(X) \mid \sum_{x,y\in X}b(x,y)|f(x)-f(y)|^{2}<\infty\}$$ are resistance forms in the sense of Kigami \cite{Kig01}. Indeed, the only assumption in \cite[Definition~2.3.1]{Kig01} which is non-trivial to check is (RF04). This however follows by \cite[Lemma~3.4]{GHKLW}. On the other hand, a magnetic form $Q^{(c)}_{0,\theta}$, $\theta\neq0$, can not be extended to a resistance form since it violates the cut-off property (RF05).
\end{remark}}


\subsection{Potentials}
{We are interested in classes of potentials $v$ such that  the forms $Q_{v,\theta}^{(c)}$ are semi-bounded from below and closable.  For this some restrictions on the potentials are needed.

In the sequel, whenever dealing with a sesqui-linear form $s$, we denote  its associated quadratic form by the same letter, i.e., $s(f):=s(f,f)$ for $f$ in the domain of $s$. Moreover, from now on the term ''semi-bounded`` will always mean ''semi-bounded from below``.}

For a function $w:X\to\IR$, we will write $w_{\pm}=(\pm w)\vee0$ such that $w=w_{+}-w_{-}$.

Let $\q{v}$ be the symmetric sesqui-linear form given by $v$, that is,
\begin{align*}
\D(\q{v}):=\ell^{2}(X,|v|m){\cap \ell^{2}(X,m)},\quad\q{v}(f,g)&:=\sum_{x\in X}v(x)f(x)\overline{g(x)}m(x).
\end{align*}
We consider the following  classes of potentials
\begin{align*}
    \mathcal{A}_{\theta}:=\left.\Big\{w:X\to\R \right| \>
    &\mbox{There is $C\geq 0$ such that } \\
     &\q{w_{-}}(f)\leq Q_{w_{+},\theta}^{(c)}(f)+C\|f\|^2 \mbox{ for all $f\in C_c(X)$}\Big\}
\end{align*}
and
\begin{align*}
   \mathcal{B}_{\theta}:=\left.\Big\{w:X\to\R \right| \>
&\mbox{There are $\eps>0$ and $C\geq 0$ such that}\\
    & \q{w_{-}}(f)\leq (1-\eps) Q_{w_{+},\theta}^{(c)}(f)+C\|f\|^2\mbox{ for all $f\in C_c(X)$}\Big\}.
\end{align*}
{
First, we  observe the obvious inclusions
\begin{align*}
    \mathcal{B}_{\theta}\subseteq \mathcal{A}_{\theta}.
\end{align*}
The  potentials $v$ in $\mathcal{B}_{\theta}$ give rise to forms $q_{v_{-}}$ which are called \emph{infinitesimally form bounded} with respect to $Q_{v_{+},\theta}^{(c)}$ in the literature. For this class one can apply classical perturbation theory in the spirit of Kato, see  Proposition~\ref{p:closable}.

The importance of the larger class $\mathcal{A}_{\theta}$ stems from the following elementary observation.

\begin{lemma}\label{l:potentials}
The form $Q^{(c)}_{v,\theta}$ being semi-bounded  is equivalent to $v\in\mathcal{A}_{\theta}$. Moreover,
 \begin{align*}
    \mathcal{A}_{0}\subseteq \mathcal{A}_{\theta}\quad\mbox{and}\quad \mathcal{B}_{0}\subseteq \mathcal{B}_{\theta}.
 \end{align*}
 In particular,  for any $v\in \mathcal{A}_{0}$ the form $Q^{(c)}_{v,\theta}$ is semi-bounded.
\end{lemma}
\begin{proof} The first statement is straightforward from the definition. For the second statement note that for $f \in C_c(X)$, we obviously have
$ q_{v_-}(f)  =  q_{v_-}(|f|) $  and  $ Q_{{v_+},0}^{(c)}(|f|) \leq Q_{{v_+},\theta}^{(c)}(f)$.
The ''in particular`` part is clear.
\end{proof}

In the remark below we give a preview of what we will prove for the potentials in each of these classes. Let us note that these considerations go beyond standard perturbation theory.

\begin{remark}In the sequel we will encounter the following configurations of assumptions:
\begin{itemize}
  \item [(A)] $v\in \mathcal{A}_{0}$.
  \item [(B)] $v\in \mathcal{A}_{\theta}$ and $b$ locally finite.
  \item [(C)] $v\in \mathcal{B_{\theta}}$.
\end{itemize}
In many cases in (B) it even suffices to assume $\ow L_{v,\theta} [C_{c}(X)]\in\ell^{2}(X,m)$ which is implied by local finiteness, see Lemma~\ref{finiteness assumption}.

By Lemma \ref{l:potentials} we know that even for the largest class $\mathcal{A}_{\theta}$ the forms $Q_{v,\theta}^{(c)}$ are semi-bounded. So, a natural starting point is closability of the form $Q_{v,\theta}^{(c)}$ in $\ell^{2}(X,m)$. This is implied by any of the assumptions above, i.e.,
\begin{align*}
    \mbox{(A) or (B) or (C)}\;\Longrightarrow \;\mbox{closability of $Q_{v,\theta}^{(c)}$}.\quad (\mbox{Proposition~\ref{p:closable}, Theorem~\ref{t:closable}})
\end{align*}
Having a closable form, we can consider the closure $Q_{v,\theta}$ which comes with an associated positive self-adjoint operator $L_{v,\theta}$ by general theory. For various considerations it is important to know the action of the operator $L_{v,\theta}$. By Green's formula we know that $L_{v,\theta}$ is a restriction of $\ow L_{v,\theta}$ on $C_{c}(X)$ (whenever $C_{c}(X)$ is included in $D(L)$), so, it would be desirable to know whether $L_{v,\theta}$ is a restriction of $\ow L_{v,\theta}$ on $D(L)$. This is indeed guaranteed under the assumptions (B) and (C), i.e.,
\begin{align*}
    \mbox{(B) or (C)}\;\Longrightarrow \;\mbox{$L_{v,\theta}$ is a restriction of $\ow L_{v,\theta}$ on $D(L)$}.\quad (\mbox{Theorem~\ref{operator}})
\end{align*}
Furthermore, under additional assumptions we can show that $L_{v,\theta}$ is the unique self-adjoint restriction of $\ow L_{v,\theta}$ on $\ell^{2}(X,m)$. This is a slight generalization of the concept of essential self-adjointness. In Section~\ref{s:uniqueness} we proof such results under the assumptions (B) and (C)
\begin{align*}
    \mbox{(B) or (C)}\;\Longrightarrow \;\mbox{Uniqueness results for $L_{v,\theta}$}.\quad (\mbox{Theorem~\ref{t:essSA1} and Theorem~\ref{t:essSA2}})
\end{align*}
The major result of this paper is  a Feynman-Kac-It\^o formula. Here, assumption (A) suffices, as it guarantees closability of the forms $Q_{v,\theta}^{(c)}$ and $Q_{v,0}^{(c)}$,
\begin{align*}
    \mbox{(A)}\;\Longrightarrow \;\mbox{Feynman-Kac-It\^o formula for $\mathrm{e}^{-tL_{v,\theta}}$}\quad (\mbox{Theorem~\ref{main}})
\end{align*}
It is remarkable that we do not need any explicit knowledge of the operator to prove this result. In particular assumptions (B) or (C) which guarantee such knowledge do not enter.

Finally, let us mention a situation under which all results of the paper hold:
\begin{itemize}
  \item [(D)] $v\in \mathcal{B}_{0}$.
\end{itemize}
\end{remark}

}

{
Let us discuss some examples for these classes of potentials. We start by an
important subclass of  $\mathcal{B}_{0}$ -- the Kato class. Then, we give examples on the threshold of $\mathcal{B}_{0}$ and $\mathcal{A}_{0}$. Finally, we refer to the general framework of admissible potentials, where $\mathcal{A}_{0}$ can already seen to be featured.

\begin{example}[Kato class]\label{kkato} We recall that a function $w:X\to \IR$ is in the \emph{Kato class} $\mathcal{K}$ of the regular Dirichlet form $Q=Q_{0,0}$, if and only if
\begin{align*}
\lim_{t\to 0+}\sup_{x\in X}\int^t_0\sum_{y\in X} \mathrm{e}^{-s L}(x,y) |w(y)| m(y)\Id s =0,
\end{align*}
where $(x,y)\mapsto e^{-tL}(x,y)$ is the kernel of the semigroup of the operator $L=L_{0,0}$ associated to $Q$.
By combining \cite[Lemma~3.1]{KT} with \cite[Theorem~3.1]{peter}, we immediately get $\mathcal{K}\subseteq \mathcal{B}_{0}$. So, one has
$$
\ell^{p}(X)\subseteq \mathcal{K}\subseteq \mathcal{B}_{0}\text{ for all $p\in [1,\infty]$},
$$
where of course $\ell^p(X)=\ell^p(X,m\equiv 1)\subseteq \ell^{\infty}(X)$. This follows from the uniform estimates
$$
\mathrm{e}^{-sL}(x,y)m(y)\leq\sum_{z\in X}\mathrm{e}^{-sL}(x,z)m(z)\leq 1,\quad\text{ $s\geq 0$, $x,y\in X$.}
$$
Here, the second inequality follows as $Q$ is a Dirichlet form. We would like to stress the fact that the validity of the inclusion $\ell^p(X)\subseteq \mathcal{K}$ \emph{without any further assumptions on $Q$} is a  special feature of discrete spaces, in the sense that on Riemannian manifolds one needs considerable curvature assumptions to produce $L^p$-type subspaces of the corresponding Kato class for $p\ne \infty$ (cf. \cite{KT2}).
\end{example}
}
{
Let us come to examples at the threshold of $\mathcal{A}_{0}$ and $\mathcal{B}_{0}$.
\begin{example}\label{ex:deg}
Let $(X,b)$ be a graph and $m$ a measure. Recall that Cheeger's constant is given by
\begin{align*}
    \al :=\inf_{W\subseteq X\mbox{\scriptsize{finite}}}\frac{b(\partial W)}{\deg_{1}(W)},
\end{align*}
where $\partial W= W\times (X\setminus W)$ and $\deg_{1}(x)=\sum_{y\in X}b(x,y)$. Then, for $\ph\in C_{c}(X)$ one has the following inequality, \cite{Gol,KL2},
\begin{align*}
    (1-\sqrt{1-\al^{2}})q_{\deg_{m}}(\ph)\leq Q(\ph).
\end{align*}
In the case $\al>0$, consider
\begin{align*}
    v_{\eps} :=-\frac{(1-\eps)}{(1-\sqrt{1-\al^{2}})}\deg_{m}, \quad \eps\ge0.
\end{align*}
Then, $v_{\eps}\in \mathcal{B}_{0}\subseteq \mathcal{A}_{0}$
for $\eps>0$ and, for $\eps=0$, we have
$v_{0}\in \mathcal{A}_{0}$. \\
In order to present an example in $\mathcal{A}_{0}\setminus \mathcal{B}_{0}$ consider a binary tree with standard weights, i.e., $b(x,y)\in\{0,1\}$. In this case, the inequality above is sharp independent of the choice of the measure $m$. Moreover, for a probability measure $m$ the form $q_{v_{0}}$ for the potential $v_{0}$ is unbounded on $\ell^{2}(X,m)$. Thus, $v_{0}\in \mathcal{A}_{0}\setminus \mathcal{B}_{0}$.
\end{example}

Finally, we address a rather abstract class of potentials which includes $\mathcal{A}_{0}$.
\begin{example}
 As it can be seen from Theorem~\ref{t:closable} below, the class of \emph{admissible potentials} corresponding to the closure of  $Q_{v_{+},0}^{(c)}$, which has been introduced in \cite{Vo1,Vo2} includes $\mathcal{A}_{0}$.  See \cite{KLVW} for further characterizations of the class of admissible potentials.
\end{example}

In summary
\begin{align*}
\mathcal{K}\subseteq \mathcal{B}_{0}\subseteq\mathcal{A}_{0}\subseteq \{\mbox{admissible potentials}\},
\end{align*}
where there are examples such that the inclusion in the middle is strict.

By the inclusions $\mathcal{A}_{0}\subseteq \mathcal{A}_{\theta}$,  $\mathcal{B}_{0}\subseteq \mathcal{B}_{\theta}$, discussed in Lemma~\ref{l:potentials}, the potentials in Examples~\ref{kkato} and~\ref{ex:deg} are also examples for $\theta\neq0$. Nevertheless, the classes $\mathcal{A}_{\theta}$ and $\mathcal{B}_{\theta}$ may depend on the parameter $\theta$.
}


\subsection{Semi-boundedness, closability and associated operators}\label{s:forms}

In this subsection we state the result that for potentials in $\mathcal{A}_{0}$ the corresponding magnetic quadratic form is closable.

By making suitable assumptions on the geometry of $(X,b)$ or on the negative part of the potential we can determine the action of the operator  associated to the closure of $Q^{(c)}_{v,\theta}$. It turns out that in many cases  this operator is a restriction of $\ow{L}_{v,\theta}$ (see Theorem \ref{operator}).

{ We start with an observation which does not come as a surprise from the perspective of classical perturbation theory in the spirit of Kato. However, we can not simply give a reference since we use the explicit action of the form.}
{
\begin{prop}\label{p:closable} For $v \in \mathcal{B}_{\theta}$ the form $Q^{(c)}_{v,\theta}$ is semi-bounded and closable. Its closure $Q_{v,\theta}$ is semi-bounded and given by $Q_{v,\theta} =  Q_{v_+,\theta} - q_{v_-}$ with domain $D(Q_{v,\theta}) = D(Q_{v_+,\theta})$. Furthermore, for all $f,g \in D(Q_{v,\theta})$, one has
$$Q_{v,\theta}(f,g) = \frac{1}{2}\sum_{x,y \in X}b(x,y)\Big( f(x)-\mathrm{e}^{\mathrm{i}\theta (x,y)}f(y)\Big)\overline{\Big(g(x)-\mathrm{e}^{\mathrm{i}\theta (x,y)}g(y)\Big)} + \sum_{x \in X} f(x)\overline{g(x)}v(x)m(x).$$
\end{prop}
\begin{proof}
We start by proving the closability for $v_{+}$.  Define the form
$$
Q^{\max}_{v_{+},\theta}:\ell^{2}(X,m)\longrightarrow [0,\infty]
$$
by
\begin{align*}
    Q^{\max}_{v_{+},\theta}(f)=\frac{1}{2}\sum_{x,y \in X} b(x,y) |f(x)-\e^{\mathrm{i} \theta(x,y) }f(y)|^{2} + \sum_{x\in X} v_{+}(x)|f(x)|^{2} m(x).
\end{align*}
In order to show that $Q^{(c)}_{v_{+},\theta}$ is closable, it suffices to demonstrate that $Q^{\max}_{v_{+},\theta}$ is  lower semi-continuous. This  is a consequence of Fatou's lemma.\\
Now, for $v \in \mathcal{B}_{\theta}$, {there is $\eps>0$ and $C>0$  such that for  $C'> C$, we obtain the inequalities
$$\eps Q_{v_+,\theta}(f) + (C'-C)\|f\|^2 \leq Q^{(c)}_{v,\theta}(f) + C'\|f\|^2 \leq Q_{v_+,\theta}(f) + C'\|f\|^2$$
for all $f \in C_c(X)$. These inequalities show that both form norms have the same Cauchy sequences. Thus, the closability of $Q^{(c)}_{v_+,\theta}$ implies the closability of $Q^{(c)}_{v,\theta}$ and the equality $D(Q_{v,\theta}) = D(Q_{v_{+},\theta})$.} For the statement on the action of the form, let us first note that
$$Q^{\rm max}_{v ,\theta}(f) = \frac{1}{2}\sum_{x,y \in X} b(x,y) |f(x)-\e^{\mathrm{i} \theta(x,y)}f(y)|^{2} + \sum_{x\in X} v(x)|f(x)|^{2} m(x)$$
is well defined for all $f\in D(Q_{v,\theta})$, i.e., $Q^{\rm max}_{v_+ ,\theta}(f)< \infty$ and $q_{v_-}(f) < \infty$. To see this, pick a sequence of compactly supported functions $(f_n)$ converging to $f$ with respect to the form norm induced by $Q_{v_+,\theta}$. We then obtain by Fatou's lemma and  by $v \in \mathcal{B}_{\theta}$
$$Q^{\rm max}_{v_+ ,\theta}(f) \leq \liminf_{n \to \infty} Q_{v_+ ,\theta}(f_n) = Q_{v_+ ,\theta}(f)$$
and
$$q_{v_-}(f) \leq \liminf_{n\to \infty} q_{v_-}(f_n) \leq \liminf_{n\to \infty} (1-\varepsilon)Q_{v_+,\theta}(f_n) + C\|f_n\|^2 = (1-\varepsilon)Q_{v_+,\theta}(f) + C\|f\|^2.$$
Altogether, the above, Fatou's lemma and $Q^{\rm max}_{v,\theta}$ being a quadratic form implies
\begin{align*}
|Q^{\rm max}_{v ,\theta}(f) - Q_{v ,\theta}(f)|^{1/2} & = \lim_{n \to \infty} |Q^{\rm max}_{v ,\theta}(f) - Q^{\rm max}_{v ,\theta}(f_n)|^{1/2}\\
& \leq \liminf_{n \to \infty} |Q^{\rm max}_{v_{+} ,\theta}(f) - Q^{\rm max}_{v_+ ,\theta}(f_n)|^{1/2} + \liminf_{n \to \infty} |q_{v_-}(f) - q_{v_-}(f_n)|^{1/2} \\
& \leq \liminf_{n \to \infty} Q^{\rm max}_{v_{+} ,\theta}(f-f_n)^{1/2} +  \liminf_{n \to \infty}q_{v_-}(f-f_n)^{1/2} \\
& \leq  \liminf_{n,m \to \infty} Q^{\rm max}_{v_{+} ,\theta}(f_m - f_n)^{1/2} \\ & \qquad + \liminf_{n,m \to \infty} \left((1-\eps)Q^{\rm max}_{v_{+} ,\theta}(f_m - f_n) + C\| f_m - f_n\|^2 \right)^{1/2}.
\end{align*}
As $(f_n)$ is a Cauchy-sequence with respect to the form norm of $Q_{v_+,\theta}$, these computations show the claim.
\end{proof}

In fact, we are going to prove the following generalization of Proposition \ref{p:closable} later on, which will not be used in the sequel of this section, but certainly it is of an independent interest.
The proof, given in Section~\ref{s:closable_proof}, works by an approximation argument, cutting off the negative parts of the potentials and employing
a Feynman-Kac-It\^o formula for potentials that are bounded from below (and, thus, belong to $\mathcal{B}_{0}$).  Eventually, we will use this result to show a  Feynman-Kac-It\^o formula for potentials in $\mathcal{A}_{0}$.

\begin{thm}\label{t:closable} For every  $v \in \mathcal{A}_{0}$ the form  $Q^{(c)}_{v,\theta}$ is semi-bounded and closable.
\end{thm}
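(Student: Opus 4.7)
Semi-boundedness is already contained in Lemma~\ref{l:potentials}, so the real task is closability. I would proceed in two steps: first establish closability for potentials with bounded negative part, then deduce the general case by truncating $v_{-}$ and invoking monotone convergence for decreasing closed forms.

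\textbf{Step 1 (bounded negative part).} If $v_{-}\in\ell^{\infty}(X)$, then $q_{v_{-}}$ is a bounded symmetric form on $\ell^{2}(X,m)$, so closability of $Q^{(c)}_{v,\theta}$ is equivalent to closability of the non-negative form $Q^{(c)}_{v_{+},\theta}$. The latter may be written as $\|Tf\|^{2}_{\mathcal{H}}$ where $T\colon C_{c}(X)\to\mathcal{H}$ is the natural linear map into a discrete $L^{2}$-space over edges and vertices with components $\sqrt{b(x,y)/2}\,(f(x)-\mathrm{e}^{\mathrm{i}\theta(x,y)}f(y))$ and $\sqrt{v_{+}(x)m(x)}\,f(x)$. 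Since $X$ is countable and discrete, both $\ell^{2}(X,m)$-convergence and $\mathcal{H}$-convergence imply pointwise convergence of the respective objects. Hence if $f_{n}\to 0$ in $\ell^{2}(X,m)$ and $(Tf_{n})$ is Cauchy in $\mathcal{H}$, its $\mathcal{H}$-limit must coincide with its pointwise limit $0$, which gives closability.

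\textbf{Step 2 (general case via truncation).} For general $v\in\mathcal{A}_{0}$ set $v^{(k)}:=v_{+}-v_{-}\wedge k$; each $v^{(k)}$ has bounded negative part and by Step 1 yields a closed form $Q_{v^{(k)},\theta}$ with self-adjoint operator $L_{v^{(k)},\theta}$. Since $v_{-}\wedge k\nearrow v_{-}$, the forms $Q^{(c)}_{v^{(k)},\theta}$ are monotone decreasing on $C_{c}(X)$ and share a uniform lower bound $-C\|\cdot\|^{2}$, with $C$ the constant in the definition of $\mathcal{A}_{0}$. After shifting by $(C+1)\|\cdot\|^{2}$ the sequence $(Q_{v^{(k)},\theta})_{k}$ becomes a decreasing sequence of non-negative closed forms, and the monotone convergence theorem for such sequences (Reed--Simon / Kato) produces a closable form
\[
q_{\infty}(f):=\lim_{k\to\infty}Q_{v^{(k)},\theta}(f),\qquad D(q_{\infty})=\bigcap_{k}D(Q_{v^{(k)},\theta}).
\]
Because $C_{c}(X)$ lies in this common domain and monotone convergence applied to the sum defining $q_{v_{-}}$ gives $q_{\infty}(f)=Q^{(c)}_{v,\theta}(f)$ for every $f\in C_{c}(X)$, the form $Q^{(c)}_{v,\theta}$ is a restriction of the closable form $q_{\infty}$ and hence is itself closable.

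\textbf{Main obstacle.} The delicate point is that for \emph{decreasing} sequences of closed forms the pointwise limit is in general only closable, not already closed, so one has to apply the monotone convergence theorem in precisely the right form and verify that $C_{c}(X)$ genuinely lies in the intersection of domains. The note in the introduction suggests an alternative route in which one uses the (simpler) Feynman-Kac-It\^o formula for the bounded-below potentials $v^{(k)}$ to write each semigroup $\mathrm{e}^{-tL_{v^{(k)},\theta}}$ as a path integral and passes to the limit $k\to\infty$ at the semigroup level; this circumvents the borderline relative bound $1$ implicit in $v\in\mathcal{A}_{0}$ (in contrast to the class $\mathcal{B}_{\theta}$, whose strict relative bound permits a direct KLMN-type argument).
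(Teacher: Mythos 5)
Your Step~1 (truncation and closability for potentials with bounded negative part) is sound and matches the first part of the paper's Proposition~\ref{p:closable}. The overall strategy of truncating $v_{-}$ and passing to a decreasing limit is also the paper's strategy. However, Step~2 contains a genuine gap at precisely the place you flag, but more serious than you indicate. For a \emph{decreasing} sequence of closed forms $Q_{v^{(k)},\theta}$ with a common lower bound, the monotone convergence theorem (Reed--Simon Thm.~S.16, Kato) does \emph{not} assert that the pointwise limit $q_{\infty}$ on $\bigcap_k D(Q_{v^{(k)},\theta})$ is closable. What it gives is strong semigroup convergence $\mathrm{e}^{-tL_{v^{(k)},\theta}}\to\mathrm{e}^{-tS}$, where $S$ is the operator associated with the \emph{closure of the largest closable form $\leq q_{\infty}$}. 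Two things can fail that your argument silently assumes: (i) $q_{\infty}$ itself need not be closable, so ``$Q^{(c)}_{v,\theta}$ is a restriction of the closable form $q_{\infty}$'' is not a licensed conclusion; and (ii) even once one has the closed limit form $s_{v,\theta}$, there is no a~priori reason why $s_{v,\theta}(u)=q_{\infty}(u)=Q^{(c)}_{v,\theta}(u)$ for $u\in C_c(X)$, nor even that $C_c(X)\subseteq D(s_{v,\theta})$. The ``largest closable minorant'' operation can collapse the value of the form on compactly supported functions. Your observation that the relative bound of $q_{v_-}$ with respect to $Q^{(c)}_{v_+,\theta}$ may equal~$1$ is exactly why no KLMN-type shortcut is available here, but you then still treat Step~2 as if it closes the argument.

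The paper's proof is the route you describe only in your final paragraph: it uses the Feynman--Kac--It\^o formula for the bounded-below truncations $v^{(k)}$ (available via Proposition~\ref{p:FKI_closable}) to identify the limit semigroup $\mathrm{e}^{-tS_{v,\theta}}$ as a concrete path integral (Claim~1), then differentiates this path-integral representation at $t=0$ on functions with finite support to show that the generator acts as $\widetilde{L}_{v,\theta}$ there (Claim~2, via Lemmas~\ref{error} and~\ref{generator}), and finally combines Green's formula (Lemma~\ref{Greens formula}) with the semigroup characterization of the closed form $s_{v,\theta}$ to deduce both $C_c(X)\subseteq D(s_{v,\theta})$ and $s_{v,\theta}|_{C_c(X)}=Q^{(c)}_{v,\theta}$. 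This explicit identification of the limiting form on $C_c(X)$ is the step that your abstract monotone-convergence argument cannot supply, and it is what makes the probabilistic representation indispensable rather than merely an alternative. To repair your proof you would need to carry out this identification — which is, in effect, to reproduce the paper's Claims~1 and~2.
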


For non-magnetic forms, i.e., $\theta=0$, even the converse is true.

\begin{coro} The form $Q^{(c)}_{v,0}$ is semi-bounded and closable if and only if $v\in \mathcal{A}_{0}$.
\end{coro}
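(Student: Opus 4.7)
The plan is to extract the Corollary directly from Lemma~\ref{l:potentials} and Theorem~\ref{t:closable}, with the observation that the two named results together already form a two-sided implication when combined appropriately.

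For the $(\Leftarrow)$ direction, I would simply specialize Theorem~\ref{t:closable} to $\theta=0$: assuming $v\in\mathcal{A}_0$, that theorem asserts $Q^{(c)}_{v,\theta}$ is semi-bounded and closable for any magnetic potential $\theta$, in particular for $\theta\equiv 0$, yielding both properties of $Q^{(c)}_{v,0}$ at once.

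For the $(\Rightarrow)$ direction, I would observe that closability is not actually needed for this implication; semi-boundedness alone already suffices. Indeed, the first statement of Lemma~\ref{l:potentials} asserts the equivalence
\begin{align*}
Q^{(c)}_{v,0} \text{ is semi-bounded from below} \iff v\in\mathcal{A}_0,
\end{align*}
so if $Q^{(c)}_{v,0}$ is semi-bounded (regardless of whether it is closable), then $v\in\mathcal{A}_0$ follows directly.

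Consequently, the proof reduces to citing Lemma~\ref{l:potentials} for the reverse implication and Theorem~\ref{t:closable} (with $\theta=0$) for the forward implication; no further calculation is required. There is no real obstacle here: the corollary is a packaging result, and its only substantive content, namely that semi-boundedness of the non-magnetic form automatically entails closability, is exactly what Theorem~\ref{t:closable} delivers in the $\theta=0$ case (and Lemma~\ref{l:potentials} identifies this semi-boundedness with membership in $\mathcal{A}_0$).
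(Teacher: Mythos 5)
Your proposal is correct and follows exactly the paper's own argument: the ``if'' direction is Theorem~\ref{t:closable} specialized to $\theta=0$, and the ``only if'' direction is the first statement of Lemma~\ref{l:potentials}, which identifies semi-boundedness of $Q^{(c)}_{v,0}$ with $v\in\mathcal{A}_0$. Your observation that closability plays no role in the forward implication matches the paper's (contrapositive) phrasing of the same point.
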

\begin{proof} The ``if'' follows directly from Theorem~\ref{t:closable} and the ``only if'' follows as $Q_{v,0}^{(c)}$ is not semi-bounded if $v$ is not in $\mathcal{A}_{0}$, by Lemma \ref{l:potentials}.
\end{proof}

We proceed by giving further criteria for $Q^{(c)}_{v,\theta}$ being closable and for the operator $L_{v,\theta}$ associated to its closure being a restriction of $\ow L_{v,\theta}$. To this end the condition $\ow L_{v,\theta}[{C}_c(X)] \subseteq \ell^{2}(X,m)$ plays a role. We put this condition into perspective which is based on an observation made in \cite{KL} for the Dirichlet form case.}

\begin{lemma} \label{finiteness assumption} The following assertions are equivalent:
\begin{itemize}
\item[(i)] $\ow L_{v,\theta}[{C}_c(X)] \subseteq \ell^{2}(X,m)$.
\item[(ii)] $\ow L_{0,0}[{C}_c(X)] \subseteq \ell^{2}(X,m)$
\item[(iii)] For all $x \in X$ the function $X \to [0,\infty)$, $y \mapsto b(x,y)/m(y)$ belongs to $\ell^2(X,m)$.
\end{itemize}
If one of the above is satisfied, then $\ell^2(X,m) \subseteq \ow{F}(X)$. { Furthermore, the assertions are implied by local finiteness of the graph $b$ or $m\ge C$ for some $C>0$.}
\end{lemma}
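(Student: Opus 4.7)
The plan is to test the implication (i) on the indicator functions of single vertices, since linearity then propagates the conclusion to all of $C_c(X)$. For $z\in X$, let $1_{\{z\}}$ be the function that is $1$ at $z$ and $0$ elsewhere; this lies in $C_c(X)$. A direct computation shows that at any $x\ne z$ one has
$$
\ow L_{v,\theta}1_{\{z\}}(x)=-\frac{b(x,z)\,e^{\mathrm{i}\theta(x,z)}}{m(x)},
$$
while at $x=z$ one obtains $\deg_m(z)+v(z)$, which is finite by the standing summability assumption on $b$. The potential term $v(x)1_{\{z\}}(x)=v(z)1_{\{z\}}(x)$ is always a compactly supported function and hence automatically in $\ell^2(X,m)$, so it does not obstruct the argument.

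Squaring and summing against $m$ then yields
$$
\|\ow L_{v,\theta}1_{\{z\}}\|^{2}
\;=\;(\deg_m(z)+v(z))^{2}m(z)\;+\;\sum_{x\neq z}\frac{b(x,z)^{2}}{m(x)},
$$
and the only obstruction to finiteness is the second sum. By the symmetry of $b$ this equals $\sum_{y}(b(z,y)/m(y))^{2}m(y)$, which is precisely the $\ell^2(X,m)$-norm squared of the function $y\mapsto b(z,y)/m(y)$. Hence $\ow L_{v,\theta}1_{\{z\}}\in\ell^{2}(X,m)$ is equivalent to (ii) at the vertex $z$. This gives (i)$\Leftrightarrow$(ii): the implication (i)$\Rightarrow$(ii) is obtained by picking $f=1_{\{z\}}$ for arbitrary $z\in X$, and the converse follows since every $f\in C_c(X)$ is a finite linear combination of such indicators, so linearity of $\ow L_{v,\theta}$ preserves membership in $\ell^2(X,m)$.

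For the final assertion that $\ell^{2}(X,m)\subseteq\ow F(X)$ under (ii), one applies Cauchy--Schwarz: for arbitrary $f\in\ell^{2}(X,m)$ and fixed $x\in X$,
$$
\sum_{y\in X}b(x,y)|f(y)|
=\sum_{y\in X}\frac{b(x,y)}{m(y)}\,|f(y)|\,m(y)
\;\leq\;\Bigl(\sum_{y\in X}\frac{b(x,y)^{2}}{m(y)}\Bigr)^{1/2}\!\|f\|.
$$
The first factor on the right is finite by (ii), so the sum is finite and $f\in\ow F(X)$. The only mild subtlety in the whole argument is making sure that the potential contribution does not interfere with either direction, but as noted above $v\cdot 1_{\{z\}}$ has finite support and is therefore harmless; so no serious obstacle is expected.
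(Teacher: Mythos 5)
Your proof is correct and follows the same straightforward computation the paper delegates to \cite[Proposition~3.3]{KL}: test $\ow L_{v,\theta}$ on indicators $1_{\{z\}}$, observe that the magnetic phase has modulus one and the potential term has finite support, so the only obstruction is precisely $\sum_x b(x,z)^2/m(x)$, which by symmetry of $b$ is condition (ii); the final inclusion $\ell^2(X,m)\subseteq\ow F(X)$ via Cauchy--Schwarz is likewise exactly the intended argument.
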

\begin{proof} The proof follows from a straightforward computation, see e.g. \cite[Proposition~3.3]{KL} { and \cite[Lemma~2.3]{semic}} for details.
\end{proof}

We can now state the theorem about the action of the operators.

\begin{thm} \label{operator}
Suppose one of the following conditions holds.
\begin{itemize}
\item[(a)] $v \in \mathcal{B}_{\theta}$.
\item[(b)] $v \in \mathcal{A}_\theta$ and $\ow{L}_{v,\theta}[{C}_c(X)] \subseteq  \ell^2(X,m)$.
\item[(c)] $v \in \mathcal{A}_\theta$ and $(X,b)$ is locally finite.
\end{itemize}
Then $Q^{(c)}_{v,\theta}$ is semi-bounded and closable and the corresponding operator is a restriction of $\ow{L}_{v,\theta}$.
\end{thm}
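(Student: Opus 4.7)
The plan is to invoke the principle stated after Lemma~\ref{Greens formula}: once $Q^{(c)}_{v,\theta}$ is known to be semi-bounded and closable, and once the inclusion $D(Q_{v,\theta})\subseteq \ow{\F}(X)$ is established, Green's formula immediately yields that the associated operator $L_{v,\theta}$ is a restriction of $\ow L_{v,\theta}$. Semi-boundedness and closability are already in hand: in case (i) by Proposition~\ref{p:closable}, and in cases (ii) and (iii) by Theorem~\ref{t:closable} together with $\mathcal{B}_\theta\subseteq\mathcal{A}_\theta$. So two items remain to be checked: the domain inclusion into $\ow{\F}(X)$, and the identification of $Q_{v,\theta}(f,g)$ with $\sum_x \ow L_{v,\theta}f(x)\overline{g(x)}m(x)$ for $f\in D(L_{v,\theta})$ and $g\in\C_c(X)$.

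For the inclusion $D(Q_{v,\theta})\subseteq \ow{\F}(X)$: in (iii) local finiteness gives $\ow{\F}(X)=\C(X)$ outright. In (ii) Lemma~\ref{finiteness assumption} yields $\ell^{2}(X,m)\subseteq \ow{\F}(X)$, which contains $D(Q_{v,\theta})$. In (i) Proposition~\ref{p:closable} identifies $D(Q_{v,\theta})=D(Q_{v_+,\theta})$; for any $f$ therein and each fixed $x$ the sum $\sum_y b(x,y)|f(x)-\e^{\mathrm{i}\theta(x,y)}f(y)|^{2}$ is finite, and Cauchy--Schwarz combined with $\sum_y b(x,y)<\infty$ upgrades this to $\sum_y b(x,y)|f(x)-\e^{\mathrm{i}\theta(x,y)}f(y)|<\infty$; adding $|f(x)|\sum_y b(x,y)<\infty$ and using the triangle inequality gives $\sum_y b(x,y)|f(y)|<\infty$, i.e.\ $f\in \ow{\F}(X)$.

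For the identification, fix $f\in D(L_{v,\theta})$ and $g\in\C_c(X)$. In case (i) Proposition~\ref{p:closable} writes $Q_{v,\theta}(f,g)$ as the same pointwise sum that appears on the right-hand side of Green's formula, so Green's formula, applicable since $f\in\ow{\F}(X)$, gives directly $Q_{v,\theta}(f,g)=\sum_x \ow L_{v,\theta}f(x)\overline{g(x)}m(x)$. In cases (ii) and (iii) I would instead approximate: choose $f_n\in\C_c(X)$ with $f_n\to f$ in the form norm of $Q_{v,\theta}$ and observe that $\ow L_{v,\theta}g\in\ell^{2}(X,m)$ (assumed in (ii); automatic in (iii) since local finiteness forces $\ow L_{v,\theta}g\in\C_c(X)$). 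Green's formula applied to compactly supported functions gives $Q^{(c)}_{v,\theta}(f_n,g)=\langle f_n,\ow L_{v,\theta}g\rangle$, and passing to the limit yields $Q_{v,\theta}(f,g)=\langle f,\ow L_{v,\theta}g\rangle$, which by Green's formula again (using $f\in\ow{\F}(X)$) equals $\sum_x \ow L_{v,\theta}f(x)\overline{g(x)}m(x)$.

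Combining the two steps, for every $g\in\C_c(X)$ we obtain
$\sum_x L_{v,\theta}f(x)\overline{g(x)}m(x)=\langle L_{v,\theta}f,g\rangle=Q_{v,\theta}(f,g)=\sum_x \ow L_{v,\theta}f(x)\overline{g(x)}m(x)$, and specialising to $g=\delta_x$ gives $L_{v,\theta}f(x)=\ow L_{v,\theta}f(x)$ pointwise, so $L_{v,\theta}\subseteq \ow L_{v,\theta}$. I expect the main subtlety to lie in controlling absolute convergence of the various sums across the three cases, which is precisely what the three hypotheses are tailored to provide; once the inclusion into $\ow{\F}(X)$ is verified and the right form of Green's formula is chosen in each case, the rest is bookkeeping.
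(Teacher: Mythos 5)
Your overall roadmap matches the paper's: show (iii)$\Rightarrow$(ii), establish $D(Q_{v,\theta})\subseteq\ow{\F}(X)$, and then use Green's formula to identify the operator. The way you obtain the domain inclusion in case (i) via Cauchy--Schwarz and $\deg_1(x)<\infty$ is exactly the paper's computation, and your approximation argument for cases (ii)/(iii) using $Q^{(c)}_{v,\theta}(f_n,g)=\langle f_n,\ow L_{v,\theta}g\rangle$ is a small but valid variation of what the paper does (the paper instead writes $Q^{(c)}_{v,\theta}(f_n,g)=\sum_x\ow L_{v,\theta}f_n(x)\overline{g(x)}m(x)$ and proves pointwise convergence of $\ow L_{v,\theta}f_n$ via Lemma~\ref{finiteness assumption}; both routes land in the same place).

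There is, however, a genuine gap in how you obtain closability in cases (ii) and (iii). You write that closability follows from ``Theorem~\ref{t:closable} together with $\mathcal{B}_\theta\subseteq\mathcal{A}_\theta$.'' But Theorem~\ref{t:closable} applies only to $v\in\mathcal{A}_0$, while in (ii) and (iii) the hypothesis is $v\in\mathcal{A}_\theta$, and the inclusion runs the wrong way for your purposes: the argument in Lemma~\ref{l:potentials} (namely $q_{v_-}(f)=q_{v_-}(|f|)$ and $Q^{(c)}_{v_+,0}(|f|)\le Q^{(c)}_{v_+,\theta}(f)$) shows $\mathcal{A}_0\subseteq\mathcal{A}_\theta$, not the converse, so $\mathcal{A}_\theta$ can be strictly larger and Theorem~\ref{t:closable} is not applicable. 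The inclusion $\mathcal{B}_\theta\subseteq\mathcal{A}_\theta$ that you cite is true by definition but does nothing here. The paper closes this gap by a completely different mechanism: under (ii), $\ow L_{v,\theta}|_{C_c(X)}$ is a densely defined \emph{symmetric} operator (by Green's formula, using $\ow L_{v,\theta}[C_c(X)]\subseteq\ell^2(X,m)$), and $v\in\mathcal{A}_\theta$ makes its associated form $Q^{(c)}_{v,\theta}$ semi-bounded; the Friedrichs extension theorem then provides a self-adjoint semi-bounded extension, whose form is a closed extension of $Q^{(c)}_{v,\theta}$, yielding closability. You should replace the appeal to Theorem~\ref{t:closable} with this Friedrichs-extension argument; the rest of your proof then goes through.
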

\begin{proof} Clearly assumption (c) implies (b), hence it suffices to show the statement under assumption (a) and (b). Let us assume (a). As seen in Proposition \ref{p:closable}, the form $Q^{(c)}_{v,\theta}$ is semi-bounded, closable and satisfies $D(Q_{v,\theta}) =  D(Q_{v_+,\theta})$. We will now show  $D(Q_{v_+,\theta}) \subseteq \ow{F}(X)$. The inclusion  $D(Q_{v_+,\theta}) \subseteq \ow{F}(X)$ together with the action of $Q_{v,\theta}$ (Proposition~\ref{p:closable}) and Green's formula (Lemma~\ref{Greens formula}) would imply
$$\as{L_{v,\theta}f,g} = Q_{v,\theta}(f,g) = \sum_{x \in X}\ow{L}_{v,\theta}f(x) \overline{g(x)}m(x) $$
for all $f \in D(L_{v,\theta})$ and $g \in C_c(X)$. So, showing $D(Q_{v_+,\theta}) \subseteq \ow{F}(X)$ would prove the claim. Thus, let $f \in D(Q_{v_+,\theta})$ be given. We  estimate
\begin{align*}
\sum_{y \in X}b(x,y)|f(y)| &\leq \sum_{y \in X}b(x,y)|f(x) - \mathrm{e}^{\mathrm{i}\theta(x,y)}f(y)| + \sum_{y \in X}b(x,y)|f(x)| \\
& \leq {\rm deg_1}(x)^{1/2} \left(\sum_{y \in X}b(x,y)|f(x) - \mathrm{e}^{\mathrm{i}\theta(x,y)}f(y)|^2\right)^{1/2} + {\rm deg_1}(x) |f(x)|,
\end{align*}
{where $\deg_{1}(x)=\sum_{y\in X}b(x,y)$ is finite by assumption on the graph $b$ and the form expression  is finite by Proposition \ref{p:closable}. Hence, $\sum_{y \in X}b(x,y)|f(y)|<\infty$ which implies $f\in \ow F(X)$.}

Next, we assume (b) holds. Then $Q^{(c)}_{v,\theta}$ is semi-bounded and closable by the Friedrich's extension theorem.  Let $f \in D(L_{v,\theta})$ be given and $(f_n)$ be a sequence of compactly supported functions converging to $f$ in the form norm. Then, for all $g \in C_c(X)$, we obtain by definition of $L_{v,\theta}$ and Green's formula (Lemma~\ref{Greens formula})
\begin{align*}
\as{L_{v,\theta}f,g} &= Q_{v,\theta}(f,g) \\
& = \lim_{n\to \infty} Q^{(c)}_{v,\theta}(f_n,g)\\
&= \lim_{n\to \infty} \sum_{x \in V} \ow{L}_{v,\theta} f_n (x) \overline{g(x)}m(x).
\end{align*}
As $g$ is compactly supported,  it suffices to show the pointwise convergence of $\ow{L}_{v,\theta} f_n$ towards $\ow{L}_{v,\theta} f$ to prove the claim. For this it is sufficient to show the convergence
$$\sum_{y \in X}b(x,y)f_n(y)\e^{\mathrm{i}\theta(x,y)} \to \sum_{y \in X}b(x,y)f(y)\e^{\mathrm{i}\theta(x,y)},\;n\to\infty, $$
for each $x \in X$. This can be deduced from
$$\sum_{y \in X}b(x,y)|f_n(y)-f(y)| \leq \left(\sum_{y\in X}\frac{b(x,y)^2}{m(y)}\right)^{1/2}\left(\sum_{y\in X}|f_n(y)-f(y)|^2m(y)\right)^{1/2},$$
where the finiteness of the first factor of the right-hand side follows from the characterization of the assumption $\ow{L}_{v,\theta}[{C}_c(X)] \subseteq  \ell^2(X,m)$ in  Lemma~\ref{finiteness assumption} and finiteness of the second factor follows from $f, f_{n}\in\ell^{2}(X,m)$.
\end{proof}


\subsection{Uniqueness of semi-bounded self-adjoint restrictions}\label{s:uniqueness}{
In this section we present uniqueness results for self-adjoint operators that are restrictions of $\ow L_{v,\theta}$. It is intended to complement the operator theoretic picture and extend previous results in this direction to our much more general situation. However, these result will not be needed for the Feynman-Kac-It\^o formula.

A classical approach to uniqueness results of self-adjoint operators is the concept of essential self-adjointness. However, the notion of essential self-adjointness of $\ow L_{v,\theta}$ only makes sense if $\ow L_{v,\theta}[{C}_c(X)]\subseteq \ell^{2}(X,m)$. Nevertheless, \emph{in general}, we can} still ask for the uniqueness of semi-bounded self-adjoint restrictions of $\ow L_{v,\theta}$ on $\ell^{2}(X,m)$ in an appropriate sense. (Precisely, we ask whether there is a unique dense subspace $D$ of $\ell^{2}(X,m)$ such that the restriction of $\ow L_{v,\theta}$ to $D$ is a semi-bounded self-adjoint operator.) 

{Let us mention that, in general, it is not clear whether $\ow L_{v,\theta}$ has a self-adjoint restriction to $\ell^{2}(X,m)$ at all.

We start by presenting an abstract criterion for uniqueness in case of existence of self-adjoint restrictions based on uniqueness of the solutions of $(\ow L_{v,\theta} - \lm)u=0$. This is complemented by two two conditions each of which ensuring existence. This is the content of , Proposition~\ref{p:criterion}.

Afterwards, we give two criteria under which the assumptions of Proposition~\ref{p:criterion} are met. The first one,  Theorem~\ref{t:essSA1}, essentially makes an assumption on the underlying weighted graph as a measure space, and the second one, Theorem~\ref{t:essSA2}, makes an assumption on the weighted graph as a metric space.

We start with the abstract criteria for uniqueness and existence.}

\begin{prop}\label{p:criterion}  Assume there exists some constant $C\in \IR$ such that for all $\lambda < C$, every solution $u\in\ow F(X)\cap \ell^{2}(X,m)$ of $(\ow L_{v,\theta} - \lm)u=0$ satisfies $u\equiv0$. Then $\ow L_{v,\theta}$ has at most one semi-bounded self-adjoint restriction on $\ell^{2}(X,m)$. Furthermore, the following holds:
\begin{itemize}
\item[(a)] If, additionally,  $v \in \mathcal{B}_{\theta}$, then $\ow{L}_{v,\theta}$ has a unique semi-bounded self-adjoint restriction.
\item[(b)] If, additionally,  $v \in \mathcal{A}_{\theta}$ and $\ow{L}_{v,\theta}[{C}_c(X)]\subseteq \ell^{2}(X,m)$, then $\ow{L}_{v,\theta}\vert_{C_{c}(X)}$ is essentially self-adjoint.
\end{itemize}
\end{prop}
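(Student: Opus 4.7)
My plan is to tackle the main uniqueness statement first via resolvent comparison, then to read off (i) from the construction of the Friedrichs-type form extension plus uniqueness, and to obtain (ii) by explicitly identifying the adjoint of $\ow L_{v,\theta}|_{C_c(X)}$ and then invoking the classical semi-bounded essential self-adjointness criterion.

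For the main claim I would fix two semi-bounded self-adjoint restrictions $A_1,A_2$ of $\ow L_{v,\theta}$, with lower bounds $c_1,c_2$, and choose $\lambda < \min\{C,c_1,c_2\}$. Both resolvents $(A_i-\lambda)^{-1}$ then exist as bounded operators on $\ell^2(X,m)$, and for arbitrary $g\in\ell^2(X,m)$ the vector $u:=(A_1-\lambda)^{-1}g-(A_2-\lambda)^{-1}g$ lies in $\ow F(X)\cap\ell^2(X,m)$ (because each $D(A_i)\subseteq\ow F(X)$ with $A_i$ acting as $\ow L_{v,\theta}$ there) and satisfies $(\ow L_{v,\theta}-\lambda)u=0$. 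The hypothesis forces $u\equiv 0$, so the two resolvents coincide and hence $A_1=A_2$. Part (i) then follows at once: under $v\in\mathcal{B}_\theta$, Proposition \ref{p:closable} together with Theorem \ref{operator}(i) produces at least one semi-bounded self-adjoint restriction, namely $L_{v,\theta}$, and the uniqueness just established makes it the only one.

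For part (ii), I would work with the symmetric operator $T:=\ow L_{v,\theta}|_{C_c(X)}$, which by Green's formula is symmetric and by Lemma \ref{l:potentials} combined with $v\in\mathcal{A}_\theta$ is bounded below by some $c\in\IR$. The task then reduces to the standard criterion that a symmetric operator bounded below by $c$ is essentially self-adjoint if and only if $\ker(T^*-\lambda)=\{0\}$ for some $\lambda<c$. To apply this I need a concrete description of $T^*$. Lemma \ref{finiteness assumption} combined with the hypothesis $\ow L_{v,\theta}[C_c(X)]\subseteq\ell^2(X,m)$ yields $\ell^2(X,m)\subseteq\ow F(X)$, so $\ow L_{v,\theta}u$ is pointwise well defined for every $u\in\ell^2(X,m)$. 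Writing out the defining adjointness relation $\langle u,T\varphi\rangle=\langle w,\varphi\rangle$ for all $\varphi\in C_c(X)$, applying Green's formula, and testing against $\delta_y$ for each $y\in X$ identifies
\begin{equation*}
D(T^*)=\{\,u\in\ell^2(X,m):\ow L_{v,\theta}u\in\ell^2(X,m)\,\},\qquad T^*u=\ow L_{v,\theta}u.
\end{equation*}
Picking any $\lambda<\min\{C,c\}$, the hypothesis then immediately gives $\ker(T^*-\lambda)=\{0\}$, which is precisely what the criterion needs.

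The main obstacle I anticipate is the adjoint computation in (ii): all three assertions hinge on being able to deliver the hypothesis on a function in $\ow F(X)\cap\ell^2(X,m)$, and for (ii) this specifically requires the consequence $\ell^2(X,m)\subseteq\ow F(X)$ of Lemma \ref{finiteness assumption} in order to be sure that $D(T^*)$ actually lies in $\ow F(X)$. Once that identification is in place, the rest is a short application of the uniqueness hypothesis together with standard spectral-theoretic criteria.
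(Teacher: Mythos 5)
Your proof is correct. The argument for the main uniqueness claim and for part (i) is essentially identical to the paper's: compare resolvents at a point $\lambda$ below both $C$ and the common lower bound, note that the difference of resolvents applied to any test vector solves $(\ow L_{v,\theta}-\lambda)u=0$ and lies in $\ow F(X)\cap\ell^2(X,m)$ (because both operators are restrictions of $\ow L_{v,\theta}$), and invoke the hypothesis; existence in (i) then comes from Theorem~\ref{operator}.

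For part (ii) you take a mildly different but equally valid route. Both you and the paper first identify $T^*=L_{\rm max}$ with domain $\{u\in\ell^2(X,m):\ow L_{v,\theta}u\in\ell^2(X,m)\}$ and action $\ow L_{v,\theta}$, using Lemma~\ref{finiteness assumption} to guarantee $\ell^2(X,m)\subseteq\ow F(X)$ and hence the applicability of Green's formula. From there, the paper explicitly exhibits the Friedrichs-type operator $L_{v,\theta}$ from Theorem~\ref{operator} as a self-adjoint restriction of $\ow L_{v,\theta}$, then uses the kernel hypothesis (via the auxiliary element $w=(L_{v,\theta}-\lambda)^{-1}(\ow L_{v,\theta}-\lambda)u$) to push $D(L_{\rm max})$ into $D(L_{v,\theta})$, concluding $L_{\rm max}=L_{v,\theta}$. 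You instead invoke the abstract criterion that a symmetric operator $T\geq c$ is essentially self-adjoint iff $\ker(T^*-\lambda)=\{0\}$ for some $\lambda<c$, and feed the kernel hypothesis into it directly. Your route is a bit leaner for (ii) (it does not require Theorem~\ref{operator} there), while the paper's approach has the small bonus of displaying the unique self-adjoint extension concretely as the form operator $L_{v,\theta}$. One tiny citation nit: the lower bound for $T$ under $v\in\mathcal{A}_\theta$ follows immediately from the definition of $\mathcal{A}_\theta$ rather than from Lemma~\ref{l:potentials}, which is phrased for $\mathcal{A}_0$; the needed inequality $Q^{(c)}_{v,\theta}(f)\geq -C\|f\|^2$ is direct.
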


\begin{proof} Suppose there are two such restrictions $L_{1}$ and $L_{2}$ on $\ell^{2}(X,m)$ which do not coincide. Let $C$ be a common lower bound of $L_1$ and $L_2$. Then, their resolvents $(L_{1}-\lm)^{-1}$ and $(L_{2}-\lm)^{-1}$ are different for $\lm<C$. Hence, we infer
$$
u=((L_{1}-\lm)^{-1}-(L_{2}-\lm)^{-1})\ph\neq0\>\> \text{for some $\ph\in C_{c}(X)$.}
$$
As $L_{1}$ and $L_{2}$ are both restrictions of $\ow L_{v,\theta}$, we have $(\ow L_{v,\theta}-\lm)u=\ph-\ph=0$ and get a contradiction.

Under the additional assumption in (a) the existence of a semi-bounded self-adjoint restrictions follows from Theorem \ref{operator}.

For the statement under the assumptions of (b) assume $\ow L_{v,\theta}[{C}_c(X)]\subseteq \ell^{2}(X,m)$. Let $L_{\rm min} = \ow L_{v,\theta}|_{C_c(X)}$ and $L_{\rm max} = L_{\rm min}^*$ its adjoint. It suffices to show that $L_{\rm max}$ is self-adjoint. From Lemma \ref{finiteness assumption} we infer $\ell^2(X,m) \subseteq  \tilde{F}$. This allows the application of Green's formula (Lemma~\ref{Greens formula}), i.e.,
$$\as{u,\ow{L}_{v,\theta}f} = \as{\ow{L}_{v,\theta}u,f} $$
for all $u\in \ell^2(X,m)$, such that $\ow{L}_{v,\theta}u \in \ell^2(X,m)$ and all $f\in C_c(X)$. This shows that $L_{\rm max}$ is a restriction of $\ow{L}_{v,\theta}$  with domain
$$
D(L_{\rm max}) = \{u\in \ell^2(X,m)\>|\>\> \ow{L}_{v,\theta}u\in \ell^2(X,m)\}.
$$
Now let $L_{v,\theta}$ be the self-adjoint semi-bounded operator associated with the closure of $Q^{(c)}_{v,\theta}$. By Theorem \ref{operator}, $L_{v,\theta}$ is a restriction of $\ow{L}_{v,\theta}$, satisfying $D(L_{v,\theta}) \subseteq D(L_{\rm max})$. Therefore, it suffices to show the other inclusion. Let $u \in D(L_{\rm max})$ be given and let $w = (L_{v,\theta}-\lambda)^{-1}(\ow{L}_{v,\theta} - \lambda)u \in D(L_{v,\theta}).$ We obtain $(\ow{L}_{v,\theta}-\lambda)(w-u) = 0,$ which implies $u = w\in D(L_{v,\theta})$ by our assumptions.
\end{proof}

The first criterion for uniqueness is based on a result from \cite{KL} for $\theta=0$. This was later generalized to locally finite magnetic operators in \cite{Gol}. The result below stands somewhat skew to the one of \cite{Gol}: In \cite{Gol} no assumption on the semi-boundedness of the quadratic form is made, whereas we do not assume local finiteness.

\begin{thm}\label{t:essSA1}\emph{(Uniqueness - measure space criterion)} Assume that for some $\al\in\R$ and all infinite paths $(x_{n})^{\infty}_{n=0}$ one has
\begin{align*}
    \sum^{\infty}_{n=1}m(x_{n})\prod_{j=0}^{n-1} \left(1+\frac{v(x_{j})-\al}{\deg_m(x_{j})}\right)^{2}=\infty.
\end{align*}
Then the following holds:
\begin{itemize}
\item[(a)] If, additionally, $v\in \mathcal{B}_{\theta}$, then ${\ow L_{v,\theta}}$ has a unique semi-bounded self-adjoint restriction.
\item[(b)] If, additionally, $v \in \mathcal{A}_{\theta}$ and $\ow{L}_{v,\theta}[{C}_c(X)]\subseteq \ell^{2}(X,m)$, then $\ow{L}_{v,\theta}\vert_{C_{c}(X)}$ is essentially self-adjoint.
\end{itemize}
\end{thm}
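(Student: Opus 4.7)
The plan is to apply Proposition~\ref{p:criterion}. I aim to exhibit some $C\in\IR$ such that for every $\lm<C$ the only $u\in\ow F(X)\cap\ell^{2}(X,m)$ solving $(\ow L_{v,\theta}-\lm)u=0$ is $u\equiv 0$; the two conclusions of Theorem~\ref{t:essSA1} then follow directly from clauses (i) and (ii) of Proposition~\ref{p:criterion}. The natural candidate is $C=\al$, to be shrunk below $\al$ if the argument demands. The strategy is to argue by contradiction, fixing a non-trivial $\ell^{2}$ solution $u$ and a vertex $x_0$ with $u(x_0)\neq 0$.

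The heart of the matter is a pointwise growth estimate for $|u|$. The equation $(\ow L_{v,\theta}-\lm)u=0$ reads vertex by vertex as
\[
m(x)\bigl(\deg_m(x)+v(x)-\lm\bigr)u(x)\;=\;\sum_{y\in X}b(x,y)\,\e^{\mathrm{i}\te(x,y)}u(y).
\]
Taking moduli, applying Cauchy--Schwarz in the form $\bigl(\sum_y b(x,y)|u(y)|\bigr)^{2}\le m(x)\deg_m(x)\sum_y b(x,y)|u(y)|^{2}$, and bounding the right-hand sum by $m(x)\deg_m(x)\max_{y\sim x}|u(y)|^{2}$ yields
\[
\Bigl(1+\tfrac{v(x)-\lm}{\deg_m(x)}\Bigr)^{\!2}|u(x)|^{2}\;\leq\;\max_{y\sim x}|u(y)|^{2}.
\]
Iteratively choosing a neighbour $x_{n+1}\sim x_n$ that realises this maximum (or, if the supremum is not attained, approximates it up to a factor $1-2^{-n}$) produces a sequence $(x_n)_{n\ge 0}$ with $x_n\sim x_{n+1}$ and
\[
|u(x_n)|^{2}\;\geq\;c\,|u(x_0)|^{2}\prod_{j=0}^{n-1}\Bigl(1+\tfrac{v(x_j)-\lm}{\deg_m(x_j)}\Bigr)^{\!2}
\]
for some $c>0$.

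The principal obstacle has two facets. First, this greedy construction a priori yields a \emph{walk}, whereas the hypothesis is formulated for \emph{paths} of pairwise distinct vertices. I would handle this by refining the selection to prefer unvisited neighbours (absorbing the inevitable loss into the constant $c$) and by treating separately the case in which the walk is eventually confined to a finite set $A$: then some vertex of $A$ would be visited infinitely often, so the bounded value of $|u|^{2}$ there cannot coexist with an unbounded product of factors strictly exceeding $1$, and such a strict inequality can be enforced on $A$ by taking $C$ sufficiently small. Second, the hypothesis features the parameter $\al$ while the growth estimate involves $\lm<\al$; the inequality $1+(v(x)-\lm)/\deg_m(x)\ge 1+(v(x)-\al)/\deg_m(x)$ is preserved under squaring wherever the latter side is non-negative, and at the remaining vertices where both sides are negative a sign-aware comparison is required, again to be resolved by lowering $C$ further if need be. Once the growth bound is transferred to a bona fide infinite path, the divergence hypothesis applied to that path gives
\[
\|u\|^{2}\;\geq\;c\,|u(x_0)|^{2}\sum_{n\ge 1}m(x_n)\prod_{j=0}^{n-1}\Bigl(1+\tfrac{v(x_j)-\lm}{\deg_m(x_j)}\Bigr)^{\!2}\;=\;\infty,
\]
contradicting $u\in\ell^{2}(X,m)$ and thereby completing the argument.
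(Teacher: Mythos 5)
Your overall route coincides with the paper's: invoke Proposition~\ref{p:criterion}, derive from the equation the pointwise bound
\[
|u(x)|\le\Bigl(\tfrac{1}{m(x)}\textstyle\sum_{y}b(x,y)|u(y)|\Bigr)\bigl(\deg_m(x)+v(x)-\lm\bigr)^{-1},
\]
use it to build a sequence along which $|u|$ grows at least like $\prod_j\bigl|1+(v(x_j)-\lm)/\deg_m(x_j)\bigr|$, and contradict $u\in\ell^2(X,m)$ via the divergence hypothesis. (Your detour through Cauchy--Schwarz is unnecessary; the paper's one-line estimate $\sum_y b(x,y)|u(y)|\le m(x)\deg_m(x)\sup_{y\sim x}|u(y)|$ already gives the greedy step, and it is indeed the weighted-average form that guarantees the existence of a suitable neighbour even when the supremum is not attained.)

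You are right to flag the two delicate points, and the paper's own proof is in fact silent on the first of them (it simply asserts that one ``may inductively choose an infinite path''). But your proposed repair of the confinement case is circular: in Proposition~\ref{p:criterion} the threshold $C$ must be chosen \emph{before} $\lm$, $u$ and hence the confining set $A$ are known, whereas ``$C$ sufficiently small'' so that $1+(v-\lm)/\deg_m>1$ on $A$ requires $\lm<\min_A v$, and $A$ depends on $\lm$ and on the particular solution $u$. As stated, this does not close the gap. Your description of the second point is also slightly off: for $\lm<C$ one always has $\deg_m+v-\lm>0$, so the $\lm$-factor $f_\lm:=1+(v-\lm)/\deg_m$ is \emph{positive}, never negative; the only problematic vertices are those where $f_\al:=1+(v-\al)/\deg_m<0$, and there one needs $f_\lm\ge -f_\al$, i.e.\ $\al+\lm\le 2(\deg_m+v)$, which follows from $\deg_m+v>C$ once $\lm\le 2C-\al$. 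This is what the paper is (somewhat imprecisely) encoding by its choice of $\lm_0$, and it is a clean, non-circular fix, unlike the one you propose for the path-vs-walk issue.
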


\begin{proof} As $Q_{v,\theta}^{(c)}$ is bounded below by some constant $C$, we infer $\deg_{m}+v-\lm >0$ for all $\lm<C$. Thus, if the sums in the assumption diverge for a particular $\al$, then there is $\lm_{0}< -(|C|+|\al|)$ such that  these sums diverge for all $\lm < \lm_{0}$.   Let $u\in\ell^{2}(X,m)\cap\ow F(X)$  be a solution to  the  equation $(\ow L_{v,\theta}-\lm)u=0$ for some $\lm<\lm_{0}$. Then, one easily gets
$$|u(x)|\leq\left(\frac{1}{m(x)}\sum_{y\in X}b(x,y)|u(y)|\right)|\deg_{m}(x)+v(x)-\lm|^{-1},$$
for all $x \in V$. Suppose $u \not\equiv 0$, i.e., there exists an $x_0\in X$ such that $u(x_0) \neq 0$. By the above inequality there is an $x_1\sim x_0$ with
$$|u(x_1)|\geq \left|1+\frac{v(x_0)-\lm}{\deg_{m}(x_0)}\right||u(x_0)|.$$
 Continuing this procedure, we may inductively choose an infinite path $(x_{n})$ which satisfies
$$|u(x_n)| \geq \prod_{i=0}^{n-1}\left|1+\frac{v(x_i)-\lm}{\deg_{m}(x_i)}\right||u(x_0)|.$$
Therefore, we obtain
$$\|u\|^{2}\ge\sum_{n=1}^{\infty}|u(x_{n})|^{2}m(x_{n})\ge \sum_{n=1}^{\infty}m(x_{n})|u(x_{0})|^{2}\prod_{j=0}^{n-1} \left|1+\frac{v(x_{j})-\lm}{\deg_{m}(x_{j})}\right|.$$
This implies $u(x_{0})=0$ by the assumption. As this contradicts $u(x_0) \neq 0$, we  conclude $u\equiv0$. Thus, the statement follows from Proposition~\ref{p:criterion}.
\end{proof}

As we have already remarked, the second criterion is going to deal with the completeness of the weighted graph with respect to some appropriate metric structure.

\begin{definition}A (pseudo) metric $d$ on  $X$ is called a \emph{path (pseudo) metric} for the graph $b$, if there is a map $\si:X\times X\to[0,\infty)$ with the properties $\{\sigma=0\}\subseteq\{b=0\}$ and
\begin{align*}
  d(x,y)=\inf_{x=x_{0}\sim\ldots\sim x_{n}=y}\sum_{j=1}^{n}\si(x_{j-1},x_{j}),\quad\mbox{for all $x,y\in X$}.
\end{align*}
A (pseudo) metric $d$ is called \emph{intrinsic} with respect to $(X,b,m)$, if
\begin{align*}
    \sum_{y\in X}b(x,y)d(x,y)^{2}\leq m(x),\quad\mbox{for all $x\in X$}.
\end{align*}
\end{definition}

We remark that the above definition of intrinsic metrics is adapted to our situation from the abstract Dirichlet space setting of \cite{FLW}. Furthermore, any weighted graph admits an intrinsic metric. For example, one can take the path metric with weights $\si(x,y)=(\deg_m(x)\wedge\deg_m(y))^{-\frac{1}{2}}$ for $ x\sim y$.

Next, we present a  result which has also been suggested to us by O. Milatovic in a private communication. Earlier results of this type for magnetic operators in the continuum with similar kinds of proofs already appeared in  \cite{CdVT}.

\begin{thm}\label{t:essSA2}\emph{(Uniqueness - metric space criterion)} Let $d$ be an intrinsic pseudo metric with respect to the underlying weighted graph.
\begin{itemize}
  \item [(a)]  Assume  $v \in \mathcal{B}_{\theta}$ and that the metric balls with respect to $d$ are all finite. Then the operator $\ow L_{v,\theta} $ has a unique semi-bounded self-adjoint restriction.
  \item [(b)] Assume   $v \in \mathcal{A}_\theta$ and that the underlying graph is locally finite and $(X,d)$ is a complete path metric space. Then the operator ${\ow L_{v,\theta}\vert}_{{C}_c(X)}$ is essentially self-adjoint.
\end{itemize}
\end{thm}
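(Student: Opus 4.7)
The overall plan is to verify the hypothesis of Proposition~\ref{p:criterion} in both cases and then invoke part~(i) of that proposition under~(a) (since $v\in\mathcal{B}_{\theta}$) and part~(ii) under~(b) (local finiteness forces $\ow L_{v,\theta}[C_{c}(X)]\subseteq\ell^{2}(X,m)$, and $v\in\mathcal{A}_{\theta}$). Concretely, I will exhibit $C_{0}\in\IR$ such that for all $\lambda<-C_{0}$ every $u\in\ow F(X)\cap\ell^{2}(X,m)$ satisfying $(\ow L_{v,\theta}-\lambda)u=0$ must vanish identically.

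The engine is an Agmon-type cut-off argument driven by the intrinsic metric. Fix $x_{0}\in X$ and set $\eta_{R}(x):=(1-d(x,x_{0})/R)_{+}$ for $R>0$. Because $d$ is a path pseudo metric, $|\eta_{R}(x)-\eta_{R}(y)|\leq d(x,y)/R$ whenever $x\sim y$. Under~(a) the ball $\{d(\cdot,x_{0})\leq R\}$ is finite by hypothesis; under~(b) a standard discrete Hopf--Rinow argument, valid because $(X,d)$ is a complete locally finite path metric space, again yields finiteness of closed balls. In both cases $\eta_{R}^{2}u\in C_{c}(X)$ is an admissible test function and $\eta_{R}\nearrow 1$ pointwise as $R\to\infty$.

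The next step is to insert $\eta_{R}^{2}u$ into Green's formula (Lemma~\ref{Greens formula}) for the equation $(\ow L_{v,\theta}-\lambda)u=0$ and take real parts. The elementary pointwise identity
\begin{align*}
&\mathrm{Re}\bigl[(u(x)-\e^{\mathrm{i}\theta(x,y)}u(y))\overline{(\eta_{R}(x)^{2}u(x)-\e^{\mathrm{i}\theta(x,y)}\eta_{R}(y)^{2}u(y))}\bigr] \\
&\quad=|\eta_{R}(x)u(x)-\e^{\mathrm{i}\theta(x,y)}\eta_{R}(y)u(y)|^{2}-(\eta_{R}(x)-\eta_{R}(y))^{2}\mathrm{Re}\bigl(u(x)\e^{-\mathrm{i}\theta(x,y)}\overline{u(y)}\bigr)
\end{align*}
converts this test against $\eta_{R}^{2}u$ into the full quadratic form evaluated on the cut-off function $\eta_{R}u$ plus a single error term $E_{R}$. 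Using $2|u(x)||u(y)|\leq|u(x)|^{2}+|u(y)|^{2}$, symmetry in $x,y$, the Lipschitz bound on $\eta_{R}$ and the intrinsic property $\sum_{y}b(x,y)d(x,y)^{2}\leq m(x)$, the error is controlled by $|E_{R}|\leq\|u\|^{2}/(2R^{2})$. Rearranging gives
\[Q^{(c)}_{v,\theta}(\eta_{R}u)-\lambda\|\eta_{R}u\|^{2}\leq\frac{\|u\|^{2}}{2R^{2}}.\]
Since $v\in\mathcal{A}_{\theta}$ directly yields $Q^{(c)}_{v,\theta}\geq-C_{0}\|\cdot\|^{2}$ on $C_{c}(X)$ for a suitable $C_{0}$, choosing $\lambda<-C_{0}$ and letting $R\to\infty$, together with monotone convergence since $\eta_{R}\nearrow 1$, forces $\|u\|=0$.

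The two technical points I expect to require real care are the discrete Hopf--Rinow step ensuring finiteness of closed balls in~(b) and the precise bookkeeping of the ``magnetic cross term'' in the Leibniz-type identity, so that $E_{R}$ depends neither on $\theta$ nor on $v$ and is absorbed purely by the intrinsic metric inequality. Once these are in place, Proposition~\ref{p:criterion} closes both parts simultaneously.
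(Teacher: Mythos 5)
Your proof is correct and follows essentially the same route as the paper: an Agmon-type cut-off in the intrinsic metric, finiteness of balls (via Hopf--Rinow in case (b)), and a ground-state-transform identity whose error term is absorbed by the intrinsic-metric inequality, feeding into Proposition~\ref{p:criterion}. The only difference is cosmetic: you re-derive the ground state transform inline via the pointwise algebraic identity (testing against $\eta_R^2 u$), whereas the paper cites this as Proposition~\ref{mil} and tests with $f\eta_R$, and your cut-off $\eta_R=(1-d(\cdot,x_0)/R)_+$ differs from the paper's $1\wedge(2R-d(\cdot,x_0))_+/R$ but has the same Lipschitz constant.
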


\begin{remark} (a) Theorem~\ref{t:essSA2} is a generalization of \cite[Corollary~1 and Theorem~2]{HKMW} and \cite[Theorem~1.5]{Mi}. While the first reference does not allow magnetic fields and negative potentials, the second one assumes a uniformly bounded vertex degree, a condition that we will avoid by using the concept of intrinsic metrics. The proof works analogously to \cite{Mi}. We refer also to \cite{MT} for results in this direction.\\
(b) In view of the Kato class being contained in $\mathcal{B}_{0}\subseteq \mathcal{B}_{\theta}\subseteq \mathcal{A}_{\theta}$ (cf. \cite[Theorem~3.1]{peter}), Theorem~\ref{t:essSA2}  can be considered in fact as a weighted-graph analogue of the corresponding result for geodesically complete Riemannian manifolds from \cite{bg2}.
\end{remark}

The proof of Theorem~\ref{t:essSA2} given below, is based on the following ground state transform: For any $f=f_1+\mathrm{i}f_2$ with real-valued $f_1,f_{2}\in \ow{{\F}}(X)$ we define
\begin{align*}
     Q^{(f)}(g,h)=\frac{1}{2} \sum_{x,y\in X} b^{(f)}(x,y)\Big(g(x)-g(y)\Big)\ov{\Big(h(x)-h(y)\Big)},\qquad g,h\in {C}_c(X),
\end{align*}
where  $ b^{(f)} (x,y)$ is defined  for $ x,y\in X $ as
\begin{align*}
b(x,y) \Big(\cos (\te(x,y) ) \big(f_{1}(x)f_{1}(y)+f_{2}(x)f_{2}(y)\big)+\sin( \te(x,y))\big(f_{1}(y)f_{2}(x)-f_{1}(x)f_{2}(y)\big)\Big).
\end{align*}

\begin{prop}\label{mil} Assume $f\in \ow{{\F}}(X)$ and $\lm\in \R$ are such that $(\ow L_{v,\theta}-\lm )f=0$. Then, for all $g\in {C}_c(X)$, one has
\begin{align*}
    Q^{(c)}_{v,\theta}(fg,fg)= Q^{(f)}(g,g)+\lm\|fg\|^{2}.
\end{align*}
\end{prop}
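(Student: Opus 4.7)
The plan is to compute both sides via Green's formula and then match terms algebraically. First, since $g\in\C_c(X)$, the function $|g|^2 f$ also lies in $\C_c(X)$. Testing the eigenvalue equation $\ow L_{v,\theta}f=\lm f$ against $|g|^2 f$ and invoking Green's formula (Lemma~\ref{Greens formula}) yields
\begin{align*}
    \lm\|fg\|^2 \;=\; \langle \ow L_{v,\theta}f,\,|g|^2 f\rangle \;=\; Q^{(c)}_{v,\theta}(f,\,|g|^2 f),
\end{align*}
so the proposition reduces to showing $Q^{(c)}_{v,\theta}(fg,fg)-Q^{(c)}_{v,\theta}(f,|g|^2 f)=Q^{(f)}(g,g)$.

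Next I would expand both sesquilinear forms from their definitions. The potential contributions coincide (both reduce to $\sum_x v(x)|f(x)|^2|g(x)|^2 m(x)$) and cancel. Setting $w(x,y):=\e^{-\mathrm{i}\theta(x,y)}f(x)\overline{f(y)}$ and expanding both $|f(x)g(x)-\e^{\mathrm{i}\theta(x,y)}f(y)g(y)|^2$ and the product $(f(x)-\e^{\mathrm{i}\theta(x,y)}f(y))\overline{(|g(x)|^2 f(x)-\e^{\mathrm{i}\theta(x,y)}|g(y)|^2 f(y))}$, the diagonal pieces $|f(x)|^2|g(x)|^2$ and $|f(y)|^2|g(y)|^2$ cancel pairwise and the remaining kernel collapses to the form $(g(x)-g(y))\bigl(\overline{w(x,y)}\,\overline{g(x)}-w(x,y)\,\overline{g(y)}\bigr)$. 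Decomposing $w=\alpha+\mathrm{i}\beta$ with real $\alpha,\beta$, a direct comparison with the definition of $b^{(f)}$ shows $\alpha(x,y)=b^{(f)}(x,y)/b(x,y)$; after symmetrizing under $x\leftrightarrow y$ (using $b(y,x)=b(x,y)$ and $w(y,x)=\overline{w(x,y)}$, which follows from $\theta(y,x)=-\theta(x,y)$), the real-symmetric $\alpha$-piece accounts for exactly $\tfrac{1}{2}\sum_{x,y}b^{(f)}(x,y)|g(x)-g(y)|^2=Q^{(f)}(g,g)$.

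The main technical obstacle is to check that the contribution coming from $\beta$ vanishes after summation. I would extract this from the eigenvalue equation a second time: multiplying $\ow L_{v,\theta}f(x)=\lm f(x)$ pointwise by $m(x)\overline{f(x)}$ produces the manifestly real quantity $m(x)(\deg_m(x)+v(x)-\lm)|f(x)|^2$, whose vanishing imaginary part translates, via the expansion of $w$, into the pointwise identity $\sum_{y}b(x,y)\beta(x,y)=0$ for every $x\in X$. Plugging this identity into the $\beta$-remainder after symmetrization eliminates it, yielding the stated formula. Absolute convergence of all sums involved is guaranteed by $g\in\C_c(X)$ and $f\in\ow{\F}(X)$, so no additional analytic care is required.
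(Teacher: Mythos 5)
Your reduction via Green's formula is sound, and the algebra is carried out correctly up to the decomposition $w=\alpha+\mathrm{i}\beta$: the kernel really does collapse to $(g(x)-g(y))\bigl(\overline{w}\,\overline{g(x)}-w\,\overline{g(y)}\bigr) = \alpha(x,y)|g(x)-g(y)|^2 - \mathrm{i}\,\beta(x,y)(g(x)-g(y))(\overline{g(x)}+\overline{g(y)})$, the identification $\alpha = b^{(f)}/b$ is right, and the pointwise identity $\sum_y b(x,y)\beta(x,y)=0$ follows from the eigenvalue equation exactly as you say. The gap is in your last step. Writing $(g(x)-g(y))(\overline{g(x)}+\overline{g(y)}) = \bigl(|g(x)|^2-|g(y)|^2\bigr) + 2\mathrm{i}\,\mathrm{Im}\bigl(g(x)\overline{g(y)}\bigr)$, the identity $\sum_y b(x,y)\beta(x,y)=0$ (together with the antisymmetry of $\beta$) kills only the first summand; after multiplying by $-\mathrm{i}\beta(x,y)$ and summing against $\tfrac12 b(x,y)$ one is left with the genuinely nonvanishing remainder $\sum_{x,y}b(x,y)\beta(x,y)\,\mathrm{Im}\bigl(g(x)\overline{g(y)}\bigr)$. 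This is a symmetric sum (both $\beta$ and $\mathrm{Im}(g(x)\overline{g(y)})$ are antisymmetric), and there is no pointwise identity that forces it to vanish.

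Concretely, take $X=\{1,2,3\}$ the unweighted triangle, $m\equiv 1$, $v=0$, $\theta=0$, $f=(1,\omega,\omega^2)$ with $\omega=\e^{2\pi\mathrm{i}/3}$ (so $\ow L f = 3f$), and $g=(1,\mathrm{i},0)$. One computes $Q^{(c)}_{0,0}(fg,fg)=4+\sqrt3$, $\lambda\|fg\|^2=6$ and $Q^{(f)}(g,g)=-2$, so the claimed identity fails by exactly $\sqrt3$, which is the value of the surviving $\beta$-remainder. So as a proof of the proposition for all complex $g\in\C_c(X)$, your argument has a genuine hole at the step "plugging this identity into the $\beta$-remainder eliminates it", and in fact that identity cannot be repaired: the statement is simply false for complex $g$. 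For \emph{real-valued} $g$ the troublesome term $\mathrm{Im}(g(x)\overline{g(y)})$ vanishes identically, your argument closes, and the proposition holds; this is also all the paper ever uses (the cutoffs $\eta_R$ in the proof of Theorem~\ref{t:essSA2} are real). The cleanest fix to your write-up would be to restrict to real $g$ from the outset (noting that this suffices for the application), or, if you want a statement valid for all $g\in\C_c(X)$, to replace $Q^{(f)}(g,g)$ on the right-hand side by the full expression $Q^{(f)}(g,g)+\sum_{x,y}b(x,y)\beta(x,y)\,\mathrm{Im}\bigl(g(x)\overline{g(y)}\bigr)$.
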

\begin{proof} The proof follows by direct calculation (cf. \cite[Proposition~3.5]{Mi} or \cite[Proposition~3.2]{HK}).
\end{proof}

\begin{proof}[Proof of Theorem~\ref{t:essSA2}]
Let $C$ be such that $q_{v_{-}}(f)\leq (1-\varepsilon) Q_{v_{+},\theta}^{(c)}(f)+C\|f\|^{2}$ for $\in f\in C_{c}(X)$ with $\varepsilon>0 $ in case (a) and $\varepsilon=0$ in the case (b). 
Let  $f\in\ell^{2}(X,m)\cap\ow F(X)$ and  $\lm < -C+1 $ be such that $ (\ow L_{v,\theta}-\lm)f=0$.
We fix some $x_{0}\in X$ and denote the $R$-ball, $R>0$, with respect to $d$ with center $x_{0}$ by $B_{R}$. Let $\eta_{R}:X\to[0,1]$,  be given by
\[
\eta_{R}(x):= 1\wedge\frac{(2R-d(x,x_{0}))_{+}}{R}\>\>\text{ for $x\in X$ }.
\]
By a Hopf-Rinow  type theorem,  \cite[Theorem~A1]{HKMW}  (cf. \cite[ Section~6]{Mi}), the balls are  finite under the metric completeness assumption in (b). Hence,  finiteness of the balls in (a) and (b) implies $\eta_R \in {C}_c (X)$. Then using ${\eta_{R}\vert}_{B_{R}}\equiv1$, the semi-boundedness  of ${Q_{v,\theta}^{(c)}}$ by $\lm + 1$, and Proposition~\ref{mil}, we obtain
\begin{align*}
\|f 1_{B_{R}}\|^{2}&\le\|f \eta_{R}\|^{2}\leq  Q^{(c)}_{v,\te}(f\eta_{R},f\eta_{R}) - \lm\|f \eta_{R}\|^{2} =  Q^{(f)}( \eta_{R},\eta_{R}).
\end{align*}
Employing  the  inequalities
$ b^{(f)}(x,y)\le b(x,y)(|f(x)|^{2}+|f(y)|^{2})$, $(\eta_{R}(x)-\eta_{R}(y))\leq d(x,y)/R$
and the intrinsic metric property, yields
\begin{align*}
\ldots\le \sum_{x\in X} |f(x)|^{2}\sum_{y\in X}b(x,y)(\eta_{R}(x)-\eta_{R}(y))^{2}\leq \frac{1}{R^{2}}\sum_{x\in X} |f(x)|^{2}\sum_{y\in X}b(x,y)d(x,y)^{2}\leq\frac{1}{R^{2}}\|f\|^{2}.
\end{align*}
Letting $R\to\infty$ shows that $\|f\|=0$. Thus, any solution $f$ in $\ell^{2}(X,m)\cap\ow F(X)$ to $(\ow L_{v,\theta}-\lm) f=0$ is trivial. Thus, (a) follows directly  by Proposition~\ref{p:criterion} while for (b) we additionally have to invoke that local finiteness implies  $\ow L_{v,\theta}[{C}_c(X)]\subseteq \ell^{2}(X,m)$.
\end{proof}


\subsection{Semigroup convergence}\label{semi}

We close this section with a result on the convergence of certain geometrically defined restrictions of the semigroups $(\mathrm{e}^{-tL_{v,\theta}})_{t\geq 0}$. This result will be central for the proof of the Feynman-Kac-It\^{o} formula.

We start by introducing some notation that will be useful in the sequel: For any \emph{finite} subset $U\subseteq X$, we denote with slight abuse of notation the restriction of $m$ to $U$ also by $m$ and we define $Q^{(U)}_{v,\theta}$ to be the restriction of $Q^{(c)}_{v,\theta}$ to
$$
\ell^2(U,m)=C_c(U)=C(U).
$$

Here,  the finiteness of $U$ implies that $Q^{(U)}_{v,\theta}$ is automatically closed. Let $L^{(U)}_{v,\theta}$ be the operator corresponding to $Q^{(U)}_{v,\theta}$. We have a canonic inclusion operator
$$
\iota_U: \ell^2(U,m)\hookrightarrow \ell^2(X,m)
$$
which comes from extending functions to zero away from $U$, and its adjoint will be denoted with $\pi_U:=\iota^*_U$.

\begin{definition}\label{exhaus} A sequence $(X_{n})_{n\in\IN}$ of finite sets $X_{n}\subseteq X$ is called an \emph{exhausting sequence} for  $X$, if $X_{n}\subseteq X_{n+1}$ for all $n$ and if $X=\bigcup_{n\in\IN}X_{n}$.
\end{definition}

The following geometric approximation is based on the Mosco convergence of the quadratic forms.

\begin{prop} \label{approx1}
Suppose $Q^{(c)}_{v,\theta}$ is semi-bounded and closable and let $(X_n)_{n\in\IN}$ be an exhausting sequence. Then, for all $t\geq 0$, one as
$$\iota_{X_n} \mathrm{e}^{-tL_{v,\theta}^{(X_n)}}\pi_{X_n} \to \mathrm{e}^{-tL_{v,\theta}} \text{ strongly in } \ell^2(X,m) \mbox{ as }n\to\infty.$$
\end{prop}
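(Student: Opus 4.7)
The plan is to deduce the strong semigroup convergence from Mosco convergence of the corresponding closed quadratic forms. First I would fix a lower bound $C\in\IR$ of $Q^{(c)}_{v,\theta}$; since each $Q^{(X_n)}_{v,\theta}$ is merely the restriction of $Q^{(c)}_{v,\theta}$ to $C_c(X_n)\subseteq C_c(X)$, the same $C$ is a common lower bound for all $Q^{(X_n)}_{v,\theta}$. Shifting every form by $C\aV{\cdot}^{2}$ reduces the claim to the nonnegative case, as the corresponding semigroups differ only by the uniform factor $\mathrm{e}^{-tC}$. I would then extend each $Q^{(X_n)}_{v,\theta}$ to a closed nonnegative form $\ow Q_{n}$ on the whole of $\ell^{2}(X,m)$ by declaring $\ow Q_{n}(f) := Q^{(X_n)}_{v,\theta}(\pi_{X_n}f)$ when $f=\iota_{X_n}\pi_{X_n}f$ and $\ow Q_{n}(f) := +\infty$ otherwise. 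The self-adjoint operator associated with $\ow Q_{n}$ on $\ell^{2}(X,m)$ has semigroup exactly $\iota_{X_n}\mathrm{e}^{-tL^{(X_n)}_{v,\theta}}\pi_{X_n}$, so the statement reduces to strong convergence $\ow Q_{n}\to Q_{v,\theta}$ in the sense of Mosco.

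The bulk of the work is the verification of the two Mosco conditions. For the \emph{recovery sequence condition}: given $f\in D(Q_{v,\theta})$, the density of $C_{c}(X)$ in $D(Q_{v,\theta})$ with respect to the form norm yields approximants $\ph_{k}\in C_{c}(X)$ with $\ph_{k}\to f$ in form norm, and since each $\ph_{k}$ has finite support there is some $n(k)$ with $\supp\ph_{k}\subseteq X_{n(k)}$; a standard diagonal extraction then produces $f_{n}\in C_{c}(X_{n})$ with $f_{n}\to f$ in form norm, whence $\ow Q_{n}(f_{n})=Q^{(c)}_{v,\theta}(f_{n})\to Q_{v,\theta}(f)$ (the case $f\notin D(Q_{v,\theta})$ is trivial). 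For the \emph{liminf condition}: I would take a weakly convergent sequence $f_{n}\to f$ in $\ell^{2}(X,m)$ with $\liminf_{n}\ow Q_{n}(f_{n})<\infty$, pass to a subsequence achieving the liminf, and observe that each $f_{n}$ is then supported in $X_{n}$, hence $f_{n}\in C_{c}(X)\subseteq D(Q_{v,\theta})$ and $Q_{v,\theta}(f_{n})=\ow Q_{n}(f_{n})$ is uniformly bounded. Because $D(Q_{v,\theta})$ under its form norm is a Hilbert space whose inclusion into $\ell^{2}(X,m)$ is continuous, a further subsequence of $(f_{n})$ converges weakly in $D(Q_{v,\theta})$ to some $g$, which must equal $f$ by continuity of the inclusion; weak lower semicontinuity of the form norm then gives $Q_{v,\theta}(f)\leq\liminf_{n}\ow Q_{n}(f_{n})$.

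Finally I would invoke the standard fact that Mosco convergence of closed nonnegative quadratic forms on a Hilbert space is equivalent to strong convergence of the associated resolvents, which by functional calculus (approximation of $\mathrm{e}^{-t\lm}$ uniformly on compact subsets of $[0,\infty)$ by linear combinations of resolvents) implies strong convergence of the associated semigroups for every fixed $t\geq 0$. Undoing the shift by $C\aV{\cdot}^{2}$ then delivers the proposition. I expect the liminf condition (M1) to be the only real obstacle, since one must upgrade weak $\ell^{2}$-convergence together with boundedness of form values into control compatible with the form structure; this is handled by the Hilbert-space structure of $D(Q_{v,\theta})$ under its form norm. The recovery condition and the passage from resolvents to semigroups are routine.
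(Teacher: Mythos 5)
Your argument is correct and follows essentially the same route as the paper's one-line proof: reduce the statement to Mosco convergence of the finite-volume forms toward $Q_{v,\theta}$, get the liminf condition from closedness of $Q_{v,\theta}$ (equivalently weak lower semicontinuity of a closed nonnegative form on its form-domain Hilbert space) and the recovery-sequence condition from the fact that $C_c(X)$ is a core, and then invoke the Mosco-to-strong-semigroup implication. The only difference is bookkeeping: the paper uses the \emph{generalized} Mosco convergence of Chen--Kim--Kumagai on the varying Hilbert spaces $\ell^2(X_n,m)$ via the maps $\iota_{X_n},\pi_{X_n}$ (Theorem~\ref{mosco.char}), whereas you extend each $Q^{(X_n)}_{v,\theta}$ by $+\infty$ to a (not densely defined) closed form on $\ell^2(X,m)$, which is an equivalent formulation provided one adopts the usual convention that the associated semigroup vanishes on the orthogonal complement of $\iota_{X_n}\ell^2(X_n,m)$.
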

\begin{proof} By Theorem~\ref{mosco.char} it suffices to show that the forms $Q^{(X_n)}_{v,\theta}$ converge to $Q_{v,\theta}$ as $n\to\infty$ in the generalized Mosco sense. Part (a) of Definition~\ref{mosco} follows from the closedness of $Q_{v,\theta}$ while part (b) is due to the fact that $C_c(X)$ is a core for $Q_{v,\theta}$ by definition.
 \end{proof}

%

\section{Stochastic processes on discrete sets}\label{proc}

Let us introduce the necessary probabilistic framework. {That is we construct a Markov process. Later in Section~\ref{haup} we show that this Markov process appears in the Feynman-Kac-It\^o formula and is therefore related to the semigroups of the operators considered above. Furthermore, we construct a discrete stochastic line integral with respect to this process.}

We take a discrete time Markov chain $(Y_n)_{n\in\IN}$ with state space $X$ which satisfies
$$\mathbb{P}\left(Y_n = x \left| Y_{n-1} = y\right)\right. = \frac{b(x,y)}{{\rm deg}_1(y)}\>\>\text{ for all $n\ge1$},$$
where in the following $(\Omega,\mathcal{F},\mathbb{P})$ is some fixed probability space, $\deg_{1}(x)=\sum_{y\in X}b(x,y)$, $x\in X$,  and $\N=\{0,1,2,\ldots\}$. Let $(\xi_n)_{n\in\IN}$ be a sequence of independent exponentially distributed random variables of parameter $1$ which are also independent of $(Y_n)_{n\in\IN}$. For $n\geq 1$, we define the sequence of stopping times 
$$J_n: = \frac{1}{{\rm deg}_m(Y_{n-1})}\xi_n, \quad\tau_n: = J_1 + \cdots + J_n,$$
with the convention $\tau_0:= 0$. { Furthermore, we define the stopping time
$$
\tau: = \sup_{n\in\IN} \tau_n:\Omega\longrightarrow [0,\infty],
$$
where obviously $\tau>0$ is satisfied $\mathbb{P}$-almost surely.}

With these preparations, we define the jump process
\[
\mathbb{X}:[0,\tau)\times \Omega\longrightarrow X,\quad{\mathbb{X}\vert}_{[\tau_n,\tau_{n+1})\times \Omega}\>:=Y_n\quad\mbox{ for all $n\in\IN$}.
\]
Note that $\mathbb{X}$ is maximally defined and that the $\tau_n$\rq{}s are precisely the jump times of $\mathbb{X}$. If $\mathbb{P}_x:=\mathbb{P}(\bullet\mid\mathbb{X}_0=x)$, and if $\mathcal{F}_{*}$ denotes the filtration $\mathcal{F}_t=\sigma(\mathbb{X}_s\mid s\leq t)$, $t\ge0$, corresponding to $\mathbb{X}$, then the tuple
\begin{align*}
 (\Omega,\mathcal{F},\mathcal{F}_*,\mathbb{X} ,(\mathbb{P}_x)_{x\in X})
\end{align*}
is a (reversible) strong Markov process (see for example Theorem 6.5.4 in \cite{No} for a proof).

Let us denote the number of jumps of $\mathbb{X}$ until $t$ by $N(t)$, i.e.,
$$N(t) = \sup\{n\in\IN\mid \tau_n \leq t \}.$$

The following definitions will be central for this paper. We define two random variables by
\[
\int_0^t \theta( \Id \mathbb{X}_s):= \sum_{n= 1}^{N(t)}\theta(\mathbb{X}_{\tau_{n-1}}, \mathbb{X}_{\tau_{n}}): \{t<\tau \}\longrightarrow \IR
\]
and
\[
\ISS_t(v,\theta|\mathbb{X}):=\mathrm{i}\int_0^t \theta( \Id \mathbb{X}_s)-\int_0^t v( \mathbb{X}_s)\Id s:\{t<\tau\}\longrightarrow \IC.
\]
{In particular, $\ISS_t(v,0|\mathbb{X})$ can be seen as} the usual additive Feynman-Kac functional
$\ISS_t(v,0|\mathbb{X})=-\int_0^t v( \mathbb{X}_s)\Id s$.

The well-definedness of $\int_0^t \theta( \Id \mathbb{X}_s)$ and $\int_0^t v( \mathbb{X}_s)\Id s$ (and thus of $\ISS_t(v,\theta|\mathbb{X})$) follows from the simple observation $\{N(t) < \infty\}=\{t < \tau\}$.

Furthermore, it is easily seen that the processes
\[
\int_0^{\bullet} \theta( \Id \mathbb{X}_s):[0,\tau)\times \Omega\longrightarrow  \IR, \quad\ISS_{\bullet}(v,\theta|\mathbb{X}):[0,\tau)\times \Omega\longrightarrow \IC
\]
are $\IFF_*$-semimartingales under $\mathbb{P}_x$ with lifetime $\tau$, which motivates the following definition.

\begin{definition} The process $\int_0^{\bullet} \theta( \Id \mathbb{X}_s)$ is called the \emph{stochastic line integral of $\theta$ along $X$}, and $\ISS_{\bullet}(v,\theta;\mathbb{X})$ is called the \emph{Euclidean action corresponding to $\theta$ and $v$.}
\end{definition}

Here, the notions \lq\lq{}line integral\rq\rq{} and \lq\lq{}Euclidean action\rq\rq{} are both motivated from the manifold setting \cite{Em}, where in the first case $\theta$ is interpreted as a $1$-form on the graph $X$. We refer the reader to \cite{Mi} for a justification of the latter geometric interpretation.

{
Let us end this section by putting the process $\mathbb{X}$ into perspective.

\begin{remark}It is certainly well known that the constructed process $\mathbb{X}$ is related to semigroup $\e^{-tL}$ of the operator $L=L_{0,0}$ introduced in the previous section via the formula
\begin{align*} \e^{-tL}f(x)  =\mathbb{E}_x\left[1_{\{t<\tau\}} f(\mathbb{X}_t)\right]. \end{align*}
In any case, this formula is a special case of the Feynman-Kac-It\^o formula proven in the next section.\\
This formula has a simple but nevertheless important consequence, namely, one has
\begin{align*}
\e^{-t L}(x,y)m(y) =\mathbb{P}_x(\mathbb{X}_t=y)\>\>\text{ for all $t>0$, $x,y\in X$},
\end{align*}
where the kernel $\e^{-t L}(x,y)$, $x,y\in X$, of $\e^{-tL}$ exists due to discreteness of the space. \\
In particular, it follows that the process $(\Omega,\mathcal{F},\mathcal{F}_*,\mathbb{X} ,(\mathbb{P}_x)_{x\in X})$ is non-explosive, i.e.,
\[
\mathbb{P}_x(\tau=\infty)=1\quad\text{ for all $x\in X$},
\]
if and only if one has
\begin{align*}
\sum_{y\in X }\e^{-t L}(x,y)m(y)=1\quad\text{ for all $t\geq 0, x\in X$.}
\end{align*}
This follows from combining the formula $\e^{-t L}(x,y)m(y) =\mathbb{P}_x(\mathbb{X}_t=y)$ with $
\{\tau=\infty\}=\bigcap_{n\in\IN} \{\tau> n\}$ keeping $\mathbb{P}_x(\tau>0)=1$ in mind.\\
In summary, the Dirichlet form $Q=Q_{0,0}$ is stochastically complete, i.e., $\mathrm{e}^{-tL}1=1$, if and only if the process is non-explosive, i.e., $\mathbb{P}_x(\tau=\infty)=1$, $x\in X$, a well-known fact which is found already in \cite[Exercise~4.5.1]{FOT}. In case the underlying process is non-explosive, some of the considerations below become somewhat simpler, nevertheless, there are many  graphs where explosion can occur, see e.g. \cite{KL,Woj1,Woj2}.
\end{remark}
}



\section{The Feynman-Kac-It\^{o} Formula}\label{haup}
\subsection{Statement}

The following theorem is the main result of this paper.

\begin{thm}\label{main}\emph{(Feynman-Kac-It\^{o} formula)}
Let $v\in \mathcal{A}_{0}$.
Then for any $f \in \ell^2(X,m)$, $t\geq 0$ and $x\in X$, one has
\begin{align}\label{fki}\tag{FKI}
\e^{-tL_{v,\theta}}f(x) = \mathbb{E}_x\left[1_{\{t<\tau\}}\e^{\ISS_t(v,\theta|\mathbb{X})} f(\mathbb{X}_t)\right].
\end{align}
\end{thm}

{
The rest of the section is dedicated  to the proof of Theorem~\ref{main}. The proof is divided into several parts:
\begin{itemize}
  \item [] \emph{Part 1:} We prove (\ref{fki}) for finite subgraphs in Theorem~\ref{p:finite}. Here, we use the explicit form of the process $\mathbb{X}$.
  \item []\emph{Part 2:} We show (\ref{fki}) for the case where $Q_{v,\theta}^{(c)}$ and  $Q_{v,0}^{(c)}$ are both closable, Theorem~\ref{p:FKI_closable}. Here, we use that their closures can be well approximated by restrictions to finite subgraphs, see  Proposition~\ref{approx1}.
  \item []\emph{Part 3:} Finally, we show that the forms $Q_{v,\theta}^{(c)}$ and  $Q_{v,0}^{(c)}$ are closable for $v\in \mathcal{A}_{0}$, Theorem~\ref{t:closable} proven in Section~\ref{s:closable_proof}. Here, we use (\ref{fki}) for  potentials whose negative part is bounded (a case which is included in Theorem~\ref{p:FKI_closable} by Proposition~\ref{p:closable}).
\end{itemize}
 }

{
\begin{remark}
(a) It should be noted that we make \emph{no assumptions} on the underlying weighted graph, the magnetic potential $\theta$ and the positive part $v_{+}$ of  $v$. The only assumption on $v_{-}$ is semi-boundedness of the non-magnetic form. We believe that this setting should actually cover all possible applications.

(b) As we have already remarked in the strategy of the proof above, we are actually going to prove the following fact in Theorem~\ref{p:FKI_closable} below: \emph{Formula (\ref{fki}) holds true, if $Q_{v,0}^{(c)}$ and $Q_{v,\theta}^{(c)}$ are closable and semi-bounded.} The latter statement is slightly more general than Theorem~\ref{main}. However, we believe that Theorem~\ref{p:FKI_closable} itself is not of any practical importance, as there is no general machinery to check its assumptions on $v$ directly (whereas $v\in\mathcal{A}_{0}$ can typically checked much more directly; cf. Example~\ref{kkato} and~\ref{ex:deg}). This is the motivation for declaring Theorem~\ref{main} to be our main result. These observations are fully reflected by the fact that the actual derivation of Theorem~\ref{main} from Theorem~\ref{p:FKI_closable} requires some considerable extra work.
\end{remark}
}

\subsection{Proof for finite subgraphs}

For a  finite subset $U\subset X$, we recall the notation from Section~\ref{semi}  and let
\[
\tau_U:=\left.\inf\{s\geq 0\right| \ \mathbb{X}_s\in X\setminus U\}
\]
be the first exit time of $\mathbb{X}$ from $U$, which is a $\IFF_*$-stopping time. The goal of this subsection is to prove the following proposition, which is the main tool in the proof of Theorem~\ref{main}, but is in fact of an independent interest (see also the proof of Proposition~\ref{mon} below). Here, it should again be noted that in view of the finiteness of $U$, the potentials may be arbitrary.

\begin{thm}\label{p:finite} Let $U\subseteq X$ be finite. Then for all $f\in \ell^{2}(U,m)$, $x\in U$, $t\ge0$, one has
\begin{align*}
    \e^{-tL^{(U)}_{v,\theta}}f(x)=\mathbb{E}_x\left[1_{\{t<\tau_U\}} \e^{\ISS_t(v,\theta|\mathbb{X})} f(\mathbb{X}_t)\right].
\end{align*}
\end{thm}

The proof of the proposition above is based on three auxiliary lemmas.

\begin{lemma}\label{l:semigroup} Let $U \subseteq X$ be finite. Then, $(T_t(v,\theta,U))_{t\ge0}$
defined for $f\in\ell^{2}(U,m)$ by
\begin{align*}
T_t(v,\theta,U)f(x):=\mathbb{E}_x\left[1_{\{t<\tau_U\}} \e^{\ISS_t(v,\theta|\mathbb{X})} f(\mathbb{X}_t)\right],\quad x\in U,t\ge0,
\end{align*}
is a strongly continuous semigroup of bounded operators on $\ell^{2}(U,m)$.
\end{lemma}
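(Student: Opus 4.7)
The plan is to verify three properties in turn: boundedness on $\ell^2(U,m)$, the semigroup identity $T_{s+t} = T_s T_t$, and strong continuity at $t=0$. Crucially, the finiteness of $U$ makes $\ell^2(U,m)$ finite-dimensional and $v|_U$ bounded, which simplifies many estimates.

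\textbf{Boundedness.} On the event $\{t<\tau_U\}$ the entire path $(\mathbb{X}_r)_{0\le r\le t}$ lies inside $U$, so
\[
\bigl|\e^{\ISS_t(v,\theta|\mathbb{X})}\bigr| = \e^{-\int_0^t v(\mathbb{X}_r)\Id r} \le \e^{t M},
\]
where $M := \max_{x\in U} v_-(x) < \infty$. Writing $p^{U}_t(x,y) := \mathbb{P}_x(\mathbb{X}_t=y,\,t<\tau_U)$, the standard $m$-symmetry of the part process on $U$ (which coincides with the semigroup of $-L_{0,0}^{(U)}$) gives $p^U_t(x,y)m(x) = p^U_t(y,x)m(y)$ and $\sum_y p^U_t(x,y)\le 1$. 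A Cauchy-Schwarz estimate then yields
\[
\|T_t(v,\theta,U)f\|^2 \le \e^{2tM}\|f\|^2,
\]
so $T_t(v,\theta,U)$ is bounded on $\ell^2(U,m)$.

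\textbf{Semigroup property.} Fix $s,t\ge 0$. I will decompose each ingredient of $T_{s+t}f(x)$ across time $s$. For the time integral, $\int_0^{s+t}v(\mathbb{X}_r)\Id r = \int_0^s v(\mathbb{X}_r)\Id r + \int_0^t v(\mathbb{X}_{s+r})\Id r$. For the line integral, since the jump times of the shifted process $(\mathbb{X}_{s+\cdot})$ are the jump times of $\mathbb{X}$ in $(s,\infty)$ translated by $-s$, the sum defining $\int_0^{s+t}\theta(\Id\mathbb{X}_r)$ splits as $\int_0^s\theta(\Id\mathbb{X}_r) + \int_0^t\theta(\Id\mathbb{X}_{s+\cdot})_r$. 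Finally, $\{s+t<\tau_U\} = \{s<\tau_U\}\cap\{\mathbb{X}_{s+r}\in U\text{ for all }r\in[0,t]\}$. Conditioning on $\IFF_s$ and applying the Markov property at the deterministic time $s$, the post-$s$ factor reduces to $T_t(v,\theta,U)f(\mathbb{X}_s)$, and the outer expectation produces $T_s(v,\theta,U)\bigl(T_t(v,\theta,U)f\bigr)(x)$.

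\textbf{Strong continuity at zero.} Fix $f\in\ell^2(U,m)$ and $x\in U$. As $t\downarrow 0$, one has $\tau_U\ge J_1>0$ almost surely under $\mathbb{P}_x$, and on $\{t<\tau_1\}$ we have $\mathbb{X}_r=x$ for all $r\le t$, so $\ISS_t(v,\theta|\mathbb{X})\to 0$ and $\mathbb{X}_t\to x$ almost surely. Using the uniform bound $|\e^{\ISS_t}|\le \e^{tM}$ and dominated convergence, $T_t(v,\theta,U)f(x)\to f(x)$. Since $U$ is finite, pointwise convergence on $U$ is $\ell^2(U,m)$-convergence.

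The main technical point is the semigroup identity, whose proof hinges on the observation that the stochastic line integral, although path-dependent and defined through jump times, splits additively under the Markov shift at a deterministic time; the remaining assertions are essentially routine once $U$ is assumed finite.
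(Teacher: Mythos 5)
Your proof is correct and follows essentially the same route as the paper's (which only sketches the argument: boundedness is called trivial, the semigroup property is attributed to the Markov property, and strong continuity to the boundedness of the integrand and right-continuity of $\mathbb{X}$). Your write-up simply supplies the details the paper leaves implicit — the $m$-symmetry/Cauchy--Schwarz bound for boundedness (though finite-dimensionality of $\ell^2(U,m)$ already makes any linear map bounded), the additive splitting of the stochastic line integral across a deterministic time $s$ before invoking the Markov property, and dominated convergence plus finite-dimensionality for strong continuity.
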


\begin{proof} The asserted boundedness is trivial and the semigroup property follows from the strong Markov property of $\mathbb{X}$. By the semigroup property it is enough to check strong continuity at $t=0$, which can be easily checked using the boundedness of the integrand and the right continuity of $\mathbb{X}$.
\end{proof}

\begin{lemma} \label{error} Let $f \in {C}_c(X)$, $t>0$, and let the function $\varphi_{t,f}:X\to \IC$ be defined by
$$\varphi_{t,f}(x) := \frac{1}{t}\mathbb{E}_x\left[1_{\{2\leq N(t)<\infty\}} f(\mathbb{X}_t) \right]. $$
Then, for all $x\in X$, one has $\varphi_{t,f}(x)\to 0$ as $t\searrow 0$.
\end{lemma}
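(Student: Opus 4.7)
The plan is to bound $|\varphi_{t,f}(x)|$ by a constant multiple of $t^{-1}\mathbb{P}_x(\tau_2 \leq t)$ and then verify that $\mathbb{P}_x(\tau_2 \leq t) = o(t)$ as $t \searrow 0$. Since $f \in C_c(X)$ is bounded and $\{2 \leq N(t) < \infty\} \subseteq \{N(t) \geq 2\} = \{\tau_2 \leq t\}$, the first reduction is immediate:
\[
|\varphi_{t,f}(x)| \leq \frac{\|f\|_{\infty}}{t}\,\mathbb{P}_x(\tau_2 \leq t).
\]

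For the second step I would condition on $Y_1$. By construction of the process, conditional on $Y_0 = x$ and $Y_1 = y$, the inter-jump times $J_1$ and $J_2$ are independent exponentially distributed random variables with parameters $\deg_m(x)$ and $\deg_m(y)$, respectively, while $\mathbb{P}_x(Y_1 = y) = b(x,y)/\deg_1(x)$. Using the trivial inclusion $\{J_1 + J_2 \leq t\} \subseteq \{J_1 \leq t\} \cap \{J_2 \leq t\}$ together with conditional independence, one obtains
\[
\mathbb{P}_x(\tau_2 \leq t) \leq \bigl(1 - \e^{-\deg_m(x)t}\bigr)\sum_{y \in X}\frac{b(x,y)}{\deg_1(x)}\bigl(1 - \e^{-\deg_m(y)t}\bigr).
\]

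Dividing both sides by $t$, the prefactor $t^{-1}\bigl(1 - \e^{-\deg_m(x)t}\bigr)$ tends to the finite number $\deg_m(x)$. In the remaining sum, each summand tends to zero pointwise in $y$ as $t \searrow 0$, and it is dominated by $b(x,y)/\deg_1(x)$, a summable majorant of total mass $1$. Dominated convergence therefore forces the sum to tend to zero, so the whole right-hand side is $o(t)$, which completes the argument.

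The main subtlety is that $(X,b)$ need not be locally finite, and consequently the weighted degrees $\deg_m(y)$ may be unbounded as $y$ ranges over the neighbors of $x$, so one cannot simply Taylor expand $1 - \e^{-\deg_m(y)t}$ uniformly in $y$. The trivial estimate $1 - \e^{-\deg_m(y)t} \leq 1$ is precisely what supplies the $y$-independent dominant needed to justify passage to the limit under the (possibly infinite) sum over $y$.
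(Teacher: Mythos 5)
Your proof is correct, and it takes a genuinely different route from the paper's. The paper writes
\[
\frac{1}{t}\mathbb{P}_x(N(t)\geq 2) = \frac{1-\mathbb{P}_x(N(t)=0)}{t} - \frac{\mathbb{P}_x(N(t)=1)}{t},
\]
computes $\mathbb{P}_x(N(t)=0) = \e^{-\deg_m(x)t}$ explicitly, and then computes $\mathbb{P}_x(N(t)=1)$ exactly by conditioning on the target $\mathbb{X}_{\tau_1}=y$ and evaluating the law of $J_1+J_2$ (a sum of two exponentials, which requires splitting into the cases $\deg_m(x)\neq\deg_m(y)$ and $\deg_m(x)=\deg_m(y)$). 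Dominated convergence then shows both terms tend to $\deg_m(x)$, and the result comes from cancellation of these two limits. You instead bound $\mathbb{P}_x(\tau_2\le t)$ from above directly, using $\{J_1+J_2\le t\}\subseteq\{J_1\le t\}\cap\{J_2\le t\}$ and the conditional independence of $J_1,J_2$ given $Y_0=x$, $Y_1=y$, to get a product of two factors whose respective limits are $\deg_m(x)$ and $0$. This avoids the exact convolution computation and the cancellation of limits, replacing them with a cruder but self-justifying $o(t)$ estimate; the dominated convergence step is then also simpler, since the dominant $b(x,y)/\deg_1(x)$ is transparently summable with total mass $1$. Your version is shorter and arguably more robust, while the paper's has the side benefit of also producing the exact asymptotics $\mathbb{P}_x(N(t)=1)/t\to\deg_m(x)$, which the paper reuses immediately afterwards in the proof of Lemma~\ref{generator} (the estimate of $\rho_t(x,y)/t$ there relies on the same computation). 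So while your argument suffices for Lemma~\ref{error} itself, the paper's choice is not wasted work in context.
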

\begin{proof}
As $f$ is bounded, it suffices to show
\begin{gather}\frac{1}{t}\mathbb{P}_x(N(t)\geq 2) = \frac{1-\mathbb{P}_x(N(t)= 0)}{t} - \frac{\mathbb{P}_x(N(t)= 1)}{t} \to 0, \quad\text{ as }t \searrow 0 . \label{errorsum}\end{gather}

From the considerations of Section~\ref{proc} we derive
\[
\mathbb{P}_x(N(t)= 0)=\mathbb{P}_x(t<\tau_{1}) = \mathbb{P}_x(\deg_{m}(x)t< \xi_{1})= \e^{-{\rm deg}_{m}(x)t}.
\]
The first summand of the right hand side of \eqref{errorsum} tends to ${\rm deg}_{m}(x)$ as $t \searrow 0$.
For determining the second summand, let us compute
\begin{align*}
\mathbb{P}_x(N(t)= 1)
=& \sum_{y \in X}  \mathbb{P}_{x}(N(t)= 1,\mathbb{X}_{\tau _1} =y)\\
=& \sum_{y \in X}  \mathbb{P}_x( N(t)= 1\mid \mathbb{X}_{\tau _1} = y) \mathbb{P}_x(\mathbb{X}_{\tau _1} = y) \\
=&  \sum_{y \in X, {\rm deg}_{m}(x)\neq {\rm deg}_{m}(y)}  \frac{{\rm deg}_{m}(x)}{{\rm deg}_{m}(x)-{\rm deg}_{m}(y)} \left[\e^{-t{\rm deg}_{m}(y)} - \e^{-t{\rm deg}_{m}(x)}\right] \frac{b(x,y)}{{\rm deg}_{1}(x)}\\
&+ \sum_{y \in X, {\rm deg}_{1}(x)={\rm deg}_{1}(y)}  \left[t{\rm deg}_{m}(x)\e^{-t{\rm deg}_{m}(x)}\right] \frac{b(x,y)}{{\rm deg}_{1}(x)}.
\end{align*}
{The last equality is a consequence of the following two observations: First, the equality $ \mathbb{P}_x(\mathbb{X}_{\tau _1} = y)={b(x,y)}/{{\rm deg}_{1}(x)}$  with $\deg_{1}(x)=\sum_{y\in X}b(x,y)$ is a direct consequence of the construction of $\mathbb{X}$. Secondly, using the notation of Section~\ref{proc}, we observe
\begin{align*}
    \mathbb{P}_x( N(t)= 1\mid \mathbb{X}_{\tau _1} = y)
    &=\mathbb{P}( N(t)= 1\mid \mathbb{X}_{\tau _1} = y,\mathbb{X}_{0}=x)\\
    &=\mathbb{P}(J_{1}\leq t< J_{1}+J_{2} \mid Y_1 = y,Y_{0}=x)\\
      &=\mathbb{P}\Big(
      \frac{1}{\deg_{m}(Y_{0})}\xi_{1}\leq t<       \frac{1}{\deg_{m}(Y_{0})}\xi_{1}+      \frac{1}{\deg_{m}(Y_{1})}\xi_{2} \mid Y_1 = y,Y_{0}=x\Big)\\
      &=\mathbb{P}\Big(
      \frac{1}{\deg_{m}(x)}\xi_{1}\leq t<       \frac{1}{\deg_{m}(x)}\xi_{1}+      \frac{1}{\deg_{m}(y)}\xi_{2} \Big).
\end{align*}
The last equality follows from the fact that the $Y_{n}$
and $\xi_{n}$ are chosen independently. Now, the desired formula for
$\mathbb{P}_x( N(t)= 1\mid \mathbb{X}_{\tau _1} = y)$ is obtained by
basic calculations involving independent exponentially distributed
random variables, where one has to distinguish the cases
$\deg_{m}(x)\neq\deg_{m}(y)$ and $\deg_{m}(x)=\deg_{m}(y)$.}

The above calculation and Lebesgue's dominated convergence theorem imply
$$\frac{\mathbb{P}_x(N(t)= 1)}{t} \to \frac{1}{m(x)}\sum_{y\in X} b(x,y) = {\rm deg}_m(x),\quad\text{ as }t \searrow 0, $$
showing our claim.
\end{proof}

\begin{lemma}\label{generator} Let $U\subseteq X$ be finite. Then, for all $f\in \ell^2(U,m)$ and $x\in U$, one has
\begin{align*}
    \lim_{t\searrow0}\frac{T_t(v,\theta,U)f(x)-f(x)}{t}=- L^{(U)}_{v,\theta}f(x).
\end{align*}
\end{lemma}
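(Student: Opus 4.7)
The plan is to decompose $T_t(v,\theta,U)f(x)$ according to the number of jumps $N(t)$ of $\mathbb X$ before time $t$, evaluate the $N(t)=0$ and $N(t)=1$ pieces exactly, and dispose of $\{N(t)\geq 2\}$ as $o(t)$. The finiteness of $U$ is what makes all small-time estimates routine, since on $\{t<\tau_U\}$ the process stays inside $U$, keeping $v(\mathbb X_s)$, $\theta$ and $f(\mathbb X_t)$ uniformly bounded.

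On $\{N(t)=0\}$ the path is constant equal to $x\in U$, so automatically $t<\tau_U$, the stochastic line integral is an empty sum, and $\int_0^t v(\mathbb X_s)\Id s=v(x)t$. Combined with $\mathbb{P}_x(N(t)=0)=\e^{-\deg_m(x)t}$ (Section~\ref{proc}), this contribution equals $\e^{-(\deg_m(x)+v(x))t}f(x)$, whose difference quotient at $0$ tends to $-(\deg_m(x)+v(x))f(x)$.

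On $\{N(t)=1,\mathbb X_{\tau_1}=y,t<\tau_U\}$ the path visits only $x$ and $y$, forcing $y\in U$. Conditional on $Y_1=y$, the inter-arrival times $J_1,J_2$ are independent exponentials of rates $\deg_m(x)$ and $\deg_m(y)$, the line integral equals $\theta(x,y)$, and the potential contribution equals $v(x)J_1+v(y)(t-J_1)$; together with $\mathbb{P}_x(Y_1=y)=b(x,y)/\deg(x)$, Fubini yields for this piece
\begin{equation*}
\sum_{y\in U}\frac{b(x,y)}{\deg(x)}\e^{\mathrm{i}\theta(x,y)}f(y)\int_0^t\deg_m(x)\e^{-(\deg_m(x)+v(x))s}\e^{-(\deg_m(y)+v(y))(t-s)}\Id s,
\end{equation*}
whose derivative at $t=0$ is $m(x)^{-1}\sum_{y\in U}b(x,y)\e^{\mathrm{i}\theta(x,y)}f(y)$, using $\deg_m(x)=\deg(x)/m(x)$.

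On $\{N(t)\geq 2,t<\tau_U\}$ the path stays in the finite set $U$, so $|\e^{\ISS_t(v,\theta|\mathbb X)}f(\mathbb X_t)|$ is bounded by a constant depending only on $U$, $v$ and $f$; the corresponding contribution to $T_tf(x)$ is therefore $O(\mathbb{P}_x(N(t)\geq 2))$, which is $o(t)$ by the estimate carried out inside the proof of Lemma~\ref{error}. Adding the three limits and comparing with the explicit formula
\begin{equation*}
L^{(U)}_{v,\theta}f(x)=(\deg_m(x)+v(x))f(x)-\frac{1}{m(x)}\sum_{y\in U}b(x,y)\e^{\mathrm{i}\theta(x,y)}f(y),
\end{equation*}
which one reads off from $L^{(U)}_{v,\theta}f(x)=\ow L_{v,\theta}f(x)$ for $x\in U$ after extending $f$ by $0$ to $X$ (via Green's formula), proves the claim. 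The main subtlety is the one-jump computation: one must exploit the independence of the jump chain $(Y_n)$ and the clocks $(\xi_n)$ to evaluate the conditional expectation cleanly, and observe that only $y\in U$ contribute since otherwise $\tau_U\leq\tau_1\leq t$.
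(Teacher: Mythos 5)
Your proposal is correct and follows essentially the same route as the paper: decompose $T_t(v,\theta,U)f(x)$ by the number of jumps $N(t)$, evaluate the $N(t)=0$ and $N(t)=1$ contributions explicitly (your direct integral computation for the one-jump term replaces the paper's sandwich estimate via $\rho_t(x,y)$ and Lemma~\ref{error}, but leads to the same limit $\deg_m(x)$), and absorb $\{N(t)\ge 2\}$ into an $o(t)$ error via Lemma~\ref{error}. This is the argument in the paper, up to presentational choices.
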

\begin{proof}
We fix an arbitrary $x\in U$ and compute
\begin{align}
\lefteqn{\frac{T_t(v,\theta,U)f(x)-f(x)}{t} } \nonumber \\&=\frac{\mathbb{E}_x\left[1_{\{N(t) = 0\}}\e^{-t v(x)} f(x)\right] -f(x)}{t}
+\frac{\mathbb{E}_x\left[1_{\{N(t) = 1,\mathbb{X}_{\tau_{1}}\in U\}} \e^{\ISS_t(v,\theta|\mathbb{X})} f(\mathbb{X}_t) \right]}{t} + \psi_t(x).\label{eqsemigroup}
\end{align}
The error term $\psi_t(x)$ satisfies $|\psi_t(x)| \leq \varphi_{t,|f|}(x)$ with $\varphi_{t,|f|}$ defined in Lemma~\ref{error}. Therefore, Lemma~\ref{error} implies $\psi_t(x) \to 0$ as $t \searrow 0$. For the first term of the right hand side of \eqref{eqsemigroup}, we have
$$\frac{\mathbb{E}_x\left[1_{\{N(t) = 0\}}\e^{-t v(x)} f(x)\right] -f(x)}{t}=\frac{\e^{-t (v(x) + {\rm deg}_{m}(x))} f(x)  -f(x)}{t}\to- (v(x) + {\rm deg}_{m}(x))f(x)$$
as $t\searrow 0$. Now, let us turn to the second term of the right hand side of \eqref{eqsemigroup}. We obtain
\begin{align*}
\mathbb{E}_x&\left[1_{\{N(t) = 1,\mathbb{X}_{\tau_{1}}\in U\}}\e^{\ISS_t(v,\theta|\mathbb{X})} f(\mathbb{X}_t)\right]\\
&=  \sum_{y \in U} \mathbb{E}_x\left[1_{\{N(t) = 1,\mathbb{X}_{\tau_1}= y\}} \e^{\mathrm{i} \theta(x,y)}\exp\Big(-\tau_1v(x) - (t-\tau_1)v(y)\Big) f(y) \right]\\
&=\sum_{y \in U}\e^{\mathrm{i} \theta(x,y)}f(y) \underbrace{\mathbb{E}_x\left[1_{\{N(t) = 1,\mathbb{X}_{\tau_1}= y\}} \exp\Big(-\tau_1v(x) - (t-\tau_1)v(y)\Big) \right]}_{=:\rho_t(x,y)}.
\end{align*}
Setting
$$
C := 2\max \{|v(x)|\mid x\in U\}
$$
and using $\tau_1 \leq t$ on $\{N(t) = 1\}$, a simple calculation yields
\[
\e^{-tC}\mathbb{P}_x(N(t) = 1,\mathbb{X}_{\tau_1}= y) \leq \rho_t(x,y)\leq \e^{tC}\mathbb{P}_x(N(t) = 1,\mathbb{X}_{\tau_1}= y).
\]
Hence, the same computation as in the proof of Lemma~\ref{error} shows that $\frac{1}{t}\rho_t(x,y)\to b(x,y)/m(x)$ as $t\searrow0$. These two facts and the fact $f \in {C}_c(U)$, as $U$ is finite, imply
$$\frac{1}{t}\mathbb{E}_x\left[ 1_{\{N(t) = 1,\mathbb{X}_{\tau_{1}}\in U\}} \e^{\ISS_t(v,\theta|\mathbb{X})} f(\mathbb{X}_t) \right] \longrightarrow\frac{1}{m(x)}\sum_{y \in U} b(x,y)\e^{\mathrm{i} \theta(x,y)}f(y)\quad\text{as $t\searrow0$},   $$
so, altogether we arrive at
$$
\frac{T_t(v,\theta,U)f(x)-f(x)}{t} \longrightarrow - L^{(U)}_{v,\theta}f(x)\>\>\text{ as $t\searrow0$.}
$$
\end{proof}

With these preparations we can now prove Theorem~\ref{p:finite}.

\begin{proof}[Proof of Theorem~\ref{p:finite}] For finite $U\subseteq X$, we have $\ell^{2}(U,m)=C_{c}(U)$. In particular, $L_{v,\theta}^{(U)}$ is a finite dimensional operator and the convergence
\[
-L_{v,\theta}^{(U)}=\lim_{t\searrow 0}\frac{1}{t}\left(T_t(v,\theta,U)-\mathrm{id}\right)
\]
from Lemma~\ref{generator} holds in the $\ell^{2}(U,m)$ sense. Therefore, the generator of the strongly continuous semigroup  $(T_t(v,\theta,U))_{t\geq 0}$ is given by $L_{v,\theta}^{(U)}$. It follows that $\e^{-tL_{v,\theta}^{(U)}}=T_t(v,\theta,U)$ for all $t\ge0$.
\end{proof}

\subsection{Proof  for closable forms}\label{bew}
\begin{thm}\label{p:FKI_closable}
Let $v$ be a potential  such that $Q^{(c)}_{v,0}$ and $Q^{(c)}_{v,\theta}$ are closable and semi-bounded.
Then for any $f \in \ell^2(X,m)$, $t\geq 0$ and $x\in X$ one has
$$
\e^{-tL_{v,\theta}}f(x) = \mathbb{E}_x\left[1_{\{t<\tau\}}\e^{\ISS_t(v,\theta|\mathbb{X})} f(\mathbb{X}_t)\right].
$$
\end{thm}
\begin{proof}
We prove the asserted formula by using the approximation of $Q_{v,\theta}$ via its restrictions to finite sets. Let $(X_n)_{n\in\IN}$ be an exhausting sequence in the sense of Definition~\ref{exhaus}. Then, Proposition~\ref{approx1} states that
$$\e^{-tL_{v,\theta}}f (x) = \lim_{n \to \infty} \iota_{X_n} \e^{-tL^{(X_n)} _{v,\theta}}\pi_{X_n} f (x).
$$
Combining this with Theorem~\ref{p:finite},  it remains to prove the equation
\begin{gather*}
\lim_{n\to \infty} \mathbb{E}_x\left[1_{\{t<\tau_{X_n}\}} \e^{\ISS_t(v,\theta|\mathbb{X})} \pi_{X_n} f(\mathbb{X}_t)\right] = \mathbb{E}_x\left[1_{\{t<\tau\}} \e^{\ISS_t(v,\theta|\mathbb{X})} f(\mathbb{X}_t)\right]. \label{expconvergence}
\end{gather*}
This will be done in two steps:
\medskip

\emph{Step 1.  $\theta = 0$ and $f \geq 0$}: The sequence $\tau_{X_n}$ converges monotonously increasingly to $\tau$ and  $\pi_{X_n} f(\mathbb{X}_t)$ converges monotonously increasingly to $f(\mathbb{X}_t)$. Hence, the monotone convergence theorem for integrals yields the desired statement.

\emph{Step  2. $\theta$ and $f$ arbitrary}: By the assumption $Q_{v,0}^{(c)}$ gives rise to a self-adjoint semi-bounded operator $L_{v,0}$.
The first step implies
$$\mathbb{E}_x\left[1_{\{t<\tau\}} \e^{\ISS_t(v,0|\mathbb{X})} |f|(\mathbb{X}_t)\right]  = \e^{-tL_{v,0}}|f|(x)< \infty.$$
By Lebesgue's dominated convergence theorem we deduce the desired statement for general $\theta$ and $f$.
\end{proof}

\subsection{Proof of closability of the forms}\label{s:closable_proof}

In this subsection we prove Theorem~\ref{t:closable} from Section~\ref{s:forms} which states that $Q_{v,\theta}^{(c)}$ is closable for all $v\in \mathcal{A}_{0}$.

\begin{proof}[Proof of Theorem~\ref{t:closable}] Let $v\in \mathcal{A}_{0}$ be given. For $n\in \N$ set $v_{n}=v\vee (-n)$ and observe $v_{n}\in \mathcal{B}_{0}$ since $v_{n,-}\in\ell^{\infty}$. By Proposition~\ref{p:closable} the forms $Q_{v_{n},\theta}^{(c)}$ are closable, semi-bounded and their domains satisfy $D(Q_{v_{n},\theta})=D(Q_{v_{+},\theta})$. Moreover, keeping $v\in \mathcal{A}_{0}$ and $C_c(X) \subseteq D(Q_{v_n,\theta})$ in mind, there is some $C>-\infty$ such that $Q_{v_{n},\theta}\ge C$ for all $n$.
Hence, $C\leq Q_{v_{n+1},\theta}\leq Q_{v_{n},\theta}$ for all $n\in\N$.
By monotone convergence of quadratic forms, \cite[Theorem~S.16, p.373]{RS},  we get  $\e^{-tL_{v_{n},\theta}}\to \e^{-tS_{v,\theta}}$, $n\to\infty$, strongly, where  $S_{v,\theta}$ denotes the operator corresponding to the form $s_{v,\theta}$ which is the closure of the largest closable quadratic form that is smaller than the limit form corresponding to $(Q_{v_{n},\theta})_n$.

In order to show closability of $Q_{v,\theta}^{(c)}$, it remains to show that the form domain of $s_{v,\theta}$ includes $C_{c}(X)$ and $s_{v,\theta}$ coincides with $Q_{v,\theta}^{(c)}$ on $C_{c}(X)$.
\medskip

We start by showing that $\mathrm{e}^{-tS_{v,\theta}}$ allows for a Feynman-Kac-It\^o representation:
\medskip

\emph{Claim 1:  For all  $ f \in \ell^2(X,m)$  and $ x \in X$
\begin{align*}
\e^{-t S_{v,\theta}}f (x)= \mathbb{E}_x\left[1_{\{t<\tau\}} \e^{\ISS_t(v,\theta|\mathbb{X})} f(\mathbb{X}_t)\right].
\end{align*}}
By the strong convergence $\e^{-tL_{v_{n},\theta}}\to \e^{-t S_{v,\theta}}$, $n\to\infty$,  it suffices to show that
\begin{align*}
\lim_{n\to \infty} \mathbb{E}_x\left[1_{\{t<\tau\}} \e^{\ISS_t(v_n,\theta|\mathbb{X})} f(\mathbb{X}_t)\right]= \mathbb{E}_x\left[1_{\{t<\tau\}} \e^{\ISS_t(v,\theta|\mathbb{X})} f(\mathbb{X}_t)\right].
\end{align*}
 This, however, can be shown  in
two steps similar to the ones in the proof  of Theorem~\ref{p:FKI_closable} above: We first employ the monotone convergence theorem for $\theta=0$ and $f\ge0$ in the first step and Lebesgue's dominated convergence theorem in the second step.  This proves the claim.\medskip

Next, we compute how the generator of  $\e^{-t S_{v,\theta}}$ acts:

\emph{Claim  2:  For all  $u\in C_c(X)$ and $ x \in \mathrm{supp}\, u$}
\begin{align*}
    \lim_{t\searrow0}\frac{\e^{-t S_{v,\theta}}u(x)-u(x)}{t}=-  \ow L_{v,\theta}u(x).
\end{align*}
Denote $U=\mathrm{supp}\, u$.   Recalling the definitions of   $T_{t}(v,\theta,U)$  and $\varphi_{t,|u|}$  from above  and using  Claim~1  and Lemma~\ref{error}, we obtain
\begin{align*}
    \lim_{t\searrow0}\frac{1}{t} \left|(T_t(v,\theta,U) -\e^{-t S_{v,\theta}}) u(x)\right| \leq2 \lim_{t\searrow0}\varphi_{t,|u|}(x)=0,
\end{align*}
 where the first inequality is readily seen by writing  the semigroups in their
Feynman-Kac-It\^o representation  and splitting up the expectation values into  three parts  corresponding to the events  $\{ N (t) =0\} $, $\{ N (t) = 1\} $
 and $\{ N (t)  \geq 2\} $ as in the proof of Lemma~\ref{generator}.  Then, one immediately sees that the terms for  $\{ N (t) =0\} $  and $\{ N (t) = 1\} $  coincide and the absolute value of each of the terms corresponding to  $\{ N (t)  \geq 2\} $ can be estimated by $\varphi_{t,|u|}(x)$. Having this,  Lemma~\ref{generator} and the observation $L^{(U)}_{v,\theta}u=\ow L_{v,\theta}u$ on $U$ yields the claim.\medskip

To finish the proof, we note that by
Green's formula (Lemma~\ref{Greens formula}), Claim~2 and the semigroup characterization of $s_{v,\theta}$ (\cite[Lemma~1.3.4]{FOT})
\begin{align*}
Q_{v,\theta}^{(c)}(u,u)=\langle u, \ow L_{v,\theta} u\rangle=
    \lim_{t\searrow0}\frac{1}{t}\langle u,u-\e^{-tS_{v,\theta}}u\rangle
    =s_{v,\theta}(u,u),
\end{align*}
where we also used $u\in C_{c}(X)$ in the first two equalities. In particular, this shows that $C_{c}(X)\subseteq D(s_{v,\theta})$. As $Q_{v,\theta}^{(c)}$ is a restriction of a closed form $s_{v,\theta}$, it is closable itself. Semi-boundedness follows as $C_{c}(X)$ is a form core and $v\in \mathcal{A}_{0}$.
\end{proof}

These preparations  readily gives the proof of the main theorem, the Feynman-Kac-It\^o formula for potentials in $\mathcal{A}_{0}$.

\begin{proof}[Proof of Theorem~\ref{main}] By Theorem~\ref{t:closable} the forms $Q_{v,0}^{(c)}$ and $Q_{v,\theta}^{(c)}$ are closable and semi-bounded for $v\in \mathcal{A}_{0}$. Hence, the statement follows by Theorem~\ref{p:FKI_closable}.
\end{proof}


\section{Applications}\label{anwen}

We continue with several applications of the Feynman-Kac-It\^o formula, Theorem~\ref{main}. Remarkably, being equipped with the Feynman-Kac-(It\^{o}) formula, all of the following partially highly nontrivial functional analytic results will be simple consequences of the trivial inequality
\begin{align}
\left|\e^{\ISS_t(v_{1},\theta|\mathbb{X})}\right| \leq\e^{\ISS_t(v_{2},0|\mathbb{X})}\quad\text{ in $\{t < \tau\}$ for all $t\geq 0$},\label{aaa}
\end{align}
and potentials $v_{1}\ge v_{2}$.
This is the main advantage of the path integral formalism.

\subsection{Semigroup formulas}

We will start with the derivation of a probabilistic representation and applications thereof of the integral kernels corresponding to the perturbed magnetic semigroups. To this end, we define the probability measure $\mathbb{P}_{x,y}^t$ on $\{t<\tau\}$ by
\[
  \mathbb{P}_{x,y}^t:=\mathbb{P}_x(\bullet\left| \mathbb{X}_t=y)\right.\>\>\text{for any $x,y\in X$, $t>0$,}
\]
and let $ \mathbb{E}_{x,y}^t$ be the corresponding expected value.
Clearly, the Feynman-Kac-It\^o formula for $v =0$ and $\theta = 0$ implies for $L = L_{0,0}$
\begin{align}
\mathbb{P}_{x}(A)&=\sum_{y\in X}\mathbb{P}_{x,y}^t(A)\mathbb{P}_x(\mathbb{X}_t=y)
=\sum_{y\in X}\mathbb{P}_{x,y}^t(A)\e^{-t L}(x,y)m(y)\label{condi}
\end{align}
for any event $A\subset \{t<\tau \}$. Therefore, we obtain
$$
{\Lp}^1\left(\{t<\tau\},\mathbb{P}_{x}\right)\subset {\Lp}^1\left(\{t<\tau\},\mathbb{P}_{x,y}^t\right).
$$

\begin{thm}\label{int2} Let $v\in\mathcal{A}_{0}$. Then for all $t >0$ and $x,y\in X$ one has
\begin{align*}
\e^{-tL_{v,\theta}}(x,y) =\f{1}{m(y)}\mathbb{P}_x(\mathbb{X}_t=y) \mathbb{E}_{x,y}^t\left[\e^{\ISS_t(v,\theta|\mathbb{X})} \right]= \e^{-t L}(x,y)\mathbb{E}^t_{x,y}\left[\e^{\ISS_t(v,\theta|\mathbb{X})} \right],
\end{align*}
in particular,
\begin{align*}
\mathrm{tr}\left[\e^{-tL_{v,\theta}}\right] &= \sum_{x\in X}\mathbb{P}_x(\mathbb{X}_t=x) \mathbb{E}^t_{x,x}\left[\e^{\ISS_t(v,\theta|\mathbb{X})} \right]=\sum_{x\in X}\e^{-t L}(x,x) \mathbb{E}^t_{x,x}\left[\e^{\ISS_t(v,\theta|\mathbb{X})} \right]m(x)\in[0,\infty].
\end{align*}
\end{thm}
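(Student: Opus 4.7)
My plan is to apply the Feynman-Kac-It\^o formula (Theorem~\ref{main}) to an arbitrary $f\in\ell^{2}(X,m)$ and decompose the resulting expectation according to the position of $\mathbb{X}_{t}$. Since $\{\mathbb{X}_{t}=y\}\subseteq\{t<\tau\}$ for each $y\in X$, one has the pathwise identity
\[
1_{\{t<\tau\}}\,\e^{\ISS_{t}(v,\theta|\mathbb{X})}\,f(\mathbb{X}_{t})=\sum_{y\in X}1_{\{\mathbb{X}_{t}=y\}}\,\e^{\ISS_{t}(v,\theta|\mathbb{X})}\,f(y),
\]
in which at most one summand is nonzero along each path. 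Taking $\mathbb{E}_{x}$, exchanging sum and expectation, and invoking the definition of $\mathbb{E}_{x,y}^{t}$ via \eqref{condi} yields
\[
\e^{-tL_{v,\theta}}f(x)=\sum_{y\in X}\mathbb{P}_{x}(\mathbb{X}_{t}=y)\,\mathbb{E}_{x,y}^{t}\!\left[\e^{\ISS_{t}(v,\theta|\mathbb{X})}\right]f(y).
\]
Matching this against $\e^{-tL_{v,\theta}}f(x)=\sum_{y}\e^{-tL_{v,\theta}}(x,y)\,f(y)\,m(y)$ -- uniqueness of the kernel being ensured by testing against $f=\delta_{y_{0}}$ for arbitrary $y_{0}\in X$ as in Section~\ref{sett} -- gives the first claimed equality, and the second is then immediate from \eqref{mar}, which states $\mathbb{P}_{x}(\mathbb{X}_{t}=y)=\e^{-tL}(x,y)\,m(y)$.

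The exchange of sum and expectation above is the one step that requires real justification. It is handled by dominated convergence along the partial sums over an exhaustion $F_{N}\uparrow X$, using the pointwise bound $|\e^{\ISS_{t}(v,\theta|\mathbb{X})}|=\e^{-\int_{0}^{t}v(\mathbb{X}_{s})\Id s}$ together with the integrability
\[
\mathbb{E}_{x}\!\left[1_{\{t<\tau\}}\e^{-\int_{0}^{t}v(\mathbb{X}_{s})\Id s}|f|(\mathbb{X}_{t})\right]=\e^{-tL_{v,0}}|f|(x)<\infty,
\]
which is itself an instance of Theorem~\ref{main} applied with $\theta=0$, $v\in\mathcal{A}_{0}$ and $|f|\in\ell^{2}(X,m)$. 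Once this bound is in place the exchange is immediate, as for each $\omega$ the sum is in fact finite (at most one term is nonzero).

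For the trace formula, self-adjointness of $L_{v,\theta}$ gives $\e^{-tL_{v,\theta}}=(\e^{-tL_{v,\theta}/2})^{*}\e^{-tL_{v,\theta}/2}\geq 0$, so its trace is a well-defined element of $[0,\infty]$ with no trace-class hypothesis required. Computing it with the orthonormal basis $(m(x)^{1/2}\delta_{x})_{x\in X}$ of $\ell^{2}(X,m)$ one obtains $\mathrm{tr}[\e^{-tL_{v,\theta}}]=\sum_{x\in X}\e^{-tL_{v,\theta}}(x,x)\,m(x)$, and inserting the two forms of the kernel formula with $y=x$ produces both asserted expressions for the trace. The rest is bookkeeping on top of Theorem~\ref{main} and \eqref{condi}.
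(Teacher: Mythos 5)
Your proof is correct and follows the paper's approach: the kernel identity is obtained by applying Theorem~\ref{main} and decomposing via~\eqref{condi}, and the trace identity is obtained by summing the diagonal of the kernel. The one cosmetic difference is that for the trace you compute $\mathrm{tr}[\e^{-tL_{v,\theta}}]=\sum_x\e^{-tL_{v,\theta}}(x,x)m(x)$ directly via the orthonormal basis $(m(x)^{1/2}\delta_x)_x$ and positivity of the operator, whereas the paper phrases the same computation as the Hilbert--Schmidt norm of $\e^{-tL_{v,\theta}/2}$ combined with the semigroup property and symmetry of the kernel; both routes lead to the same diagonal sum.
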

\begin{proof}
The Feynman-Kac-It\^{o} formula in combination with (\ref{condi}) directly implies the first formula.
It only remains to prove the formula for the trace. Clearly, by the semigroup property and self-adjointness, $\mathrm{tr}\left[\e^{-tL_{v,\theta}}\right]$ is equal to the Hilbert-Schmidt norm of $\e^{-\f{t}{2}L_{v,\theta}}\e^{-\f{t}{2}L_{v,\theta}}$, which in view of the formula for $\e^{-tL_{v,\theta}}(x,y)$ and the semigroup property and symmetry of the latter precisely has the asserted form.
\end{proof}

\subsection{Kato\rq{}s inequality}

The following theorem includes a general version of Kato\rq{}s inequality and applications thereof. We refer the reader to \cite{bg} for probabilistic aspects of Kato\rq{}s inequality on noncompact Riemannian manifolds, and to \cite{DM} for a direct proof of Kato's inequality on graphs (in a more restrictive setting though). Moreover, some of the results below are also contained in \cite{Gol} for locally finite graphs.

\begin{thm}\label{aw}\emph{(Kato\rq{}s inequality)} Let $v_1,v_{2}\in\mathcal{A}_{0}$ be potentials such that $v_1\geq v_2$. Then the following assertions hold:
\begin{itemize}
\item[(a)] For all $t\geq 0$, $f \in \ell^2(X,m)$ and $x\in X$, one has $$\left|\e^{-tL_{v_1,\theta}}f(x) \right|\leq  \e^{-tL_{v_2,0}}|f|(x).$$
     In particular, for all $x,y\in X$ and $t>0$, one has
$$
|\e^{-tL_{v_1,\theta}}(x,y)|\leq  \e^{-tL_{v_2,0}}(x,y),\>\> \mathrm{tr}\left[\mathrm{e}^{-tL_{v_1,\theta}}\right] \leq \mathrm{tr}\left[\mathrm{e}^{-tL_{v_2,0}}\right].
$$

\item[(b)] For any $h\in{\D}(Q_{v_1,\theta})$, it holds that $\left|h\right|\in {\D}(Q_{v_2,0})$ and
$Q_{v_1,\theta}(h)\geq Q_{v_2,0}(|h|)$.

\item[(c)] One has
$\inf\sigma(L_{v_1,\theta})\geq \inf\sigma(L_{v_2,0})$.

\item[(d)] For any $f \in \ell^2(X,m)$, $\lambda\in\IC$ with $\mathrm{Re}(\lambda)>\min\sigma(L_{v_1,\theta})$, $x\in X$,
$$\left|(L_{v_1,\theta}+\lambda)^{-1}f(x)\right|\leq (L_{v_2,0}+\lambda)^{-1}|f|(x).$$

\item[(e)] If $L_{v_2,0}$ has a compact resolvent, then $L_{v_1,\theta}$ has a compact resolvent.
\end{itemize}
\end{thm}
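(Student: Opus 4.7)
The whole theorem will flow from a single pathwise, pointwise inequality: since $v_1\ge v_2$ and the magnetic factor has modulus one,
\[
\bigl|\e^{\ISS_t(v_1,\theta|\mathbb{X})}\bigr|=\e^{-\int_0^t v_1(\mathbb{X}_s)\dd s}\le \e^{-\int_0^t v_2(\mathbb{X}_s)\dd s}=\e^{\ISS_t(v_2,0|\mathbb{X})}
\]
on $\{t<\tau\}$. My plan is to feed this into the Feynman-Kac-It\^o representations of Theorems~\ref{main} and~\ref{int2} to deduce (a), and then to obtain (b)--(e) from (a) by standard functional-analytic reductions.

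For (a), I would simply pull the absolute value inside $\mathbb{E}_x$ in Theorem~\ref{main}, insert the displayed estimate together with $|f(\mathbb{X}_t)|=|f|(\mathbb{X}_t)$, and recognise the resulting expectation as $\e^{-tL_{v_2,0}}|f|(x)$. The kernel inequality follows by exactly the same manipulation inside the conditional expectation in Theorem~\ref{int2}, and the trace inequality by summing diagonal kernels against $m$.

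For (b) and (c) I would invoke the semigroup characterisation of a quadratic form, $Q_{v,\theta}(h)=\lim_{t\downarrow 0}t^{-1}\langle (I-\e^{-tL_{v,\theta}})h,h\rangle$, with the limit finite if and only if $h\in D(Q_{v,\theta})$. Self-adjointness of $\e^{-tL_{v_1,\theta}}$ makes $\langle\e^{-tL_{v_1,\theta}}h,h\rangle$ real, so Cauchy--Schwarz combined with (a) yields
\[
\langle\e^{-tL_{v_1,\theta}}h,h\rangle\le \sum_{x\in X}\bigl|\e^{-tL_{v_1,\theta}}h(x)\bigr|\,|h(x)|\,m(x)\le \langle\e^{-tL_{v_2,0}}|h|,|h|\rangle.
\]
Subtracting from $\|h\|^2=\||h|\|^2$, dividing by $t$, and letting $t\downarrow 0$ forces $|h|\in D(Q_{v_2,0})$ and gives (b). Part (c) then drops out of the variational principle $\min\sigma(L_{v,\theta})=\inf\{Q_{v,\theta}(h):h\in D(Q_{v,\theta}),\ \|h\|=1\}$ applied to both forms, using $\||h|\|=\|h\|$.

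For (d), I would write both resolvents as Laplace transforms $(L+\lambda)^{-1}f=\int_0^\infty\e^{-\lambda t}\e^{-tL}f\dd t$ (valid for $\mathrm{Re}(\lambda)$ beyond the common lower spectral bound guaranteed by (c)), evaluate pointwise at $x$ (pointwise evaluation is continuous on $\ell^2(X,m)$ by discreteness), and apply (a) under the time integral. For (e), compactness of the resolvent of $L_{v_2,0}$ yields via the spectral theorem that $\e^{-tL_{v_2,0}}$ is itself compact for every $t>0$. Along an exhausting sequence $(X_n)$ with associated finite-rank projections $P_n=\iota_{X_n}\pi_{X_n}$, the pointwise bound in (a) together with the identity $|P_n h|=P_n|h|$ delivers
\[
\|(I-P_n)\e^{-tL_{v_1,\theta}}h\|\le \|(I-P_n)\e^{-tL_{v_2,0}}|h|\|\le \|(I-P_n)\e^{-tL_{v_2,0}}\|\cdot\|h\|,
\]
and the right-hand factor tends to zero because $P_n\to I$ strongly and $\e^{-tL_{v_2,0}}$ is compact. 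Self-adjointness of $\e^{-tL_{v_2,0}}$ then gives the mirror estimate for $\e^{-tL_{v_1,\theta}}(I-P_n)$, so $P_n\e^{-tL_{v_1,\theta}}P_n\to \e^{-tL_{v_1,\theta}}$ in operator norm; each truncation being finite rank, $\e^{-tL_{v_1,\theta}}$ is compact and thus $L_{v_1,\theta}$ has compact resolvent. I expect (e) to be the only genuinely non-routine step: the lattice domination of compactness needs self-adjointness of the dominator to close the finite-rank approximation from both sides, whereas (a)--(d) are essentially one-line consequences of the Feynman-Kac-It\^o formula.
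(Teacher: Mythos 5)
Your proposal tracks the paper's proof almost exactly through parts (a)--(d): the pathwise bound, the Feynman--Kac--It\^o formula for (a) and the kernel version from Theorem~\ref{int2}, the semigroup characterisation of form domains for (b), the variational principle for (c), and the Laplace representation of the resolvent for (d) are precisely the ingredients the paper cites. The one place where you genuinely depart from the paper is part (e). There, the paper observes that by (d) the resolvent $(L_{v_1,\theta}+\lambda)^{-1}$ is dominated in absolute value by the positivity-improving resolvent $(L_{v_2,0}+\lambda)^{-1}$, and then invokes Pitt's theorem (Theorem~\ref{pits}), a rather deep operator-domination result, to transfer compactness. You instead run a self-contained finite-rank approximation: with $P_n=\iota_{X_n}\pi_{X_n}$ the multiplication by $1_{X_n}$, the pointwise domination from (a) yields
\[
\|(I-P_n)\,\e^{-tL_{v_1,\theta}}\|\le \|(I-P_n)\,\e^{-tL_{v_2,0}}\|\longrightarrow 0
\]
because $\e^{-tL_{v_2,0}}$ is compact and $P_n\to I$ strongly; self-adjointness gives the mirror estimate, so $P_n\e^{-tL_{v_1,\theta}}P_n\to\e^{-tL_{v_1,\theta}}$ in operator norm, and compactness of the semigroup (hence of the resolvent) follows. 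This works, and it is arguably cleaner in the discrete setting: it trades Pitt's abstract theorem for an explicit exhaustion argument that exploits the fact that $\ell^2(X,m)$ supports a natural increasing sequence of finite-rank multiplication projections commuting with taking moduli. The price is that your argument is tied to the lattice structure of $\ell^2$ over a discrete set, whereas Pitt's theorem applies on general $L^p$-spaces; in the present context the two routes are equally valid. One small wording issue: the norm identity $\|\e^{-tL_{v_1,\theta}}(I-P_n)\|=\|(I-P_n)\e^{-tL_{v_1,\theta}}\|$ uses self-adjointness of $\e^{-tL_{v_1,\theta}}$ (and of $P_n$), not of $\e^{-tL_{v_2,0}}$; the conclusion is unaffected.
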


\begin{proof} Assertion (a) is implied by Theorem~\ref{int2} together with (\ref{aaa}). Statement (b) follows from (a) and the semigroup characterizations of $Q_{v_2,0}$ and $Q_{v_1,\theta}$, see \cite[Lemma~1.3.4]{FOT}, and (c) follows from (b) and the variational characterization of the bottom of the spectrum, see \cite{RS}, (or simply cf. \cite[Theorem D.6]{bg} for both (b) and (c)).
Statement (d) is a direct consequence of (a)  and the Laplace\rq{}s formula for the resolvents.
For (e) notice that the operators $\e^{-tL_{v_2,0}}$ are positivity improving for all $t > 0$ by the Feynman-Kac formula and $(L_{v_2,0}+\lambda)^{-1}$ are positivity improving for  all $\lambda > \inf\sigma(L_{v_2,0})$ by the Laplace formula for resolvents. Thus, the statement of the theorem follows from (d)  by using Pitt\rq{}s theorem (cf. Theorem~\ref{pits}).
\end{proof}

\subsection{Golden-Thompson inequality}
The following is a discrete analogue of the Golden-Thompson inequality.

\begin{thm}\label{gold} \emph{(Golden-Thompson inequality)} Let $v_1,v_{2}\in\mathcal{A}_{0}$ be potentials such that $v_1\geq v_2$. Then for any $t>0$ one has
\begin{align*}
\mathrm{tr}\left[\e^{-tL_{v_1,\theta}}\right]&\leq \sum_{x\in X}e^{-tL}(x,x)
 \e^{-tv_2(x)}m(x)\leq C(t)\sum_{x\in X}\e^{-tv_2(x)}\in[0,\infty],
\end{align*}
where
$$C(t):=\sup_{x\in X}\e^{-t L}(x,x)m(x)\leq1.$$
\end{thm}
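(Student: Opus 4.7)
My plan is to start from the probabilistic trace formula of Theorem~\ref{int2},
\[
\tr\bigl[\e^{-tL_{v_1,\theta}}\bigr]=\sum_{x\in X}\mathbb{P}_x(\mathbb{X}_t=x)\,\mathbb{E}^{t}_{x,x}\!\left[\e^{\ISS_t(v_1,\theta|\mathbb{X})}\right]\!,
\]
and to apply the pointwise bound \eqref{aaa}, which under the hypothesis $v_1\ge v_2$ yields $\bigl|\e^{\ISS_t(v_1,\theta|\mathbb{X})}\bigr|\le \e^{-\int_0^t v_2(\mathbb{X}_s)\dd s}$ on $\{t<\tau\}$. Undoing the conditioning via \eqref{condi} will turn each summand into the unconditional expectation $\mathbb{E}_x\!\left[1_{\{\mathbb{X}_t=x\}}\e^{-\int_0^t v_2(\mathbb{X}_s)\dd s}\right]$, which is better suited for an interchange with the Markov property.

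The key analytic step will be Jensen's inequality in the time variable: since $u\mapsto \e^{tu}$ is convex and $\dd s/t$ is a probability measure on $[0,t]$, I obtain the pathwise bound
\[
\e^{-\int_0^t v_2(\mathbb{X}_s)\dd s}\le \frac{1}{t}\int_0^t \e^{-tv_2(\mathbb{X}_s)}\dd s
\]
on $\{t<\tau\}$, where $\mathbb{X}$ visits only finitely many vertices and all integrals are finite. Taking expectations, exchanging the sum in $x$ with the time integral by Tonelli, and applying the Markov property at time $s$, the summand $\sum_x \mathbb{E}_x[1_{\{\mathbb{X}_t=x\}}\e^{-tv_2(\mathbb{X}_s)}]$ becomes $\sum_{x,y}\e^{-sL}(x,y)m(y)\,\e^{-(t-s)L}(y,x)m(x)\,\e^{-tv_2(y)}$, which collapses via the symmetry of the heat kernel and the semigroup identity to $\sum_y \e^{-tL}(y,y)\e^{-tv_2(y)}m(y)$, crucially independent of $s$. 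Averaging in $s\in[0,t]$ is then trivial and produces exactly the first asserted inequality.

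For the second inequality I will simply factor out the supremum to obtain the bound $C(t)\sum_y\e^{-tv_2(y)}$, and the estimate $C(t)\le 1$ is immediate from \eqref{mar} since $\e^{-tL}(y,y)m(y)=\mathbb{P}_y(\mathbb{X}_t=y)\le 1$. I do not anticipate any serious obstacle: the main thing to watch is that several of the occurring sums may be infinite, but since every integrand appearing in the Fubini/Tonelli interchanges is non-negative, all manipulations remain valid as inequalities in $[0,\infty]$.
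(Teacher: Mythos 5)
Your argument is correct, and it takes a genuinely different route from the paper's. The paper first reduces to the non-magnetic case via Kato's inequality (Theorem~\ref{aw}(a)), then applies the abstract operator-level Golden--Thompson inequality of Hiai (Theorem~\ref{gti} in the appendix) to the finite restrictions $Q^{(X_n)}_{0,0}$ and $q_{v_2}$ on an exhausting sequence, computes the resulting trace by hand using self-adjointness and the semigroup property of the finite-volume kernels, and finally passes to the limit $n\to\infty$ via monotone convergence of the heat kernels and the trace monotonicity of Proposition~\ref{mon}. You instead run the classical pathwise Feynman--Kac argument (in the spirit of Simon's proof for $\mathbb{R}^n$): domination \eqref{aaa} reduces to the non-magnetic functional, Jensen's inequality applied to the time-average $\frac{1}{t}\int_0^t(-v_2(\mathbb{X}_s))\,\dd s$ with the convex function $u\mapsto\e^{tu}$ gives the pathwise bound $\e^{-\int_0^t v_2(\mathbb{X}_s)\dd s}\le\frac{1}{t}\int_0^t\e^{-tv_2(\mathbb{X}_s)}\dd s$, and then Tonelli, the Markov property at time $s$, the symmetry of the heat kernel, and the semigroup identity collapse each fixed-$s$ term to the $s$-independent quantity $\sum_y \e^{-tL}(y,y)\e^{-tv_2(y)}m(y)$. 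Both proofs are sound. What the paper's approach buys is that it stays within the exhaustion/monotone-convergence framework already built up and leverages the strong off-the-shelf operator inequality, but at the cost of importing a nontrivial black box from the appendix. Your approach is self-contained given Theorems~\ref{main} and~\ref{int2}, uses only elementary convexity and the Markov property, and makes the probabilistic content transparent; since all integrands are non-negative, the Tonelli interchanges are valid as identities in $[0,\infty]$ exactly as you say, so there is no gap.
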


For the proof of the Golden-Thompson inequality, Theorem~\ref{gold}, we need the following monotonicity property of the trace,  which should also be of an independent interest as well.

\begin{prop}\label{mon}  Let $v\in\mathcal{A}_{0}$. Then for any exhausting sequence $(X_n)_{n\in \IN}$ one has
$$
\mathrm{tr}\left[\e^{-tL^{(X_n)}_{v,0}}\right]\nearrow \mathrm{tr}\left[\e^{-tL_{v,0}}\right]\>\> \text{as $n\to \infty$ for all $t>0$.}
$$
\end{prop}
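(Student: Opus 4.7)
The plan is to reduce both traces to sums of non-negative probabilistic quantities using the Feynman-Kac formulas already established, and then pass to the limit by two applications of the monotone convergence theorem. The decisive sign fact driving the whole argument is that $\e^{\ISS_{t}(v,0|\mathbb{X})}$ is positive on $\{t<\tau\}$ even when $v_{-}$ is large, so no integrability or domination arguments are needed.

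First, I would apply Proposition \ref{p:finite} (with $\theta=0$ and $f=\delta_{x}$) to obtain, for any finite $U\subseteq X$ and $x\in U$,
\[
\e^{-tL^{(U)}_{v,0}}(x,x)\,m(x)=\mathbb{E}_x\!\left[1_{\{t<\tau_{U},\,\mathbb{X}_{t}=x\}}\e^{\ISS_{t}(v,0|\mathbb{X})}\right],
\]
so that $\mathrm{tr}\bigl[\e^{-tL^{(X_{n})}_{v,0}}\bigr]=\sum_{x\in X_{n}}\e^{-tL^{(X_{n})}_{v,0}}(x,x)\,m(x)$ is exhibited as a sum of non-negative terms. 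The analogous identity for the ambient operator, following from Theorem \ref{int2}, expresses $\mathrm{tr}\bigl[\e^{-tL_{v,0}}\bigr]$ as the corresponding sum indexed by $x\in X$.

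Next I would check monotonicity in $n$. Since $X_{n}\subseteq X_{n+1}$ the exit times satisfy $\tau_{X_{n}}\le\tau_{X_{n+1}}$ pathwise; combined with the positivity of the exponential factor this forces each summand to be non-decreasing in $n$, and extending it by zero on $X\setminus X_{n}$ yields a pointwise non-decreasing sequence of non-negative functions on the whole of $X$.

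Finally, I would pass to the limit using the standard jump-process identity $\tau_{X_{n}}\nearrow\tau$ ($\mathbb{P}_{x}$-almost surely): on $\{t<\tau\}$ the path $\mathbb{X}_{[0,t]}$ makes only finitely many jumps and hence visits only finitely many vertices, all of which eventually lie in $X_{n}$, so $\{t<\tau\}=\bigcup_{n}\{t<\tau_{X_{n}}\}$ up to a null set. Monotone convergence inside each $\mathbb{E}_{x}[\cdot]$ then produces the pointwise limit $\e^{-tL_{v,0}}(x,x)\,m(x)$, and a second application of monotone convergence to the sum over $x\in X$ promotes this to convergence of the full trace in $[0,\infty]$. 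I do not anticipate a genuine obstacle; the only point worth flagging is that the limiting trace may be $+\infty$, so the identity has to be read as an equality in $[0,\infty]$.
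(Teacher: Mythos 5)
Your proof is correct and takes essentially the same route as the paper: both reduce the finite and infinite traces, via Proposition \ref{p:finite} and Theorem \ref{int2}, to non-negative probabilistic quantities and pass to the limit by monotone convergence. The only cosmetic difference is that you apply Proposition \ref{p:finite} directly to $f=\delta_x$, whereas the paper phrases the same diagonal kernel identity through the bridge measures $\mathbb{E}^t_{x,x}$ defined via (\ref{condi}); these are the same thing once unwound.
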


\begin{proof} Combining Theorem~\ref{p:finite} with (\ref{condi}) easily implies
$$
\mathrm{tr}\left[\e^{-tL^{(X_n)}_{v,0}}\right]=\sum_{x\in X_n}\e^{-t L}(x,x) \mathbb{E}^t_{x,x}\left[1_{\{ t<\tau_{X_n}\} }\e^{-\int^t_0v(\mathbb{X})\Id s } \right]m(x),
$$
which, using Theorem~\ref{int2}, tends to $\mathrm{tr}\left[\e^{-tL_{v,0}}\right]$ in view of monotone convergence.
\end{proof}

\begin{proof}[Proof of Theorem~\ref{gold}] In view of Theorem~\ref{aw} (a), we have $\mathrm{tr}\left[\e^{-tL_{v_1,\theta}}\right]\leq \mathrm{tr}\left[\e^{-tL_{v_2,0}}\right]$. Let $(X_n)_{n\in\IN}$ be an exhausting sequence. Then applying the operator-version of Golden-Thompson inequality (Theorem~\ref{gti}) to $q\rq{}=Q^{(X_n)}_{0,0}$, $q\rq{}\rq{}=q_{v_2}$ in the Hilbert space $\ell^2(X_n,m)$, where $Q^{(X_n)}_{0,v_2}=Q^{(X_n)}_{0,0}+q_{v_2}$ is trivial in view of the finiteness of $X_n$, we get the inequality in
\begin{align}
\mathrm{tr}\left[\mathrm{e}^{-t L^{(X_n)}_{v_2,0}}\right]&\leq \mathrm{tr}\left[\mathrm{e}^{-\f{t}{2} L^{(X_n)}_{0,0}} \mathrm{e}^{-tv_2}\mathrm{e}^{-\f{t}{2} L^{(X_n)}_{0,0}}\right]\nn\\
&= \mathrm{tr}\left[\left(\mathrm{e}^{-\f{t}{2}v_2} \mathrm{e}^{-\f{t}{2} L^{(X_n)}_{0,0}}\right)^* \left(\mathrm{e}^{-\f{t}{2}v_2}\mathrm{e}^{-\f{t}{2} L^{(X_n)}_{0,0}}\right)\right]\nn\\
&=\sum_{x\in X_n} \mathrm{e}^{-tv_2(x)} \sum_{y\in X_n}\mathrm{e}^{-\f{t}{2} L^{(X_n)}_{0,0}}(x,y)\mathrm{e}^{-\f{t}{2} L^{(X_n)}_{0,0}}(y,x) m(y) m(x)\nn\\
&=\sum_{x\in X_n} \mathrm{e}^{-t L^{(X_n)}_{0,0}}(x,x)\mathrm{e}^{-tv_2(x)}  m(x)\>\>\text{ for all $n$.}\label{hig}
\end{align}
Here we have used self-adjointness and semigroup properties, as well as
$$
\left(\mathrm{e}^{-\f{t}{2}v_2}\mathrm{e}^{-\f{t}{2} L^{(X_n)}_{0,0}}\right)(x,y)=\mathrm{e}^{-\f{t}{2}v_2(x)}\mathrm{e}^{-\f{t}{2} L^{(X_n)}_{0,0}}(x,y)\>\>\text{ for all $(x,y)\in X_n\times X_n$}.
$$
Noting that
$$
1_{X_n\times X_n}(x,x)\mathrm{e}^{-t L^{(X_n)}_{0,0}}(x,x)\nearrow \mathrm{e}^{-tL}(x,x)\>\>\text{ for all $x\in X$ as $n\to\infty$,}
$$
monotone convergence implies that the right-hand side of (\ref{hig}) tends to the term in the middle of the asserted inequality as $n\to\infty$.  In view of Proposition~\ref{mon}, this  completes the proof of the first inequality. For the second inequality we note that $\mathrm{e}^{-tL}(x,x)m(x)=\mathbb{P}_x(\mathbb{X}_t=x)\leq 1$.
\end{proof}

\begin{remark} We refer the reader to \cite[Theorem~9.2]{Si05} for an $\IR^m$-version of the Golden-Thompson inequality, which uses a very different proof. Note that in this particular case, the Golden-Thompson inequality can be rewritten as a phase space bound. This has the important physical consequence that the quantum mechanical partition function is always bounded from above by the corresponding classical partition function.
\end{remark}

\subsection{The form domain}

Finally, we use the Feynman-Kac-It\^{o} formula to derive an explicit description of the form domain of $Q_{v,\theta}$ under suitable assumptions on the potential.

\begin{thm}\label{t:formdomain_theta} For any $v\in \mathcal{B}_{0}$, one has
$Q_{v,\theta}=Q_{0,\theta}+q_{v}$, in particular, $D(Q_{v,\theta})=D(Q_{0,\theta})\cap\ell^{2}(X,|v|m)$.
\end{thm}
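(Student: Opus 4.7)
The plan is to reduce to the nonnegative part $v_+$ and then identify $Q_{v_+,\theta}$ by a monotone approximation whose limit is pinned down via the Feynman-Kac-It\^{o} formula.

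First, I would observe that $v\in\mathcal{B}_{0}$ entails $v\in\mathcal{B}_{\theta}$: exactly as in the proof of Lemma~\ref{l:potentials}, one has $q_{v_{-}}(f)=q_{v_{-}}(|f|)$ and $Q^{(c)}_{v_{+},0}(|f|)\le Q^{(c)}_{v_{+},\theta}(f)$ for $f\in\C_{c}(X)$, so the defining $\mathcal{B}_{0}$-estimate carries over verbatim. Then Proposition~\ref{p:closable} applies and yields $Q_{v,\theta}=Q_{v_{+},\theta}-q_{v_{-}}$ with $D(Q_{v,\theta})=D(Q_{v_{+},\theta})$. The task therefore reduces to establishing $Q_{v_{+},\theta}=Q_{0,\theta}+q_{v_{+}}$ with $D(Q_{v_{+},\theta})=D(Q_{0,\theta})\cap\ell^{2}(X,v_{+}m)$.

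For this reduction I would work with the bounded truncations $v_{n}:=v_{+}\wedge n$. Since $v_{n}$ is bounded, $q_{v_{n}}$ is a bounded nonnegative form on $\ell^{2}(X,m)$, so adding it to $Q_{0,\theta}$ produces an equivalent form norm; hence $Q_{v_{n},\theta}=Q_{0,\theta}+q_{v_{n}}$ is closed with domain $D(Q_{0,\theta})$, and this is indeed the closure of $Q^{(c)}_{v_{n},\theta}$. As $n\to\infty$ the forms $Q_{v_{n},\theta}$ increase monotonically on this common domain and converge pointwise to $Q_{0,\theta}+q_{v_{+}}$. By the monotone convergence theorem for closed quadratic forms (cf.\ \cite{RS}), the limit defines a closed form $Q_{\infty}$ with
\[
D(Q_{\infty})=\{f\in D(Q_{0,\theta})\,:\,\textstyle\sup_{n} Q_{v_{n},\theta}(f)<\infty\}=D(Q_{0,\theta})\cap\ell^{2}(X,v_{+}m),
\]
on which $Q_{\infty}=Q_{0,\theta}+q_{v_{+}}$; moreover $\e^{-tL_{v_{n},\theta}}\to\e^{-tL_{\infty}}$ strongly on $\ell^{2}(X,m)$.

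The key point, and what I expect to be the main obstacle, is to identify $Q_{\infty}$ with $Q_{v_{+},\theta}$. Here Theorem~\ref{main} enters decisively: for each bounded $v_{n}\ge0$ it provides
\[
\e^{-tL_{v_{n},\theta}}f(x)=\mathbb{E}_{x}\left[1_{\{t<\tau\}}\e^{\mathrm{i}\int_{0}^{t}\theta(\Id\mathbb{X}_{s})}\e^{-\int_{0}^{t}v_{n}(\mathbb{X}_{s})\dd s}f(\mathbb{X}_{t})\right],
\]
and the same formula applies to $v_{+}$. On $\{t<\tau\}$ the path has only finitely many jumps in $[0,t]$, so $\int_{0}^{t}v_{n}(\mathbb{X}_{s})\dd s\nearrow\int_{0}^{t}v_{+}(\mathbb{X}_{s})\dd s$ pointwise; the integrand is dominated in modulus by $1_{\{t<\tau\}}|f(\mathbb{X}_{t})|$, which is $\mathbb{P}_{x}$-integrable since $\mathbb{E}_{x}[1_{\{t<\tau\}}|f|(\mathbb{X}_{t})]=\e^{-tL}|f|(x)<\infty$ by the Feynman-Kac formula. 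Dominated convergence then yields $\e^{-tL_{v_{n},\theta}}f(x)\to\e^{-tL_{v_{+},\theta}}f(x)$ pointwise. Since $\ell^{2}$-convergence on a discrete space automatically implies pointwise convergence at every vertex, comparison with the strong limit $\e^{-tL_{\infty}}f$ forces $L_{\infty}=L_{v_{+},\theta}$, and hence $Q_{\infty}=Q_{v_{+},\theta}$.

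It remains only to replace $\ell^{2}(X,v_{+}m)$ by $\ell^{2}(X,|v|m)$ in the domain description. For $f\in D(Q_{v_{+},\theta})=D(Q_{0,\theta})\cap\ell^{2}(X,v_{+}m)$, the $\mathcal{B}_{\theta}$-inequality extends from $\C_{c}(X)$ to all of $D(Q_{v_{+},\theta})$ by a Fatou argument as in the proof of Proposition~\ref{p:closable}, giving
\[
q_{v_{-}}(f)\le (1-\eps)\ts Q_{v_{+},\theta}(f)+C\|f\|^{2}<\infty,
\]
so $f\in\ell^{2}(X,v_{-}m)$ as well. Consequently $D(Q_{0,\theta})\cap\ell^{2}(X,v_{+}m)=D(Q_{0,\theta})\cap\ell^{2}(X,|v|m)$, and together with $Q_{v,\theta}=Q_{v_{+},\theta}-q_{v_{-}}=Q_{0,\theta}+q_{v}$ this yields the claimed identity.
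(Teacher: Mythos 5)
Your proposal is correct and follows essentially the same route as the paper's own proof: reduce to $v_+\ge 0$ via Proposition~\ref{p:closable} (using $\mathcal{B}_0\subseteq\mathcal{B}_\theta$), truncate to $v_n=v_+\wedge n$, invoke monotone convergence of closed quadratic forms to identify the limit form $Q_{0,\theta}+q_{v_+}$ and obtain strong convergence of semigroups, and then pin down the limit semigroup as $\e^{-tL_{v_+,\theta}}$ by applying the Feynman--Kac--It\^{o} formula together with dominated convergence. Your added care in making explicit the step from $\ell^{2}(X,v_+m)$ to $\ell^{2}(X,|v|m)$ in the domain description, which the paper leaves implicit in its appeal to Proposition~\ref{p:closable}, is a correct and welcome clarification but does not change the underlying argument.
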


\begin{proof}
By Proposition \ref{p:closable} it suffices to show the statement for $v\ge0$. We prove a Feynman-Kac-It\^{o} formula for $Q_{0,\theta}+q_{v}$  in order to conclude the assertion using Theorem~\ref{main}. To this end, denote the operator arising from the form sum $Q_{0,\theta}+q_{v}$ by $L_{0,\theta}+v$.\\
With $v_n:=v\wedge n\in \ell^{\infty}(X)$ we have $
0\leq v_n\nearrow v$  as $n\to\infty$
and it follows from monotone convergence for integrals that $Q_{v_n,\theta} =Q_{0,\theta}+q_{v_n} \nearrow Q_{0,\theta}+q_v$ as $n\to\infty$ in the sense of monotone convergence of quadratic forms. By \cite[Theorem~S.14, p.373]{RS} we have that $Q_{0,\theta}+q_{v}$ is closed and
\begin{align*}
\lim_{n\to\infty}\mathrm{e}^{-t(L_{0,\theta}+v_{n})}f(x) =\mathrm{e}^{-t(L_{0,\theta}+v)}f(x)
\end{align*}
for all $f\in \ell^{2}(X,m)$ and $x\in X$.
Thus, in view of $Q_{v_{n},\theta}=Q_{0,\theta}+q_{v_{n}}$ and $L_{v_{n},\theta}=L_{0,\theta}+v_{n}$ (as $v_{n}$ is bounded) it only remains to prove
\begin{align*}
\lim_{n\to\infty} \mathbb{E}_x\left[1_{\{t<\tau\}} \e^{\ISS_t(v_{n},\theta|\mathbb{X})} f(\mathbb{X}_t)\right] =\mathbb{E}_x\left[1_{\{t<\tau\}} \e^{\ISS_t(v,\theta|\mathbb{X})} f(\mathbb{X}_t)\right].
\end{align*}
which,  however, follows by Lebesgue's dominated convergence.
\end{proof}

We finish with a corollary of the theorem above. For $v$ bounded below, recall the form $Q^{\max}_{v,\theta}:\ell^{2}(X,m)\to(-\infty,\infty]$  in the proof of Proposition~\ref{p:closable}, which is given by
\begin{align*}
    Q^{\max}_{v,\theta}(f)=\frac{1}{2}\sum_{x,y \in X} b(x,y) |f(x)-\e^{\mathrm{i} \theta(x,y) }f(y)|^{2} + \sum_{x\in X} v(x)|f(x)|^{2} m(x).
\end{align*}
It is bounded below and closed.

\begin{coro}If $Q_{0,\theta}=Q^{\max}_{0,\theta}$, then one has $Q_{v,\theta}=Q^{\max}_{v,\theta}$ for all $v$ bounded below.
\end{coro}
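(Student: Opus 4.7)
The plan is to reduce the corollary directly to Theorem~\ref{t:formdomain_theta} and then match form domains. Since Theorem~\ref{t:formdomain_theta} already expresses $Q_{v,\theta}$ as a form sum $Q_{0,\theta}+q_v$, the hypothesis $Q_{0,\theta}=Q^{\max}_{0,\theta}$ should immediately turn this into a form sum $Q^{\max}_{0,\theta}+q_v$, and it only remains to recognize the latter as $Q^{\max}_{v,\theta}$.

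My first step is to check that every $v$ that is bounded below lies in $\mathcal{B}_0$, so that Theorem~\ref{t:formdomain_theta} is applicable. If $v\ge -c$ with $c\ge 0$, then $v_-\le c$ is bounded, hence $q_{v_-}(f)\le c\|f\|^2$ for all $f\in C_c(X)$. Since $Q_{v_+,0}^{(c)}(f)\ge 0$ (it is a sum of non-negative terms), this yields $q_{v_-}(f)\le (1-\eps)Q_{v_+,0}^{(c)}(f)+c\|f\|^2$ for any $\eps\in(0,1)$. Thus $v\in\mathcal{B}_0$. Applying Theorem~\ref{t:formdomain_theta} gives the identity $Q_{v,\theta}=Q_{0,\theta}+q_v$ with
\begin{equation*}
D(Q_{v,\theta})=D(Q_{0,\theta})\cap \ell^2(X,|v|m).
\end{equation*}

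Next I will use the hypothesis $Q_{0,\theta}=Q^{\max}_{0,\theta}$ to rewrite this as $Q_{v,\theta}=Q^{\max}_{0,\theta}+q_v$ with domain $D(Q^{\max}_{0,\theta})\cap\ell^2(X,|v|m)$, and then identify this form sum with $Q^{\max}_{v,\theta}$. For the identification on the level of values, one simply computes
\begin{equation*}
Q^{\max}_{0,\theta}(f)+q_v(f)=Q^{\max}_{v,\theta}(f)
\end{equation*}
by linearity, valid whenever both sides are well defined. For the identification on the level of domains, the point is that $v$ being bounded below forces $q_{v_-}(f)\le c\|f\|^2<\infty$ for every $f\in\ell^2(X,m)$. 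Hence $f\in\ell^2(X,|v|m)$ is equivalent to $f\in\ell^2(X,v_+m)$, and $Q^{\max}_{v,\theta}(f)<\infty$ is equivalent to the two conditions $Q^{\max}_{0,\theta}(f)<\infty$ and $f\in\ell^2(X,v_+m)$. Therefore $D(Q^{\max}_{v,\theta})=D(Q^{\max}_{0,\theta})\cap\ell^2(X,|v|m)$, which matches the domain above.

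Combining these steps gives $Q_{v,\theta}=Q^{\max}_{v,\theta}$ as closed forms, which is the corollary. The whole argument is really just bookkeeping once Theorem~\ref{t:formdomain_theta} is invoked; the only subtle point — and the one I would be most careful about — is the domain identification, where boundedness of $v_-$ is what lets the finiteness of $q_{v_-}$ be absorbed for free and guarantees that the formal sum $Q^{\max}_{0,\theta}+q_v$ really does describe $Q^{\max}_{v,\theta}$ (including its domain of finiteness).
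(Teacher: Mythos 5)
Your argument is correct and follows exactly the paper's route: apply Theorem~\ref{t:formdomain_theta} to write $Q_{v,\theta}=Q_{0,\theta}+q_v$, substitute the hypothesis $Q_{0,\theta}=Q^{\max}_{0,\theta}$, and observe that $Q^{\max}_{0,\theta}+q_v=Q^{\max}_{v,\theta}$. You merely spell out two details the paper leaves implicit, namely that a bounded-below $v$ lies in $\mathcal{B}_0$ and that the domains $D(Q^{\max}_{0,\theta})\cap\ell^2(X,|v|m)$ and $D(Q^{\max}_{v,\theta})$ coincide because the bounded $v_-$ contributes only a finite, absorbable term.
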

\begin{proof} As, obviously, $Q^{\max}_{v,\theta}=Q_{0,\theta}^{\max}+q_{v}$, the statement follows by the theorem above.
\end{proof}



\appendix
\section{Pitt's theorem}

\begin{thm}\label{pits} Let $p_1\in (1,\infty)$, $p_2\in [1,\infty]$ and let
$A,B:{\Lp}^{p_1}(M,\mu)\to{\Lp}^{p_2}(M,\mu)$ be bounded operators
such that $A$ is positivity preserving and such that one has $|Bf|\leq A|f|$ for any $f\in{\Lp}^{p_1}(M,\mu)$. Then $B$ is a compact operator, if $A$ is a compact operator.
\end{thm}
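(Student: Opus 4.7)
The plan is to reduce the statement to the Dodds-Fremlin compactness theorem from the theory of Banach lattices. The pointwise domination $|Bf|\le A|f|$ together with the positivity of $A$ suggests that $B$ admits a modulus $|B|$ which is itself dominated by $A$, after which a general positive-operator-domination argument applies.

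\medskip

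First, I would construct the modulus $|B|$. For nonnegative $f \in L^{p_1}(M,\mu)$, set
\[
|B|(f):=\sup\bigl\{|Bg|\,:\, g\in L^{p_1}(M,\mu),\ |g|\le f\bigr\},
\]
where the supremum is taken in the Banach lattice $L^{p_2}(M,\mu)$. Since $|Bg|\le A|g|\le Af$ whenever $|g|\le f$ (the last step using positivity of $A$), this supremum is dominated by $Af\in L^{p_2}(M,\mu)$, so $|B|(f)$ is well defined and $0\le |B|(f)\le Af$. One then checks that $|B|$ extends to a positive linear operator $L^{p_1}(M,\mu)\to L^{p_2}(M,\mu)$ satisfying $0\le |B|\le A$ in the cone of positive operators.

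\medskip

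Second, I would invoke the Dodds-Fremlin compactness theorem: in a Banach lattice with order continuous norm, a positive operator dominated by a compact positive operator is itself compact. Since the norm on $L^{p_2}(M,\mu)$ is order continuous for $p_2<\infty$, this yields that $|B|$ is compact. For $p_2=\infty$ a separate argument is needed, which can be carried out by using that compactness of $A$ into $L^\infty$ forces $A(U)$ (with $U$ the unit ball of $L^{p_1}$) to be uniformly approximable by finitely many nonnegative functions, a property which transfers to $B(U)$ via the pointwise domination $|Bf|\le A|f|$.

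\medskip

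Third, I would deduce compactness of $B$ from compactness of $|B|$. Writing $B=B_1+\mathrm{i}B_2$ with real-linear parts $B_j$, each $B_j$ satisfies $|B_j f|\le |B||f|$ and admits positive and negative parts $B_j^{\pm}$ in the lattice of regular operators, with $0\le B_j^\pm \le |B|$. A second application of Dodds-Fremlin shows each $B_j^{\pm}$ is compact, and hence $B=B_1^+-B_1^-+\mathrm{i}(B_2^+-B_2^-)$ is a finite linear combination of compact operators, hence compact.

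\medskip

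The main obstacle is the Dodds-Fremlin step. A more elementary route via the Kolmogorov-Riesz characterization of relative compactness in $L^{p_2}$ is tempting — pointwise domination transfers boundedness, tightness, and uniform $p_2$-integrability from $\{A|f|\}$ to $\{|Bf|\}$ — but converting these scalar estimates into actual norm relative compactness of the possibly oscillating family $\{Bf\}$ requires the additional lattice-theoretic structure supplied by Dodds-Fremlin, and this is where the real technical work lies.
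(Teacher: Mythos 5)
The paper offers no proof of Theorem~\ref{pits} at all --- it simply cites Pitt's original article --- so any complete argument is necessarily a different route. For $p_2<\infty$ your Dodds--Fremlin reduction is correct and is essentially the standard modern proof: $L^{p_2}(M,\mu)$ is Dedekind complete, so the Riesz--Kantorovich formula defines the modulus $|B|$ with $0\le |B|\le A$; the hypothesis $p_1>1$ is used exactly to make $(L^{p_1})^{*}=L^{p_1'}$ order continuous, which together with order continuity of the norm of $L^{p_2}$ for $p_2<\infty$ places you in the scope of Dodds--Fremlin; and the passage from $|B|$ back to $B$ via real/imaginary and positive/negative parts is fine (note that each piece already satisfies $0\le B_j^{\pm}\le A$, so one application of the theorem per piece suffices and the intermediate compactness of $|B|$ is not really needed).

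The genuine gap is the case $p_2=\infty$, and it cannot be closed by the transfer you sketch: as written, with $p_2=\infty$ allowed, the statement is in fact \emph{false} for non-atomic $(M,\mu)$. Take $M=[0,1]$ with Lebesgue measure, $p_1=2$, $p_2=\infty$, $Af:=\bigl(\int_0^1 f\,d\mu\bigr)1_{[0,1]}$ (rank one, positive, compact) and $Bf:=\sum_{n}1_{I_n}\int_0^1 r_n f\,d\mu$, where $(I_n)$ is a partition of $[0,1]$ into sets of positive measure and $(r_n)$ are the Rademacher functions. Then $|Bf|\le \bigl(\int_0^1|f|\,d\mu\bigr)1_{[0,1]}=A|f|$ pointwise, yet $Br_m=1_{I_m}$, so the images of the bounded sequence $(r_m)$ are $1$-separated in $L^\infty$ and $B$ is not compact. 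The failure is located precisely at ``a property which transfers to $B(U)$ via the pointwise domination'': knowing $|Bf|\le A|f|\approx h$ only confines $Bf$ to an order interval of $L^\infty$, and order intervals in $L^\infty$ are not norm compact --- domination controls magnitude, not oscillation. The counterexample exploits non-atomicity; over the countable discrete measure spaces actually used in this paper, and in the only instance where the theorem is applied ($p_1=p_2=2$), your argument is complete. But to prove the theorem as stated you must either restrict to $p_2<\infty$ (as in Pitt's and Dodds--Fremlin's original results) or add a hypothesis such as atomicity of $\mu$.
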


This highly nontrivial fact on operator domination goes back to L.D. Pitt \cite{pitt}.

\section{An abstract Golden-Thompson inequality}

\begin{thm}\label{gti} Let $q\rq{}$, $q\rq{}\rq{}$ be densely defined, closed, symmetric and semi-bounded sesquilinear forms on a common Hilbert space. Assume that $q:=q\rq{}+q\rq{}\rq{}$ is densely defined and denote the semigroups corresponding to $q\rq{},q\rq{}\rq{}$ and $q$ by $(T_t\rq{})_{t\geq 0}$, $(T_t\rq{}\rq{})_{t\geq 0}$ and $(T_t)_{t\geq 0}$, respectively. Then one has
\[
\mathrm{tr} [T_t]\leq \mathrm{tr}\left[T_{t/2}\rq{}  \ T_{t}\rq{}\rq{} \  T_{t/2}\rq{}\right]\>\>\text{ for all $t\geq 0$.}
\]
\end{thm}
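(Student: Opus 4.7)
The proof plan follows the classical Lenard--Symanzik strategy, combining the Trotter product formula with the Lieb--Thirring--Araki trace inequality. First I would invoke the symmetric Trotter product formula of Kato for quadratic forms, which, under the standing hypotheses (closedness and semi-boundedness of $q'$ and $q''$, and denseness of the domain of $q = q' + q''$), guarantees that $q$ is itself closed and semi-bounded and that
\[
T_t = s\text{-}\lim_{n\to\infty} S_n^{(t)}, \qquad S_n^{(t)} := \bigl(T'_{t/(2n)}\, T''_{t/n}\, T'_{t/(2n)}\bigr)^n,
\]
with strong convergence in the underlying Hilbert space.

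Next, since $q'$ and $q''$ are semi-bounded below, the bounded self-adjoint operators $T'_s$ and $T''_s$ are positive for every $s \ge 0$, hence so is $S_n^{(t)}$. I would then apply the Araki--Lieb--Thirring trace inequality in the form: for positive self-adjoint operators $A, B$ and any integer $n \ge 1$,
\[
\mathrm{tr}\bigl[(A^{1/2} B A^{1/2})^n\bigr] \le \mathrm{tr}\bigl[A^{n/2} B^n A^{n/2}\bigr].
\]
Choosing $A = T'_{t/n}$, $B = T''_{t/n}$, the semigroup property gives $A^{1/2} = T'_{t/(2n)}$, $A^{n/2} = T'_{t/2}$, and $B^n = T''_t$, so the inequality reads
\[
\mathrm{tr}\bigl[S_n^{(t)}\bigr] \le \mathrm{tr}\bigl[T'_{t/2}\, T''_t\, T'_{t/2}\bigr]
\]
uniformly in $n$, with both sides interpreted in $[0,\infty]$.

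Finally, I would conclude by passing to the limit. Since $S_n^{(t)} \to T_t$ strongly and all operators involved are non-negative, the lower semicontinuity of the trace under strong convergence of positive operators (a Fatou-type statement, obtained from spectral truncation or by testing against positive finite-rank projections) yields
\[
\mathrm{tr}[T_t] \le \liminf_{n\to\infty} \mathrm{tr}\bigl[S_n^{(t)}\bigr] \le \mathrm{tr}\bigl[T'_{t/2}\, T''_t\, T'_{t/2}\bigr],
\]
which is the claim.

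The only delicate points are the applicability of Kato's symmetric Trotter formula to the form sum (which is precisely why the theorem demands closedness, semi-boundedness, and densely defined form sum) and the use of the Araki--Lieb--Thirring inequality together with trace semicontinuity in possibly infinite-dimensional settings where the semigroups need not be trace-class a priori; however, the inequality is stable under finite-rank truncation, so no restriction on the trace being finite is needed, and the statement holds in $[0,\infty]$.
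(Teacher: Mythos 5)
Your argument is correct and complete; the three ingredients — Kato's symmetric Trotter product formula for form sums of closed semi-bounded forms with densely defined intersection of form domains, the Araki--Lieb--Thirring inequality $\mathrm{tr}\bigl[(A^{1/2}BA^{1/2})^n\bigr]\le\mathrm{tr}\bigl[A^{n/2}B^nA^{n/2}\bigr]$ for positive self-adjoint operators with both sides understood in $[0,\infty]$, and lower semicontinuity of the trace under strong convergence of nonnegative operators — fit together exactly as you indicate, and the hypotheses of the theorem are precisely what the Trotter step needs.

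Worth noting: the paper itself does not give a proof at all; it simply cites Corollary~3.9 of Hiai's log-majorization paper \cite{fumio}. So there is no ``paper's approach'' to compare against in detail. What you have written is the classical Trotter-plus-ALT argument that lies behind results of that type, and it has the virtue of being self-contained. Hiai's route is via log-majorization and anti-symmetric tensor powers, which yields more refined operator inequalities (e.g., on all unitarily invariant norms, not just the trace) but is correspondingly heavier; for the specific trace inequality used here, your direct argument is adequate and perhaps more transparent. One small wording quibble: it is not the Trotter formula that ``guarantees that $q$ is closed''; closedness of the form sum of two closed lower semi-bounded forms on the intersection of form domains is a standard fact (Kato's book, Theorem~VI.1.31), and density of that intersection is an explicit hypothesis here. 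The Trotter theorem then takes this as input rather than producing it.
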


This result follows from Corollary 3.9 in \cite{fumio}. 
Note that the above fact is even nontrivial for finite dimensional operators.

\section{Mosco-convergence}

Let $(H_k,\as{\cdot,\cdot}_k)$, $k \in\IN$, and $(H, \as{\cdot,\cdot})$ be Hilbert spaces with corresponding norms $\|\cdot\|_k$ and $\|\cdot\|$ respectively. Suppose $(q_k,D(q_k))$ and $(q,D(q))$ are densely defined closed symmetric sesquilinear forms on $H_k$ and $H$, respectively, which are bounded below by a constant $C> -\infty$ which is \emph{uniform} in $k$. Each $q_k$ is understood to be defined on the whole space $H_k$ by the convention $q_k(u) = \infty$ whenever $u \in H_k \setminus D(q_k)$. Furthermore, we suppose that there exist bounded operators $\iota_k:H_k \to H$ such that $\pi_k := \iota_k ^*$ is a left inverse of $\iota_k$, that is
$$\as{\pi_kf, f_k}_k = \as{f,\iota_kf_k}\text{ and } \pi_k \iota_kf_k = f_k, \text{ for all} f\in H, f_k\in H_k.$$
Moreover, we assume that $\pi_k$ satisfies
$$\sup_{k\in\IN}\|\pi_k\|< \infty \text{ and } \lim_{k\to \infty} \|\pi_kf\|_k = \|f\|.$$

\begin{definition} \label{mosco}
In the above situation, we say that $q_k$ is \emph{Mosco convergent} to $q$ as $k\to\infty$ in the generalized sense, if the following conditions hold:
\begin{itemize}
\item[(a)] If $u_k \in H_k$, $u \in H$ and $\iota_ku_k \to u$ weakly in $H$, then
$$\liminf_{k \to \infty}\left(q_k(u_k) + C\|u_k\|_k^2\right) \geq q(u) + C\|u\|^2.$$
\item[(b)] For every $u \in H$ there exist $u_k \in H_k$, such that $\iota_k u_k \to u$ in $H$ and
$$\limsup_{k \to \infty}\left(q_k(u_k) + C \|u_k\|_k^2\right) \leq q(u) + C \|u\|^2.$$
\end{itemize}
\end{definition}

Let $(T^{(k)}_t)_{t \geq 0}$ denote the semigroup associated with $q_k$ and let $(T_t)_{t\geq 0}$ be the semigroup of $q$.  For positive forms the following theorem which characterizes Mosco convergence can be found in the appendix of \cite{CKK}. However, this result immediately extends to the situation of forms with uniform lower bound.

\begin{thm} \label{mosco.char}
In the above situation, the following assertions are equivalent:
\begin{itemize}
\item[(a)] $q_k$ is Mosco convergent to $q$ as $k\to\infty$ in the generalized sense.
\item[(b)] One has $\iota_k T^{(k)}_t\pi_k \to T_t$ as $t\to\infty$ strongly and uniformly on any finite time interval.
\end{itemize}
\end{thm}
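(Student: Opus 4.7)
The plan is to reduce the statement to its counterpart for nonnegative forms, which is the version established in \cite{CKK}, since the only novelty here is allowing a uniform lower bound rather than nonnegativity. The reduction is carried out by a global shift of all forms by a constant multiple of the norm squared, and the key point is that both (a) and (b) are invariant under this shift in a compatible way.

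First I would choose a constant $C'\in\mathbb{R}$ (depending only on the uniform lower bound $C$) large enough so that
$$\tilde q_k := q_k + C'\|\cdot\|_k^2, \qquad \tilde q := q + C'\|\cdot\|^2$$
are all nonnegative; they remain densely defined, closed and symmetric, with unchanged domains. A direct inspection of Definition~\ref{mosco} then shows that $q_k$ Mosco-converges to $q$ in the generalized sense if and only if $\tilde q_k$ does. Indeed, the $C\|u_k\|_k^{2}$ and $C\|u\|^{2}$ terms present in conditions (i) and (ii) just combine additively with the additional $C'\|u_k\|_k^{2}$ and $C'\|u\|^{2}$ produced by the shift, and the convergence $\|\iota_k u_k\|\to\|u\|$ of norms that is needed to match both sides is controlled by the standing hypothesis $\lim_k \|\pi_k f\|_k = \|f\|$ together with $\pi_k\iota_k = \mathrm{id}_{H_k}$ (one uses the "recovery sequence" in (ii), whose norms do converge, to fix the liminf/limsup balance).

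On the operator side, if $L_k$ and $L$ denote the self-adjoint operators associated with $q_k$ and $q$, then by spectral calculus the semigroups of the shifted forms satisfy
$$\tilde T^{(k)}_t = \mathrm{e}^{-tC'}\,T^{(k)}_t, \qquad \tilde T_t = \mathrm{e}^{-tC'}\, T_t.$$
Since the factor $\mathrm{e}^{-tC'}$ is a continuous, hence bounded, scalar on any compact subinterval of $[0,\infty)$, the convergence $\iota_k T^{(k)}_t\pi_k \to T_t$ strongly and uniformly on finite time intervals is equivalent to the analogous convergence $\iota_k \tilde T^{(k)}_t\pi_k \to \tilde T_t$.

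With both conditions transported to the shifted (nonnegative) setting, the equivalence (a)$\Leftrightarrow$(b) now follows directly from the result for nonnegative forms in the appendix of \cite{CKK}, applied to $\tilde q_k$, $\tilde q$ with the same operators $\iota_k$, $\pi_k$ (which satisfy exactly the structural assumptions imposed there). The only part that requires any care is the book-keeping in the previous paragraph: one must confirm that the $C$-dependent terms in Definition~\ref{mosco}(i)--(ii) really do encode Mosco convergence of the nonnegative forms $\tilde q_k$ to $\tilde q$; this is essentially an algebraic verification and constitutes the main (and only) technical obstacle in the argument.
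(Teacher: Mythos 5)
Your proposal takes essentially the same route as the paper: shift to nonnegative forms, observe that the corresponding semigroups differ only by the scalar factor $\mathrm{e}^{-tC'}$, and invoke the CKK characterization. The one place where you diverge—and create a gap—is in introducing a \emph{possibly different} shift constant $C'$ and then claiming that generalized Mosco convergence with the $C$-terms from Definition~\ref{mosco} is equivalent to ordinary (nonnegative) Mosco convergence of the $C'$-shifted forms. You gesture at this equivalence by saying the extra $C'\|\cdot\|^2$ terms ``combine additively'' and that norm convergence is controlled by $\|\pi_k f\|_k\to\|f\|$, but this does not go through for condition (i) of Definition~\ref{mosco}: there one only has \emph{weak} convergence $\iota_k u_k\rightharpoonup u$, which yields only $\liminf_k\|u_k\|_k^2\ge\|u\|^2$ (since $\|u_k\|_k=\|\iota_k u_k\|$ by $\pi_k=\iota_k^*$, $\pi_k\iota_k=\mathrm{id}$), not convergence of norms. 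Consequently one can trade a smaller shift for a larger one in the $\liminf$ inequality, but not conversely, so the claimed two-way equivalence for arbitrary $C'$ is not justified by the argument you sketch.

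The fix is simply to take $C'=C$, the very constant already appearing in Definition~\ref{mosco}. With that choice the ``generalized'' Mosco conditions (i)--(ii) for $q_k\to q$ are \emph{verbatim} the usual Mosco conditions for the nonnegative forms $\tilde q_k=q_k+C\|\cdot\|_k^2\to\tilde q=q+C\|\cdot\|^2$; no equivalence argument is needed at all. Combined with $\tilde T^{(k)}_t=\mathrm{e}^{-tC}T^{(k)}_t$, $\tilde T_t=\mathrm{e}^{-tC}T_t$ and the boundedness of $\mathrm{e}^{-tC}$ on finite intervals, CKK gives the result directly — which is precisely the paper's (very short) proof.
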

\begin{proof}
Consider the positive quadratic forms $\tilde{q}_k = q_k + C\|\cdot\|^2$ and $\tilde{q} = q + C\|\cdot\|^2$. Obviously their semigroups $\ow{T}_t^{(k)}$ and $\ow{T}_t$ satisfy
$$\ow{T}_t^{(k)} = \mathrm{e}^{-tC}T_t^{(k)} \text { and } \ow{T}_t  = \mathrm{e}^{-tC}T_t.$$
Combining this and the characterization of Mosco convergence for positive forms (Theorem~8.3 of \cite{CKK}) we can deduce the result.
\end{proof}

\textbf{Acknowledgement}. The authors want to express their gratitude to Daniel Lenz for very inspiring and important hints. We are also grateful to Ognjen Milatovic for a very fruitful correspondence and to Hendrik Vogt, Sebastian Haeseler for most helpful discussions. BG has been financially supported by the Sonderforschungsbereich 647: Raum-Zeit-Materie. Moreover, MK acknowledges the financial support of the German Science Foundation (DFG), Golda Meir Fellowship, the Israel Science Foundation (grant no. 1105/10 and  no. 225/10) and BSF grant no. 2010214. Furthermore, MS acknowledges the financial support of the European Science Foundation (ESF) within the project ``Random Geometry of Large Interacting Systems and Statistical Physics''.

\end{document}